%% file: main.tex
\documentclass[11pt]{article}
\usepackage[margin=1in]{geometry}
\usepackage{amsmath,amsthm, amsfonts, amssymb}
\usepackage{algorithm}
\usepackage[noend]{algpseudocode}
\usepackage{xcolor} 
\usepackage{parskip}
\usepackage{setspace}
\usepackage{xspace}
\usepackage{hyperref}
\usepackage{enumitem}
\usepackage{graphicx}
\usepackage{adjustbox}
\usepackage{caption}
\usepackage{mathtools}
\usepackage{thm-restate}
\usepackage{amsmath}
\usepackage{tikz}
\usepackage{float}
\usetikzlibrary{decorations.pathreplacing} 
\usetikzlibrary{arrows.meta,calc,positioning}
\usetikzlibrary{shapes.geometric}

\newtheorem{lemma}{Lemma}
\newtheorem{claim}{Claim}
\newtheorem{theorem}{Theorem}
\newtheorem{definition}{Definition}
\newtheorem{corollary}{Corollary}

\input{macros}

\title{Multi-Level Aggregation via Dual Fitting:\\ An $O(D)$-Competitive Algorithm}

\author{Sara Ahmadian\thanks{Google. \texttt{sahmadian@google.com}} \and Shuchi Chawla\thanks{University of Texas, Austin. \texttt{shuchi@cs.utexas.edu}} \and Ravi Kumar\thanks{Google. \texttt{ravi.k53@gmail.com}} \and Manish Purohit\thanks{Google. \texttt{purohitmanish89@gmail.com}} \and Shirley Zhang\thanks{Harvard University. \texttt{szhang2@g.harvard.edu }}}
\date{}

\begin{document}
\maketitle

\begin{abstract}
    We present a new online algorithm for the  well-known Multi-Level Aggregation Problem (\mlap) with arbitrary delay functions, achieving a $2D$-competitive ratio, where $D$ is the depth of the underlying tree. This result improves the current best-known competitive ratio of $O(D^2)$ and asymptotically matches the $D$-competitive bound previously known only for the deadline variant, thereby closing the asymptotic gap between the two settings. 

    Our key technical contribution is a novel dual fitting framework that provides a unified analysis for both settings; in particular, it also establishes a $D$-competitive ratio for \mlap with deadlines. Our analysis is built upon two new ideas: a \textit{hindsight dual construction}, which resolves the infeasibility issues in traditional online primal-dual methods, and a \textit{time-dependent dual packing} that maintains feasibility over dynamic request sets. 
\end{abstract}

\thispagestyle{empty}
\setcounter{tocdepth}{2}

\begin{footnotesize}
\tableofcontents
\end{footnotesize}

\thispagestyle{empty}
\setcounter{page}{0}

\newpage

\input{intro}

\input{prelim}

\input{deadlines_v2}

\input{delays_v4}
\input{deferred}
\input{extra_deadlines}
\input{tight_example}
\input{related}

\bibliographystyle{alphaurl}
\bibliography{bib}

\end{document}

%% file: macros.tex
\newenvironment{proofof}[1]{{\em Proof of #1.  }}{\hfill\qed}

\newcommand{\jrp}{\textsc{JRP}\xspace}
\newcommand{\mlap}{\textsc{MLAP}\xspace}
\newcommand{\mlapd}{\textsc{MLAP-deadline}\xspace}
\newcommand{\mlapdelay}{\textsc{MLAP-delay}\xspace}
\newcommand{\lp}{\textsc{LP}\xspace}

\renewcommand{\min}{\operatorname*{min}}
\renewcommand{\max}{\operatorname*{max}}

\newcommand{\UponCritical}{\textsc{UponCritical}}
\newcommand{\Explore}{\textsc{Explore}}
\newcommand{\Add}{\textsc{Add}}
\newcommand{\Invest}{\textsc{Invest}}
\newcommand{\Charge}{\textsc{Charge}}
\newcommand{\Simulate}{\textsc{Simulate}}
\newcommand{\LetState}[2]{\State \textbf{let} #1 $\leftarrow$ #2} 

\newcommand{\costalg}{\mathrm{ALG(\mc{T}, \mc{J})}}
\newcommand{\costopt}{\mathrm{OPT(\mc{T}, \mc{J})}}
\newcommand{\delayalg}{\mathrm{Delay(ALG)}}
\newcommand{\treecostalg}{\mathrm{TreeCost(ALG)}}
\newcommand{\prev}{\mathrm{prev}}
\newcommand{\nxt}{\mathrm{next}}
\newcommand{\cst}{\mathrm{cost}}

\newcommand{\children}{\operatorname{Children}}
\newcommand{\delay}{\operatorname{delay}}
\newcommand{\served}{\operatorname{served}}

\DeclareMathOperator*{\argmin}{arg\,min}

\newcommand{\del}{\operatorname{delay}}

\newcommand{\cost}{\ensuremath{w}}
\newcommand{\tree}{\mathcal{T}}

\newcommand{\sattime}{\mathrm{sattime}}
\newcommand{\reachtime}{\mathrm{reachtime}}
\newcommand{\reachset}{\mathrm{reachset}}
\newcommand{\freezetime}{\mathrm{freezetime}}
\newcommand{\patrons}{\mathrm{patrons}}
\newcommand{\paid}{\mathrm{paid}}
\newcommand{\mc}[1]{\mathcal{#1}}

\definecolor{reqblue}{RGB}{55,126,184}
\definecolor{reqgreen}{RGB}{77,175,74}
\definecolor{reqpurple}{RGB}{152,78,163}
\definecolor{reqorange}{RGB}{255,127,0}
\definecolor{reqred}{RGB}{228,26,28}
\definecolor{tickgold}{RGB}{253,191,111}

\tikzset{
  timeline/.style={line width=0.7pt, -{Stealth[length=3pt]}},
  reqwin/.style={rounded corners=1.6pt, line width=0.7pt},
  service/.style={line width=1pt},
  lbl/.style={anchor=east, font=\scriptsize},
  axlbl/.style={font=\scriptsize},
  treeedge/.style={line width=0.8pt},
  treenode/.style={circle, minimum size=6pt, inner sep=0pt, draw=black, line width=0.6pt},
}



%% file: intro.tex
\section{Introduction}

In modern interconnected systems, actions rarely occur in isolation but instead cascade through hierarchical structures---for instance, packages moving through multiple levels of local and regional hubs, or sensor data that is aggregated at intermediate nodes before transmission. A fundamental challenge in these settings is to balance the competing objectives of latency and cost. Providing immediate service minimizes delay but incurs high operational costs, whereas postponing requests cuts costs through aggregation (i.e., batching)  but at the expense of increased latency. Finding the right trade-off between these objectives is an important problem in domains such as logistics, supply-chain management, content delivery networks, and sensor data collection systems. 

The \emph{multi-level aggregation problem}  (\mlap) captures this cost-latency trade-off. In \mlap, requests arrive over time at the nodes of a rooted tree representing the hierarchical structure of the system. A service operation transmits a subtree including the root, with cost proportional to the weight of its edges or nodes.  All pending requests within the transmitted subtree are simultaneously and immediately served, modeling the benefit of aggregation. We focus on two primary variants of this online problem, defined by how the latency requirements are modeled: 

\begin{itemize}[nosep]
    \item \textit{MLAP with Deadlines}  (\mlapd): Each request has a hard deadline and the objective is to minimize the total service cost.
    \item \textit{MLAP with Delays} (\mlapdelay): Requests incur delay costs while they remain pending and the objective is to minimize the sum of the total service cost and the total delay cost.
\end{itemize}

Both variants of \mlap~were introduced by \cite{bienkowski2016online}, who formalized the online model and presented the first algorithms with competitive ratios exponential in the tree depth~$D$. This work initiated an active line of research on both \mlapdelay~and \mlapd. For \mlapd, \cite{buchbinder2017depth} gave an $O(D)$-competitive algorithm on 3-decreasing\footnote{A tree is \emph{3-decreasing} if the cost of any node is at most one third that of its parent.} trees, and showed that the result extends to general trees with only a constant-factor loss. This was later strengthened by \cite{mcmahan2021d}, who achieved a $D$-competitive algorithm on arbitrary trees. For the general delay setting, \cite{azar2019general} introduced a unified framework for online problems with delays, yielding an $O(D^2)$-competitive algorithm for \mlapdelay.\footnote{ \cite{talmonthesis} independently obtained an $O(D^3)$-competitive algorithm for \mlap, extending the approach of \cite{buchbinder2017depth}.}

While these developments represent substantial progress, a fundamental gap has remained. Since the $O(D^2)$ bound established by \cite{azar2019general}, the competitive ratio for the general delay setting has seen no asymptotic improvement. In this paper, we close the gap between this quadratic bound and the linear bound of the deadline setting by presenting a $2D$-competitive algorithm for \mlapdelay~with general delay functions.

\subsection{Our Results}

We present new algorithms and analyses with improved competitive ratios for \mlapd and \mlapdelay.  Although the latter problem generalizes the former, we present both results separately for ease of exposition; both results share many key technical insights.

\begin{theorem}
\label{thm:deadlines-main}
    There is a $D$-competitive deterministic online algorithm for \mlapd on arbitrary weighted trees of depth $D$.
\end{theorem}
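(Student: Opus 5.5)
The plan is a dual-fitting analysis. We run a greedy deadline-triggered algorithm in the spirit of \cite{buchbinder2017depth,mcmahan2021d} and bound its cost by $D$ times the value of a feasible dual to the natural covering LP relaxation. The LP has a variable $x_{v,t}$ for each node $v$ and time $t$, meaning $v$ is included in the service at time $t$, with objective $\sum_{v,t} w(v)\, x_{v,t}$. For each request $r$ at node $v_r$ with window $[a_r,d_r]$, and each node $u$ on the root path of $v_r$, there is a constraint $\sum_{t\in[a_r,d_r]} x_{u,t}\ge 1$. The dual has variables $y_{r,u}\ge 0$, with one packing constraint per pair $(u,t)$: $\sum_{r:\,u\in \mathrm{path}(v_r),\, t\in[a_r,d_r]} y_{r,u}\le w(u)$.

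\textbf{The algorithm.} When a pending request reaches its deadline, we serve the root. We then recursively extend the service tree, using the pending requests' deadlines as a priority order. A child subtree is added greedily while the urgency it brings justifies its weight. Concretely, for each node $u$ added we purchase a budget of $w(u)$ and spend it on nodes below $u$, in order of the earliest deadline in their subtree.

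\textbf{Hindsight dual construction.} We do not raise duals online. Instead, after the whole run we assign each node purchase $w(u)$ at time $t$ to a single request $r(u,t)$, namely the request whose deadline made $u$ necessary: the earliest-deadline pending request in $u$'s subtree that is served later than $t$, or the critical request itself. We set $y_{r,u}=w(u)$ for that pair, scaled down by $D$. The dual objective is then $\mathrm{ALG}/D$, so by weak duality $\mathrm{ALG}\le D\cdot\mathrm{OPT}$.

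\textbf{Time-dependent feasibility.} This is the main obstacle: verifying that for every $(u,t)$, the requests $r$ with live window at $t$ and charged at $u$ carry total dual at most $w(u)$. The plan is to show two facts.
\begin{itemize}[nosep]
\item The requests charged at $u$ by services of $u$ itself have pairwise disjoint intervals. The key structural claim is that between two consecutive services of $u$, any request charged at the earlier one has its deadline before the later one. Hence at most one such request is alive at any time $t$, contributing at most $w(u)$.
\item Charges made to pair $(r,u)$ by ancestor services come through the budget-spending rule. Each ancestor level contributes at most $w(u)$ to live requests at $t$. There are at most $D$ levels, which is where the factor $D$ is absorbed.
\end{itemize}

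If the disjointness claim fails for the naive charging, my first fallback is to charge to the request minimizing deadline among those still pending after the service. I would then argue via the invariant that after servicing $u$, no pending request in $u$'s subtree has a deadline earlier than the next time $u$ is served. This invariant is exactly the property that greedy earliest-deadline extension is designed to guarantee.
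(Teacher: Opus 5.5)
\textbf{There is a genuine gap: the hindsight dual you propose is infeasible.} Your per-node LP is a valid relaxation, so weak duality itself is fine. The failure is structural. You pay for each purchase of $u$ with dual mass sitting at $u$ itself. But $u$ can be re-bought arbitrarily often, pushed in by its ancestors' budgets, while the requests in $\tree_u$ all stay alive, and your $(u,t)$ constraint caps that mass at $w(u)$.

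The paper's tight instance shows this. On the unit-weight path $v_1,\dots,v_D$, every service buys the whole path, so each $v_i$ with $i\ge 2$ is bought $K$ times. Yet every request in $\tree_{v_i}$ has window $[k-1+\epsilon,K]$. At $t=K-\epsilon/2$ the constraint for $(v_i,t)$ reads $\sum_r y_{r,v_i}\le 1$, while your construction puts $K/D$ there, whichever requests of $\tree_{v_i}$ you pick. Consequently:
\begin{itemize}
\item Your key structural claim is false: the request charged for buying $v_i$ at time $k$ has deadline $K$, not before $k+1$.
\item Level $i-1$ alone contributes $K$ at $(v_i,t)$, not $w(v_i)$.
\item Your fallback invariant $d_r\ge$ (next service of $u$) points the wrong way for disjointness. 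Here nothing in $\tree_{v_i}$ is even pending after a service.
\end{itemize}

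Even where ``deadline before the next service'' does hold, it does not give disjointness, because the later charged request may have arrived long before. Take root weight $1$, children $r_k$ of weight $M\ge 2$, and a request $j_k$ at $r_k$ with window $[0,k]$. The root is critical at every integer time $k$. Your charges $y_{j_k,r}=1/D$ are all alive at time $1/2$, giving $\sum_k y_{j_k,r}=K/D>1$ at $(r,1/2)$.

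\textbf{What is missing} are three ingredients from the paper.
\begin{itemize}
\item \emph{The factor $D$ belongs to the primal side.} Keep partial investments in counters $c_u$ across services; your description leaves this unspecified. The budget cascade then gives $\mathrm{ALG}\le D\sum_i\sum_{v\in S_i}\hat w(v,t_i)$ (Lemma~\ref{lem:alg-bound-deadlines}). So only the unpaid critical-path costs must be recovered by the dual, with no $1/D$ scaling.
\item \emph{Overlapping critical nodes recurse.} Consider a critical node whose critical request was already pending at its previous service, as in your second example. It must have spent its full budget at that service. Its charge is therefore passed recursively to the nodes it invested in then, proportionally to the investment. Earliest-deadline priority yields the chain property: the request finally charged has deadline at most the time of the top-level tree (Claim~\ref{claim:ddl_v_less_than_ddl_u}).
\item \emph{Feasibility uses a time-dependent assignment.} Each request's dual is assigned to its ancestors via $\delta_{utj}$, which migrates over time; in the paper's example $j_3$ sits on $v$, then $u$, then $r$. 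This is combined with the bound that all calls $\Charge(v,T,\cdot)$ sum to at most $w(v)$.
\end{itemize}
A static per-node packing $y_{r,u}$ cannot mimic this migration directly, which is why the paper fits the configuration LP instead.
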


We note that unlike some previous work on \mlapd~(e.g., \cite{buchbinder2017depth}) our algorithm applies directly to arbitrary trees rather than requiring a hierarchically-decreasing cost structure. Our analysis is tight and there exist instances where the cost incurred by the algorithm is $D$ times the optimal cost. We provide such an instance in Section \ref{sec:tight_example}. 

Since the offline variants of \mlap\ and the \emph{joint replenishment problem} (the special case of \mlap\ with $D=2$) admit primal-dual algorithms~\cite{levi2004primal}, it is natural to hope for an online algorithm based on a primal-dual analysis using the framework of \cite{buchbinder2009design}. However, to quote \cite{bienkowski2020online}, ``\emph{\ldots despite substantial effort of many researchers, the online multi-level setting remains wide open. This is perhaps partly due to impossibility of direct emulation of the cleanup phase in primal-dual offline algorithms in the online setting, as this cleanup is performed in the reverse time order.}''
The primary contribution of our work lies in the analysis via a novel dual fitting framework.

\begin{theorem}
\label{thm:delays-main}
There is a $2D$-competitive deterministic online algorithm for \mlapdelay on arbitrary weighted trees of depth $D$.
\end{theorem}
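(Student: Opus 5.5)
The plan has three parts: design an online algorithm, write a covering LP relaxation for \mlapdelay, and build a feasible dual in hindsight whose value is at least a $1/(2D)$ fraction of the algorithm's cost.

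\textbf{LP and dual.} Discretize time. Let $x_{v,t}$ indicate that node $v$ is in the service tree at time $t$. For each request $q$ at node $u$ with arrival time $a_q$, and each candidate service time $t \ge a_q$, let $y_{q,t}$ indicate that $q$ is served at $t$; serving at $t$ costs the delay $d_q(t)$ accumulated since arrival. The covering constraints say that for every ancestor $v$ of $u$ (including $u$) and every possible waiting horizon, $q$ must either be served by the end of that horizon (paying the corresponding delay) or there must be $x_{v,t'}$ mass at some $t'$ in the horizon. The dual is a packing LP. It has variables $\alpha_{q,v,\cdot}$, one per request, ancestor and horizon. The packing constraints bound, for every node $v$ and every time $t$, the total $\alpha$ charged to $v$ over requests still pending at $t$ by $w(v)$; other constraints bound the $\alpha$ mass of each request by its delay increments. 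This is the time-dependent packing alluded to in the abstract.

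\textbf{Algorithm.} The algorithm is a budget/investment scheme in the spirit of \cite{azar2019general}, refined as follows. Each pending request accumulates delay, and that delay is invested in the edges on its root path. Investment goes bottom-up, saturating the nearest unsaturated ancestor first. When the total delay of the pending requests in the root's subtree reaches $w(\text{root})$, a service is triggered. The service tree is built top-down, like \Explore: starting at the root, it recursively adds the child subtrees whose accumulated critical delay is largest, until their weight matches the budget. Each service then costs at most the sum of the investments it consumes, plus the delay paid so far. The target is the accounting
\[
\mathrm{ALG} \le 2\,(\text{total invested delay}).
\]
One copy of the total invested delay pays for the actual delay, and one copy pays for the tree cost.

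\textbf{Hindsight dual.} After the input ends, set the dual variables using the known service times of the algorithm. For each request $q$, take its delay up to its actual service time. Split that delay among the nodes on its path according to the investment record. Assign each part to the dual variable for the horizon ending at the \emph{next} time the algorithm serves that node. Because this uses future information, it avoids the reverse-time cleanup obstacle of primal-dual methods. The dual objective is then exactly the total investment.

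It remains to prove approximate feasibility: for each node $v$ and time $t$, the charges to $v$ from requests pending at $t$ should sum to at most $D\cdot w(v)$. The intended argument is that each node receives at most $w(v)$ of investment per service epoch, and a pending request's charge can involve at most one epoch at each of the $\le D$ levels above it. Scaling the dual down by $D$ then gives feasibility. Weak duality yields $\mathrm{ALG}\le 2\cdot\text{invest} = 2D\cdot(\text{scaled dual}) \le 2D\cdot \mathrm{OPT}$.

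\textbf{Main obstacle.} The hard step is the feasibility argument. Requests arrive and leave the pending set continuously, and the service tree chosen by the online rule may skip a node $v$ even though $v$ holds investment. That leaves leftover investment at $v$ that could be double-counted across epochs. My first attempt would be to use a potential argument: cap each node's per-epoch charge at $w(v)$ by freezing investment at $v$ once it saturates, redirecting further delay upward, and show that freezing loses at most the factor $D$ summed over levels.
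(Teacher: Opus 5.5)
The step that breaks is the approximate feasibility of your hindsight dual. Your count (each request touches at most one epoch of each ancestor) bounds a single request. The packing constraint at $(v,t)$, however, sums over \emph{all} requests that arrived by $t$, and distinct such requests can invest in $v$ in arbitrarily many later epochs of $v$. For example, let $r$ and its child $v$ have weight $1$, and give $v$ leaf children $\ell_1,\dots,\ell_K$ of weight $M\gg K$. Put a request $q_i$ at $\ell_i$ that arrives at time $0$, has zero delay until time $i$, and steep delay afterwards. Around time $i$ your algorithm sends $\{r,v,\ell_i\}$, paid by $q_i$'s delay; the budgets of $r$ and $v$ cannot buy any other leaf. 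Your dual then places $w(v)$ on $v$, with horizon about $i$, for a request that arrived at $0$. So the constraint at $(v,0)$ receives $K\,w(v)$, and no scaling by $D$ fixes this.

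This is precisely the \emph{overlapping} case, and handling it is the core of the paper. A node whose paying requests were already pending at its previous transmission never charges them. Instead it passes its charge, in proportion to its investments, to the nodes it invested in at that previous transmission (here $\ell_i$, whose weight absorbs it), recursively. Only disjoint nodes charge their patrons. Feasibility then comes from a time-dependent assignment $\delta_{v\tau j}$ that hands a request's remaining dual mass from deeper to shallower ancestors along the chain of \Charge{} calls. That chain covers $[a_j,\freezetime(j)]$ only because of the Freezetime Dominance Lemma. That lemma in turn rests on a state-independent priority order: saturation and reach times are measured against the full cost $w(\tree_v^{Q'})$, and simulations run leaf-to-root. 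Your ``largest accumulated delay'' rule gives no such ordering. Moreover, in your node-indexed LP, mass placed on $v$ occupies $v$ from the request's arrival onward, so it cannot express a moving assignment at all. It is a more restrictive dual than the configuration LP, not the time-dependent packing of the abstract.

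Separately, your primal accounting already spends the factor you intend to lose in the dual. The bound $\mathrm{ALG}\le 2\cdot(\text{invested delay})$ holds only if every transmitted node is paid for by delay, i.e.\ for a lazy algorithm. Lazy algorithms are not competitive: take a root of weight $1$ and $n$ leaves of weight $\epsilon$, with the same staggered delays as above; a lazy algorithm pays $\Omega(n)$ while $\mathrm{OPT}\le 1+n\epsilon$. Your \Explore-like step instead buys nodes with their parents' budgets, and over $D$ levels the sent tree can cost $D$ times its delay-paid part (compare the tight instance in Section~\ref{sec:tight_example}). So the factor $D$ is lost in the primal, and combining it with a dual that is only $D$-approximately feasible would give $O(D^2)$, the previously known bound. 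To reach $2D$ you need the paper's split. First, $\delayalg\le\treecostalg$, because services fire as soon as some set becomes critical. Second, $\treecostalg\le D\sum_i\sum_{v\in S_i}\hat w(v,t_i)$, by the preflow argument. Third, an \emph{exactly} feasible hindsight dual whose value is $\sum_i\sum_{v\in S_i}\hat w(v,t_i)$.
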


 Theorem \ref{thm:delays-main} provides the first $O(D)$-competitive algorithm for \mlapdelay, strictly improving upon the previous best bound of $O(D^2)$~\cite{azar2019general}. This shows that the general delay functions do not make the problem substantially harder than the simpler deadlines setting. We emphasize that our algorithm and analysis do not make any assumptions about the tree structure or delay functions other than non-negativity of the instantaneous delays.

\subsection{Overview of Techniques}

\paragraph{Recursive buying algorithm.}
At a high level, the algorithm (for both \mlapd\ and \mlapdelay) initiates a transmission whenever some requests become {\em critical}---i.e., when their deadline arrives or their accumulated delay suffices to pay for the transmission. The main algorithmic question is whether to include additional {\em non-critical} requests in the same service, and if so, which ones. Following a standard approach in the MLAP literature, each vertex $v$ added to the service may spend up to its own cost in ``buying'' further vertices to include in the transmitted tree. Intuitively, we want to avoid transmitting $v$ again soon, so the algorithm attempts to incorporate pending requests within $v$'s subtree that are not too costly to add. This buying procedure is applied recursively, potentially investing up to $D$ times the cost of the critical subtree. The goal of our analysis is then to show that the total cost of all the critical subtrees sent by the algorithm is no more than the hindsight optimum. 

\paragraph{Algorithmic contribution: Consistent priority order over requests.}
While recursive buying has appeared in prior \mlapdelay\ and \mlapd\ algorithms, our approach differs in how it establishes priorities among pending requests. In the deadlines setting, sorting by deadlines yields a globally consistent order. For general delay functions, however, priorities must depend on how quickly a request's delay grows relative to its cost of inclusion: more expensive requests should be allowed to accumulate greater delay before being served. This dependence makes priorities dynamic---changing as the algorithm's state evolves and subtrees are incrementally bought---so that, for example, one request may be prioritized over another in one iteration and vice versa in the next, with neither fully purchased.

Our key innovation is to impose a \emph{consistent} priority order. When a vertex in the current transmission considers which descendants to buy, it assumes none of them has yet been bought, thereby overestimating their inclusion cost (see definitions in Section~\ref{sec:delays}.) 
This simple rule induces a stable ordering---consistent across recursive calls and persistent across successive services---resolving the ambiguity present in previous formulations.

\paragraph{Analytic contribution: Dual fitting.}
Departing from prior analyses, we bound the cost of our algorithm via a primal-dual approach. The dual assigns “charges’’ to requests over time, subject to two constraints: (1) a request’s accumulated charge cannot grow faster than its delay function, and (2) for any subtree, the total future charge of all currently active requests within it cannot exceed the subtree’s cost. The latter ``dual configuration'' constraint is especially challenging, since the algorithm may serve different subsets of these active requests at different times, effectively paying for the same subtree multiple times, whereas the dual can charge it only once. Satisfying this constraint while extracting the cost of all of the critical subtrees from the dual forms the technical core of our paper.

Two novel ideas underlie our analysis. First, we construct a feasible dual solution {\em in hindsight}. Although the recursive buying procedure can be viewed as generating dual variables online, this ``on-the-fly'' dual is not feasible---it cannot anticipate which future dual charges will load the same configuration constraint.
In contrast, the hindsight dual determines how to allocate costs to dual variables based on the full set of configuration constraints each request ultimately participates in. For every vertex served by the algorithm, we first identify the set of requests that caused this vertex to be served, and then carefully select a subset of these requests to act as {\em patrons} to which we can charge the cost of this vertex.

Our second idea establishes dual feasibility by constructing a {\em time-dependent} packing of the dual variables into the tree costs, in order to satisfy the configuration constraints. Specifically, we assign each dual charge on a request to one of its ancestors so that no vertex receives an assignment exceeding its cost. This assignment evolves over time---a crucial feature for maintaining feasibility as new requests appear and are served. Figure~\ref{fig:deadlines_example}  in Section~\ref{sec:deadlines} and the accompanying example illustrate how this time-varying assignment operates in the special case of \mlapd. 

Together, these ideas yield a conceptually simple algorithm accompanied by a clean, tight analysis.

\paragraph{Organization.} Section~\ref{sec:prelim} sets up the formal definitions for the problems we study, as well as their primal and dual formulations. We describe our algorithm and analysis for \mlapd\ in Section~\ref{sec:deadlines}, illustrating the key ideas and intuition for constructing a hindsight dual and a time-dependent dual packing. We then describe its extension to general delays in Section~\ref{sec:delays} illustrating the priority ordering over requests employed by the algorithm, and the significantly more complicated dual packing. Full proofs are provided in Sections~\ref{sec:deferred-proofs-delays} and \ref{sec:extra_deadlines} respectively. Section~\ref{sec:tight_example} provides a tight example for our algorithm and analysis. Further discussion of related work is deferred to Section~\ref{sec:related}.

%% file: prelim.tex
\section{Preliminaries}
\label{sec:prelim}

\subsection{Multi-Level Aggregation Problem (MLAP)}

The \emph{Multi-Level Aggregation Problem} (MLAP) is defined by a tree $\mc{T}$ rooted at node $r$ with weights $w(v)$ on vertices $v\in \tree$, and a set  $\mc{J}$ of job requests. Each request $j \in \mc{J}$ arrives at a node $v_j$ at time $a_j$. The goal of the algorithm is to service all of the requests. A {\em service} consists of ``transmitting'' a subtree $T$ of $\tree$ rooted at $r$ at some time $t$. We say that a request $j$ is \emph{served at time $t$} if $t$ is the first time after $a_j$ such that the subtree transmitted by the algorithm at time $t$ contains $v_j$; let $\served(j)$ denote the time $t$.  A solution to the \mlap problem is a sequence $(T_1, t_1), (T_2, t_2), \dots$ of services, with subtree $T_i$ transmitted at time $t_i$. 

We use $w(T)=\sum_{v\in T} w(v)$ to denote the total weight of a (sub)tree $T$. For vertices $u, v\in \tree$, $P(u,v)$ denotes the unique path between $u$ and $v$; we use $P(v)$ as shorthand for $P(v,r)$, and $P(j)$ as shorthand for $P(v_j, r)$.

\paragraph{MLAP with deadlines.} In the {\em deadline} version of the problem (\mlapd), each request $j \in\mc{J}$ arriving at time $a_j$ is endowed with a deadline $d_j \geq a_j$ and must be served before its deadline. A solution is feasible if for every request $j$, $\served(j) \in [a_j, d_j]$. The cost of a feasible solution $(T_1, t_1), (T_2, t_2), \dots $ is given by
\[ \costalg \coloneq \treecostalg = \sum_{i} w(T_i).\]
Observe that an algorithm can save cost by serving multiple requests together.

\paragraph{MLAP with delays.} In the more general {\em delay} version (\mlapdelay), each request $j \in\mc{J}$ is endowed with a delay function $\delay(j,t)$, which returns the amount of instantaneous delay incurred by the request $j$ at time $t$.  For any request $j$, we assume that $\delay(j, t) \ge 0$ at all times $t$ and $\delay(j, t) = 0$ for $t\le a_j$. 
A request $j$ can be served at any time in $[a_j, \infty)$ but incurs delay until it is served.  Our objective is to minimize the total tree cost plus the total delay incurred:
\[
   \costalg \coloneq \treecostalg + \delayalg = \sum_{i} w(T_i) + \sum_{j \in \mc{J}} \int_{a_j}^{\served(j)} \delay(j, t) dt.
\]

\paragraph{Discrete vs continuous time.} In the deadline setting, we assume without loss of generality that all events happen at integral times, and the algorithm only needs to consider times in the set $\cup_{j\in\mc{J}}\{a_j, d_j\}$. In the delay setting, we assume that time is continuous. It is easy to reduce an instance with discrete delay functions to the continuous setting without any loss in performance \cite{bienkowski2020online}.

\paragraph{Online arrivals and competitive ratio.}
In online MLAP, the tree $\mc{T}$ is known in advance, but the requests $\mc{J}$ arrive online. When a request $j$ arrives at time $a_j$, all of its attributes (location, deadline, delay function) are revealed to the algorithm. Let $\costopt$ be the cost of the hindsight optimal solution when both $\mc{T}$ and $\mc{J}$ are known upfront. Then our goal is to minimize the competitive ratio $\max_{\mc{T}, \mc{J}} \frac{\costalg}{\costopt}$.

\paragraph{Accessing the delay functions.} For convenience, we overload the delay notation in several ways. For a request $j$ and time interval $I$,  $\del(j,I)$ denotes the delay accumulated by request $j$ in time interval $I$:
\[\del(j,I) \coloneq \int_{I \cap [a_j, \infty)} \delay(j,t) dt.\]
For a set $Q \subseteq \mc{J}$ of requests and interval $I$, $\del(Q,I)$ denotes the total delay accumulated by all $j\in Q$ over $I$: $\del(Q, I) = \sum_{j\in Q} \del(j,I)$. Finally, for any time $t$, we use $[t]$ to indicate the time interval $[0, t]$ and thus $\delay(j, [t])$ denotes the delay accumulated by job $j$ until time $t$.
Our algorithm will assume access to an oracle that can answer two kinds of queries: (1) given a request $j$ and time $t$, it returns $\del(j, [a_j,t])$; and (2), given a request $j$, a start time $t_1$, and a target $\gamma$, it returns an end time $t_2$ such that $\del(j, [t_1, t_2])=\gamma$.

\paragraph{Other notation.} For a vertex $v$, we use $\tree_v$ to denote the entire subtree of $\tree$ rooted at $v$. Likewise, for a $T \subset \tree$ we use $T_v$ to denote the entire subtree of $T$ rooted at $v$. For a set $Q \subseteq \mc{J}$ of requests, we use $Q_u\subseteq Q$ to denote the subset of requests located at vertices in $\tree_u$, and $\tree_u^Q$ to denote the subtree spanned by $u$ and the locations of requests in $Q_u$.

\subsection{Primal and Dual LPs}

We analyze our algorithm through the configuration linear program (\lp) relaxation and its dual, using the framework of dual fitting. We first present the \mlapd\  formulation, which serves as the structural foundation. We then extend it to the general delay function setting by introducing variables that account for accumulated delay costs, reusing the notation and variables from the deadline case for consistency. 

Let $\mathcal{S}$ be the set of all subtrees containing the root, and let $x(S, t)$ be the fraction of tree $S$ transmitted at time $t$. The primal objective minimizes the total cost of transmitted subtrees while ensuring every request is fully served.

\begin{figure}[h]
\centering
\setlength{\tabcolsep}{6pt}
\renewcommand{\arraystretch}{1.2}

\begin{tabular}{@{}l l @{\qquad} | l l@{}}
\multicolumn{2}{c}{\textsc{Primal LP for \mlapd}} & \multicolumn{2}{c}{\textsc{Dual LP for \mlapd}}\\[12pt]

$\min$\ &
$\displaystyle \sum_{S \in \mc{S}} \sum_{t} \cost(S)\,x(S,t)$
&
$\max$\ &
$\displaystyle \sum_{j \in \mc{J}} \alpha_j$
\\[5pt]

\text{s.t.}\ &
{\small
$\displaystyle
\sum_{t \in [a_j, d_j]} \sum_{S:\, v_j\in S} x(S,t) \ge 1,\ \forall j \in \mc{J}
$}
&
\text{s.t.}\ &
{\small
$\displaystyle
\sum_{\substack{j:\, t \in [a_j, d_j]\\ v_j\in S}} \alpha_j \le \cost(S),\ \forall S \in \mc{S},\ \forall t
$}
\\[3pt]

& $\displaystyle x(S,t) \ge 0$
&
& $\displaystyle \alpha_j \ge 0$
\end{tabular}
\end{figure}

We say that a request $j$ is \emph{active} during the interval $[a_j, d_j]$. The dual objective maximizes the total value of variables~$\alpha_j$, subject to the condition that at any time $t$ and for any rooted subtree $S$, the sum of dual variables for active requests in $S$ at time $t$ does not exceed the cost of $S$.

The delay setting extends this formulation by incorporating delay costs into the objective; we assume that the time is continuous. We introduce a new variable $y(j, t)$ to denote the fraction of request $j$ that is left ``unsent'' at time $t$. The primal objective now minimizes the sum of service costs and delay, and the primal constraints ensure consistency between the $x$ and $y$ variables.

\begin{figure}[h]
\centering
\setlength{\tabcolsep}{6pt}
\renewcommand{\arraystretch}{1.2}
\begin{small}
\begin{tabular}{@{}l l @{\qquad} | l l@{}}
\multicolumn{2}{c}{\textsc{Primal LP for \mlapdelay}} & \multicolumn{2}{c}{\textsc{Dual LP for \mlapdelay}}\\[12pt]
$\min$\ & $\displaystyle \sum_{S \in \mc{S}} \int \cost(S)\,x(S,t)\,dt$ $\displaystyle +\; \sum_{j \in \mc{J}} \int \delay(j,t)\,y(j,t)\,dt$
& $\max$\ & $\displaystyle \sum_{j \in \mc{J}} \int \alpha_{j t}\,dt$ \\[3pt]

\text{s.t.}\ &
{\small
$\displaystyle
y(j,t) + \sum_{S:\, v_j\in S} \int_{a_j}^{t} x(S,\tau)\,d\tau \ge 1,\ \forall j \in \mc{J},\ \forall t \ge a_j
$}
&
\text{s.t.}\ &
{\small
$\displaystyle
\sum_{\substack{j:\, a_j \le t,\\ v_j\in S}}
\int_{t}^{\infty} \alpha_{j \tau}\,d\tau \le \cost(S),\ \forall S \in \mc{S},\ \forall t
$}
\\[3pt]

& $\displaystyle x(S,t) \ge 0$ 
& &
$\displaystyle \alpha_{j t} \le \delay(j,t),\ \forall j \in \mc{J},\ \forall t \ge a_j$ \\[3pt]

& $\displaystyle y(j,t) \ge 0$
& &
$\displaystyle \alpha_{j t} \ge 0$
\end{tabular}
\end{small}
\end{figure}

The dual objective maximizes the total value of variables $\alpha_{jt}$ as before. The first constraint for any time $t$ and any rooted subtree $S$ is similar to the constraint in the dual of \mlapd, but now considers the {\em future} contributions (over times $\tau\ge t$) of the requests in $S$ that are active at time $t$.
The second constraint limits each dual variable~$\alpha_{jt}$ by the instantaneous delay rate of request $j$ at time $t$.

%% file: deadlines_v2.tex
\section{MLAP with Deadlines}
\label{sec:deadlines}

In this section, we present our Algorithm \ref{alg:multilevel_agg} for MLAP with deadlines.  This algorithm is similar in spirit to previous ones \cite{azar2019general, buchbinder2017depth, mcmahan2021d} that are known to obtain $O(D)$-competitive ratios but  unifies and simplifies their ideas.  Moreover, it does not require the tree $\tree$ to satisfy any special structure. We describe the algorithm in Section~\ref{sec:deadline-alg} and then introduce a novel dual fitting based analysis for it in Section~\ref{sec:deadline-duals}. All proofs are deferred to Section~\ref{sec:extra_deadlines}. 

\subsection{Algorithm}
\label{sec:deadline-alg}

We call a request $j$ {\em pending} if it has arrived but is yet to be served. We say that a request is {\em critical} (resp., critical in $\tree_v$) if it has the earliest deadline among all pending requests in $\tree$ (resp., $\tree_v$).
 
At a high level, our algorithm sends a tree whenever the deadline of a pending request is reached. Besides sending the critical request in this tree, the algorithm faces a choice in whether or not to send additional requests early (i.e., before their deadlines). Sending a request $j$ early  saves the algorithm the cost of any ancestors of $j$ that are already included in the serviced tree; but if a new request arrives again at the same node $v_j$, the added expense of including $j$ in the service goes to waste. The key idea towards striking a balance between these two possibilities is to follow a ski-rental type approach: early requests are included if their cost can be ``charged'' to vertices already included in the tree.

Specifically, whenever the algorithm includes a vertex $v$ in the tree $T$ to be served, it assigns to $v$ a budget $b_v$ equal to its cost $w(v)$. It spends this budget through an ``explore'' procedure run from every vertex $v\in T$. During this procedure, $v$ considers pending requests in its subtree in increasing order of deadline. For each such request, $v$ identifies the \emph{frontier node} on the path from $v$ to that request, i.e. the closest descendant of $v$ not yet in $T$ whose parent is already in $T$, and invests its budget toward paying off that frontier node's cost. Any frontier node that is fully paid off is added to the tree and recursively explored with its own fresh budget, and any partial payments are saved for the next iteration (via the counters $\{c_u\}$).
The process ends when there are no pending requests remaining in the subtrees of any nodes with leftover budget, at which point the tree is sent and any leftover budget is discarded. We give a detailed pseudocode in Algorithm \ref{alg:multilevel_agg}. Our main goal in this section is to prove Theorem \ref{thm:deadlines-main}. This result follows directly from Lemmas \ref{lem:alg-bound-deadlines} and \ref{lem:deadlines-opt-lb}, stated below, which provide upper and lower bounds on the cost of Algorithm \ref{alg:multilevel_agg} and the optimal solution, respectively.

\begin{algorithm}
\caption{Online Multilevel Aggregation with Deadlines}
\begin{minipage}{\linewidth}
\label{alg:multilevel_agg}
\begin{algorithmic}[1]
\State \textbf{Initialization:}
\State Initialize $b_v, c_v \leftarrow 0$ for every node $v \in \tree$ \Comment{{\footnotesize $c_v$ denotes the investment received by $v$ and $b_v$ its budget.}}
\Statex
\Function{\UponCritical()}{}\footnote{Note that multiple requests could have the same deadline and become critical at the same time $t$. We resolve ties following a fixed tie-breaking rule. The function \UponCritical() could thus potentially be triggered multiple times at $t$ until no critical requests remain.} \Comment{{\footnotesize The deadline of a critical request $j$ is reached. $P(v_j)$ is the critical path.}}
    \State $j \gets $ critical request
    \State $T \gets \emptyset$
    \For{$u \in P(v_j)$}
        \State $c_u \gets 0$
        \State \Call{\Add}{$u, T$}
    \EndFor
    \For{$u \in P(v_j)$}
        \State \Call{\Explore}{$u, T$}
    \EndFor
    \State transmit $T$
\EndFunction
\Statex
\Function{\Explore}{$v, T$}
\Comment{{\footnotesize Add $v$ and some of its descendants to the serviced tree $T$.}}
    \State $b_v \gets w(v)$
    \While{$b_v \neq 0$ \textbf{and} there remain pending requests in $\tree_v$}
        \LetState{$H$}{$\{ u \in \tree_v \ | \  u \notin T\ \&\  \mathrm{parent}(u) \in T\}$} 
        \LetState{$j'$}{pending request in $\tree_v$ with the earliest deadline}
        \LetState{$v'$}{node in $H$ on path between $r$ and $j'$} 
        \State \Call{\Invest}{$v, v'$}
        \If{$c_{v'} = w(v')$}
            \State $c_{v'} \leftarrow 0$
            \State \Call{\Add}{$v', T$}
            \State \Call{\Explore}{$v',T$}
        \EndIf
    \EndWhile 
\EndFunction
\Statex
\Function{\Add}{$v, T$}
    \State $T \gets T \cup \{v\}$
        \State Mark all pending requests on $v$ as served
\EndFunction

\Statex
\Function{\Invest}{$v, v'$}
    \LetState{$y$}{$\min(b_v, w(v') - c_{v'})$}
    \State increase $c_{v'}$ by $y$
    \State decrease $b_v$ by $y$
\EndFunction

\end{algorithmic}
\end{minipage}
\end{algorithm}

\subsection{Competitive Ratio Analysis}
\label{sec:deadline-duals}
Recall that any tree sent by the algorithm is initialized by a request becoming critical.  We call the path connecting this request to the root the \emph{critical path} of the tree. We break up our analysis into two parts. First, we relate the cost of the algorithm to the cost of the critical path in every tree scheduled for service by the algorithm. Then we charge this critical path cost to a dual constructed in hindsight.

\paragraph{Primal analysis.}
Intuitively, whenever a vertex $v$ in the critical path spends its budget towards the cost of its descendants, any descendants added to the tree further pass the same budget down to their own descendants, and so on until the budget reaches the leaves of the tree. In essence, any budget created by the critical path gets spent up to $D$ times, where $D$ is the depth of the tree. 

Formally, at any time $t$ during the execution of the algorithm, let $c_v(t)$ denote the value of the counter $c_v$ at time $t$, i.e., the total investment received by $v$ from its ancestors since the last time $v$ was sent. Let $\hat{w}(v, t) = w(v) - c_v(t) \geq 0$; we will refer to $\hat{w}(v, t)$ as the ``unpaid'' cost of $v$ at time $t$. Then the following lemma relates the algorithm's cost to the total unpaid cost of the critical paths.  We skip the proof, but instead prove a generalization (Lemma~\ref{lem:alg-bound-delays}) to MLAP with delays in Section~\ref{sec:primal-deferred}.

\begin{lemma}
\label{lem:alg-bound-deadlines}
Let $(T_1, t_1), \dots, (T_k, t_k)$ be the solution generated by Algorithm~\ref{alg:multilevel_agg}, where $t_i$ denotes the time at which tree $T_i$ is sent. 
Let $S_i \subset T_i$ be the critical path of tree $T_i$. Then, 
\[
\costalg \coloneq \treecostalg \leq D \cdot \sum_{i=1}^k \sum_{v \in S_i} \hat{w}(v, t_i).
\]
\end{lemma}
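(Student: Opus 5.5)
We prove this with a potential/budget-flow argument. Every vertex $v$ in a transmitted tree $T_i$ is added in one of two ways. Either it lies on the critical path $S_i$, or it is a frontier node $v'$ whose counter reached $c_{v'}=w(v')$ during \Explore. In the second case its full cost $w(v')$ was paid by investments from budgets of strict ancestors. Some of those investments were made in the current service and some in earlier services; the latter persist in $c_{v'}$ because counters are reset only when a vertex is sent.

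For a critical-path vertex $v\in S_i$, the counter $c_v(t_i)$ is reset to $0$ at line ``$c_u\gets 0$''. Its cost therefore splits as
\[
w(v)=\hat w(v,t_i)+c_v(t_i).
\]
Here $c_v(t_i)$ is investment previously received from ancestors and now discarded. So every unit of tree cost is either unpaid critical-path cost $\hat w(v,t_i)$, or is covered by investment from the budget of some ancestor vertex in some transmitted tree. Budgets are created only when a vertex is explored, and each explored vertex receives $b_v=w(v)$ exactly once per service. Hence the total investment made by a vertex in a given service is at most its own cost in that service.

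The plan is to assign each unit of cost a ``source'' and show each source pays for at most $D$ units. Define the \emph{level} of an investment as the depth of the investing vertex. Vertex $v$ at depth $\ell$ (root has depth $1$) has its cost covered by $\hat w$ (if critical) plus investments from vertices at depths $<\ell$. Those investing vertices' costs are in turn covered by $\hat w$ plus investments from shallower vertices, and so on.

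Unrolling, the total cost of vertices at depth exactly $\ell$, summed over all services, is at most the total unpaid critical cost at depths $\le \ell$. More precisely, let $W_\ell$ be the total cost of depth-$\ell$ vertices over all trees, and $U_\ell=\sum_i\sum_{v\in S_i,\,\mathrm{depth}(v)=\ell}\hat w(v,t_i)$. I will show by induction on $\ell$ that
\[
\sum_{\ell'\le \ell} W_{\ell'} \le \ell\cdot \sum_{\ell'\le \ell} U_{\ell'}
\]
is too lossy. Instead I use a flow argument: the cost of each vertex flows (as budget) down to at most $D-1$ levels of descendants, so each unit of $\hat w$ is counted at most $D$ times in total. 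Formally, I charge each vertex's budget $w(v)$ to the sources of $v$'s own cost proportionally, and each source's charge propagates along a chain of strictly increasing depth, of length $\le D$. Summing gives $\sum_i w(T_i)\le D\sum_i\sum_{v\in S_i}\hat w(v,t_i)$.

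The main obstacle is the temporal mismatch. Investments into $c_{v'}$ may come from budgets spent in earlier services, while $v'$ is bought only later. Worse, partial investments into counters that are later reset without $v'$ being bought, because $v'$ became critical, are wasted rather than lost from the accounting. The identity $w(v)=\hat w(v,t_i)+c_v(t_i)$ handles exactly this: reset counters on critical paths are credited as real payment.

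The flow must therefore be defined over ``investment units'' across all time, not per tree. A unit of budget spent by $v$ may sit in a counter across services before it is consumed by a purchase or by a critical reset. Either way it accounts for cost at a strictly deeper vertex, so chains of investment still have length at most $D$. The remaining piece is the bookkeeping at the end, where leftover counters never consumed only help the inequality. That is the step I would write carefully, as a per-unit fractional charging tree.
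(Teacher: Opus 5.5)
Your argument is correct and is essentially the paper's proof. The paper skips the proof of this lemma and instead proves its delay generalization, Lemma~\ref{lem:alg-bound-delays}, via a preflow on nodes $(v,T_i)$ built from your ingredients: source flow into critical nodes for $\hat w(v,t_i)$; investment edges from $(u,T_i)$ to the next tree in which the invested-in vertex is sent, so that reset critical counters count as payment and never-consumed counters are harmless leakage; and the facts that each vertex invests at most $w(v)$ per service, and only in strictly deeper vertices. The only difference is packaging. You trace branching per-unit chains and would need to argue (via your proportional attribution) that each unit of source mass reaches each depth at most once. The paper instead weights all flow into $(v,T_i)$ by the height $h(v)\le D$, which reduces the verification to the local inequality $w(v)+(h(v)-1)\,w(v)\le h(v)\,w(v)$ at every node.
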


\paragraph{Dual analysis.}
We now construct a feasible solution to the \mlapd\ dual LP, which recovers the total unpaid cost of the critical paths. By weak LP duality, the cost of an optimal solution is at least the objective of any feasible dual solution.

\begin{lemma}
\label{lem:deadlines-opt-lb}
Let $(T_1, t_1), \dots, (T_k, t_k)$ be the solution generated by Algorithm~\ref{alg:multilevel_agg}, and $S_i \subset T_i$ be the critical path of tree $T_i$. Then, 
$$\costopt \geq \sum_{i=1}^k \sum_{v \in S_i} \hat{w}(v, t_i).$$
\end{lemma}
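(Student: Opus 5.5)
We will construct, in hindsight, a feasible solution $\{\alpha_j\}$ to the dual LP for \mlapd\ whose objective equals $\sum_i \sum_{v\in S_i}\hat w(v,t_i)$. Weak duality then gives the lemma.

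\textbf{Step 1: the patrons of each critical-path vertex.} Fix a tree $T_i$ with critical request $j_i$, and let $v\in S_i$. The unpaid amount $\hat w(v,t_i)$ equals $w(v)$ minus the investments $v$ received since it was last transmitted. Each such investment was made by some ancestor while it was exploring on behalf of a specific pending request $j'$ (the earliest-deadline request in its subtree). The vertex $v$ is now being paid for by the critical request $j_i$.

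We designate, for each $v\in S_i$, a patron request $p(v,i)$ located in $\tree_v$ and pending at $t_i$. The natural first choice is $j_i$ itself. We then set
\[
\alpha_j \;=\; \sum_{(v,i):\, p(v,i)=j} \hat w(v,t_i).
\]
If $j_i$ is the patron of every vertex on $S_i$, then $\alpha_{j_i}=w'(S_i)$, the unpaid cost of the whole critical path, and every other $\alpha$ is $0$. The dual objective is then exactly the target sum, so everything reduces to feasibility.

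\textbf{Step 2: feasibility via a time-dependent packing.} Fix a time $t$ and a rooted subtree $S$. The requests that are active at $t$, lie in $S$, and have $\alpha_j>0$ are critical requests $j_i$ with $a_{j_i}\le t\le d_{j_i}=t_i$. We need
\[
\sum \alpha_{j_i} \le w(S).
\]
To show this, we assign each unit of $\alpha_{j_i}$ to a vertex of $P(j_i)\subseteq S$ so that no vertex $u$ receives more than $w(u)$ at time $t$. The unpaid part of each $u\in S_i$ is charged to $u$ itself.

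The key claim is the following: if two critical requests $j_i, j_{i'}$ with $t\le t_i<t_{i'}$ were both active at $t$ and both charge vertex $u$, then the two charges together are at most $w(u)$. The argument runs as follows.
\begin{itemize}
\item At time $t_i$, request $j_{i'}$ was pending in $\tree_u$, because it had arrived by $t\le t_i$ and its deadline $t_{i'}$ is later.
\item So when $u$ was transmitted at $t_i$, either $u$'s exploration budget was spent, or $j_{i'}$ was served.
\item The second case is impossible, since $j_{i'}$ is still pending at $t_{i'}$. Hence $u$ exhausted its full budget $w(u)$ on requests in $\tree_u$ with deadlines at most $d_{j_{i'}}$.
\item By the earliest-deadline order, that budget went into frontier nodes lying toward requests that precede $j_{i'}$.
\end{itemize}

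This is the place where the charges must be redistributed. The unpaid cost of an ancestor of $j_{i'}$ below $u$ is reduced by exactly the investments it received. Repeated charging of the same $u$ must therefore be prevented by moving a charge down to a child of $u$ on the path, where the investment landed. This is why the assignment is time-dependent: it changes as $t$ varies.

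\textbf{Step 3: the expected obstacle.} I expect the main difficulty to be making Step 2 rigorous. Two things can go wrong:
\begin{itemize}
\item Investments can be partial and carried across services through the counters $c_u$.
\item Investments made by $u$ can be wasted on a branch that is later served by a different request.
\end{itemize}
Because of this, the simple rule ``patron $=$ critical request'' may over-pack.

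My fallback is to choose patrons more carefully. For each unit of $\hat w(v,t_i)$, I would pick as patron the request whose pending status forced that unit, namely the earliest-deadline request that kept $v$'s ancestors exploring toward it. Feasibility would then be proved by induction on depth: the charges assigned to vertex $u$ at time $t$ form a chain of budget spendings of $u$ between consecutive transmissions of $u$, and at most one such chain overlaps any instant $t$.
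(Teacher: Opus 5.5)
\textbf{There is a genuine gap.} Your Step~1 dual is not feasible, and the ``key claim'' of Step~2 is false for it, so everything rests on the unspecified fallback of Step~3. Here is a small instance.
\begin{itemize}
\item Let $r$ have $w(r)=1$, with children $x,y$ of weight $0$ and $z$ of weight $1$.
\item Let $j_1,j_2,j_3$ arrive at time $0$ at $x,z,y$ respectively, with deadlines $1,2,3$.
\item At time $1$ the algorithm sends $\{r,x,z\}$: the root spends its whole budget on $z$ for $j_2$, so $y$ is not bought. At time $3$ it sends $\{r,y\}$.
\item The root never receives investment, so $\hat w(r,1)=\hat w(r,3)=1$, and your rule gives $\alpha_{j_1}=\alpha_{j_3}=1$.
\item Both requests are active at time $1$ and lie in $\{r,x,y\}$, which has weight $1$. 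The constraint is violated.
\end{itemize}
Your bullet argument correctly shows that $r$ exhausted its budget at time $1$ on earlier-deadline requests. That fact does not bound the two charges on $r$, however. It only tells you where the second charge must be sent instead (here, through $z$ to $j_2$). That redirection, and the proof that it stays feasible, is the whole content of the lemma, and Step~3 does not supply it.

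As phrased, the fallback also looks at the wrong investments. The requests that made $v$'s ancestors invest in $v$ account for the \emph{paid} part $c_v(t_i)$, which needs no recovery. The unpaid part of an overlapping vertex must be recovered through the investments that the vertex \emph{itself} made at its previous transmission.

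\textbf{What is missing} is a recursive charging rule together with two facts about it.

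The rule: if the critical request of $\tree_u$ in $T$ arrived after $u$'s previous transmission $t_{\prev}$, charge that request directly. Otherwise $u$'s budget was exhausted at $t_{\prev}$. In that case split $\gamma$ among the nodes $v$ that $u$ invested in at $t_{\prev}$, in proportion $y(u,v,T_{\prev})/w(u)$, and recurse on each $v$ at its next transmission $T^v_{\nxt}$.

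Fact~(1): the total charge routed through any pair $(v,T)$ is at most $\hat w(v,t)+c_v(t)=w(v)$. This follows by induction from the root, since ancestors pass down at most what they invested in $v$ since its last transmission. Hence assigning each routed charge to $v$ over $(t_{\prev}(v),t]$ never exceeds $w(v)$ at any instant. Your remark that ``at most one chain overlaps any instant'' gestures at this. But it needs this amount bound, because many calls can route through the same $(v,T)$.

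Fact~(2): deadlines are monotone along a chain of calls $\Charge(u_k,T_k,\cdot)\to\cdots\to\Charge(u_1,T_1,\cdot)$. That is, the critical request of $T'_v$ has deadline at most that of $T_u$; this is where your earliest-deadline observation belongs. Consequently the finally charged request $j$ satisfies $d_j\le t_k$. Since $a_j>t_{\prev,1}$ and $t_{\prev,i}\le t_{i-1}$, the intervals $(t_{\prev,i},t_i]$ along the chain then cover all of $[a_j,d_j]$. So $\alpha_j$ is packed into some ancestor of $v_j$ at every moment $j$ is active.

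Without Fact~(2), nothing guarantees that a charged request stays covered throughout its active interval, and that coverage is exactly what the configuration constraint requires.
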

Before describing the dual construction, we first give some high-level intuition. Consider a tree $T_i$ sent by the algorithm with critical request $j$. It seems reasonable to charge the entire cost of the critical component $S_i\subset T_i$ to the dual variable $\alpha_j$. However, suppose that $j$ was pending at the previous time $t_{i-1}$ when a tree $T_{i-1}$ was sent, and let $j'$ denote the critical request in $T_{i-1}$. Then, the two dual variables $\alpha_j=w(S_i)$ and $\alpha_{j'}=w(S_{i-1})$ collectively violate the constraint corresponding to $S_i\cup S_{i-1}$ at time $t_{i-1}$. In particular, $w(S_i\cup S_{i-1})$ is strictly smaller than $w(S_i)+w(S_{i-1})$ because the two critical components share the root $r$ (and possibly other nodes). Therefore, in order to maintain feasibility, request $j$'s dual cannot pay for the root's cost, $w(r)$, in $T_i$. 

To remedy this situation, we recall that the root $r$ invested its budget in some of its descendants during $\Explore(r,T_{i-1})$. Since the request $j$ was pending at time $t_{i-1}$, but did not get included in $T_{i-1}$'s service, this budget was spent entirely towards other requests. We will accordingly try to recover $r$'s cost through duals for these requests. 

We formalize this idea through a recursive procedure called $\Charge$ (see Algorithm~\ref{alg:deadlines-analysis}). For a vertex $v$ contained in tree $T$, $\Charge$ recovers (a portion of) the cost of $v$ by incrementing the dual solution. The procedure either directly charges the critical request in $\tree_v$ that caused $v$ to be included in the service, or it considers the previous time $v$ executed the $\Explore$ procedure and recursively calls $\Charge$ on all of its descendants that received investments from $v$. In either case, we show that this process increases the total objective value of the dual solution by an adequate amount. 

\paragraph{Assignment of duals towards node costs.} Recall that dual feasibility requires that for every rooted subtree $S$ and time $t$, the sum over the duals $\alpha_j$ for requests $j$ in $S$ that are active at $t$ is at most the cost of $S$. In order to argue feasibility it is convenient to assign each increment in $\alpha_j$ towards the cost of some ancestor $u$ of $v_j$. Ideally, we would like to ensure that no vertex $u$ receives more than its total weight from active requests in its subtree $\tree_u$. Since the set of active requests changes over time, however, this assignment must also evolve over time. We use $\delta_{utj}$ to denote the amount assigned by request $j$ at time $t$ to an ancestor $u$ of $v_j$. Our goal is to choose these variables so that, at every time $t \in [a_j,d_j]$, each request $j$ assigns total mass $\alpha_j$ to its ancestors, and the quantities $\delta_{utj}$ at time $t$ pack into the tree costs. The example below illustrates how this assignment is updated over time.

\paragraph{Example of the dual construction.}

\begin{figure}[h]
    \centering
    \begin{adjustbox}{width=1\textwidth}
    \input{mlap_deadlines_tikz_horizontal}
    \end{adjustbox}
    \caption{An example of the dual construction for the MLAP deadline setting. On the right, tree nodes are depicted by circles, and requests are depicted by dashed squares. All tree nodes have a cost of $0$, except for nodes $r, u, \text{ and } v$, which have a cost of $1$. On the left, each horizontal line segment represents a request $j_k$ arriving at its corresponding node, spanning from its arrival time (left endpoint) to its deadline (right endpoint). The algorithm sends four subtrees ($T_1, T_2, T_3, T_4$) at times indicated by the vertical dashed lines. Read the accompanying text for an explanation of the highlighted colored segments.} 
    \label{fig:deadlines_example}
\end{figure}

We illustrate the dual construction using the example shown in Figure~\ref{fig:deadlines_example}. In this example, the input tree $\mc{T}$ has three primary nodes $r,u,v$, each of cost $1$. In addition, $r$ has zero-cost children $r_i$, and $u$ has zero-cost children $u_i$. Requests $j_1,\ldots,j_{12}$, shown as dashed squares, arrive at their corresponding nodes in this order. On the left side of the figure, the solid black horizontal segment associated with each request runs from its arrival time to its deadline. Requests $j_6,j_9$, and $j_{12}$ arrive shortly after the algorithm sends a tree, while all other requests arrive at time $0$. The deadlines are chosen so that, whenever the algorithm sends a tree, the requests located at the children $r_i$ of $r$ have the earliest deadlines, followed by the requests located at the children $u_i$ of $u$. The request at $v$ has the next earliest deadline within the subtree rooted at $u$. Algorithm~\ref{alg:multilevel_agg} sends four trees, one at each dashed vertical line, with each tree serving three requests.

We now describe the primal recursive buying procedure and the corresponding dual variables ${\alpha_j}$. In each tree sent by Algorithm~\ref{alg:multilevel_agg}, the request located at some child $r_i$ is the critical request. In tree $T_i$, the root first explores and adds the request at $u_i$ to the tree. The vertex $u$ then explores and adds the request at $v$, after which the tree $T_i={r,r_i,u,u_i,v}$ is scheduled. The only node whose cost must be paid for by the dual in each tree is the root, whose unpaid cost is always equal to its full cost.

For the first tree $T_1$, we increase the dual of request $j_1$ to equal the root's cost. For the second tree, it is natural to increase the dual of the critical request $j_4$. However, note that request $j_4$  overlaps with $j_1$ and incrementing $\alpha_{j_4}$ would violate the dual constraint for $S = \{r, r_1, r_2\}$. Thus, rather than charging the cost of $T_2$ to $j_4$, we charge it to a node already bought by $T_1$, namely node $u$. The critical request under $u$ in $T_1$ is $j_2$, so we increase the dual of $j_2$ to pay for the root in $T_2$. We apply the same recursive charging process for $T_3$ and $T_4$, except that these trees require two and three levels of recursion, respectively. In particular, $T_3$ charges request $j_3$, while $T_4$ charges  $j_6$.

The highlighted segments in the figure indicate requests with non-zero duals, as well as the time intervals over which these dual variables participate in their respective dual constraints. The boundary color of each segment identifies the tree that initiates the corresponding charge. For example, the pink boundary on the box for $j_2$ indicates that $j_2$ is charged on behalf of tree $T_2$.

We now argue that the dual solution is feasible. Each request with a nonzero dual assigns its dual value to one of its ancestors at every time at which it is active. Request $j_1$ has only one costly ancestor, namely $r$, and assigns its dual to $r$ throughout the interval $[0,d_1]$. Request $j_2$ cannot assign its dual to $r$ during $[0,d_1]$, and therefore assigns it instead to its ancestor $u$ until time $t_1$, and to $r$ thereafter. This assignment is feasible because $d_1\le t_1$. The assignment is illustrated in Figure~\ref{fig:deadlines_example} by the fill colors of the corresponding dual boxes. Similarly, $j_3$ cannot assign its dual to either $r$ or $u$ during the interval $[0,t_1]$, and so assigns it to $v$; it then assigns it to $u$ during $[t_1,t_2]$, and to $r$ thereafter.

As Figure~\ref{fig:deadlines_example} shows, highlighted segments of the same color never overlap. This non-overlap property is the key to feasibility: at any time and for any node, the total assignment received from active requests in its subtree is never larger than the cost of that node.

We formally construct our dual in Algorithm \ref{alg:deadlines-analysis}. Lemma \ref{claim:dual_obj_inc} establishes that our dual construction achieves the desired objective value. See Section \ref{sec:extra_deadlines} for a proof. Lemma \ref{lem:deadlines-opt-lb} then follows from dual feasibility, which we argue next.

\begin{restatable}{lemma}{dualobjinc}
\label{claim:dual_obj_inc}
Each call to $\Charge(u, T, \gamma)$ increases the dual objective, $\sum_{j\in\mc{J}} \alpha_j$, by $\gamma$.
\end{restatable}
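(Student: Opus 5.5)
We have not yet seen Algorithm \ref{alg:deadlines-analysis}, so the plan below relies only on its informal description above. $\Charge(u,T,\gamma)$ either raises $\alpha_j$ by $\gamma$ for the critical request $j$ in $\tree_u$ that caused $u$ to be included in $T$, or it looks at the previous execution of $\Explore(u,\cdot)$. In the second case it recursively calls $\Charge(u',T',\gamma_{u'})$ on each descendant $u'$ that received an investment $\gamma_{u'}$ from $u$ during that execution, where $T'$ is the tree in which $u'$ was then added or invested in.

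I would prove the statement by strong induction on the recursion, ordering calls by the time of the tree they refer to, which strictly decreases, or more robustly by the number of recursive calls a call spawns. Since each recursive call refers to an earlier $\Explore$ execution and the number of trees is finite, the recursion terminates and the induction is well founded.

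\textbf{Base case.} When $\Charge$ charges a request directly, it sets $\alpha_j \gets \alpha_j+\gamma$ and touches no other dual. So the objective $\sum_j\alpha_j$ increases by exactly $\gamma$.

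\textbf{Recursive case.} The only work is to check that the amounts passed to the children sum to exactly $\gamma$. By the inductive hypothesis, each child call raises the objective by its own argument, and the increases add up, so the parent call raises the objective by $\gamma$. The child amounts are drawn from $u$'s budget in the previous $\Explore(u,\cdot)$. I would argue that this budget was fully spent, that is, $b_u$ reached $0$. The recursive branch is entered only when some request in $\tree_u$ was pending at that earlier exploration and went unserved. Explore stops only when $b_u=0$ or no pending requests remain in $\tree_u$, so the budget must have run out. The total invested by $u$ then equals $w(u)$. Since $\gamma$ is at most the cost it is meant to recover, at most $w(u)$, the algorithm presumably splits $\gamma$ among the recipients, either in proportion to their investments or greedily in investment order. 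Either way the pieces sum to $\gamma$, which is the identity needed.

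\textbf{Main obstacle.} The hard part is matching this bookkeeping to the exact rule in Algorithm \ref{alg:deadlines-analysis}. One needs the amounts passed down to sum exactly to $\gamma$, not merely to at most $\gamma$, and each recursive argument to be a legal input for its callee, i.e.\ at most what the callee can recover. If the recursion passes the investments $\min(b_u,w(v')-c_{v'})$ themselves, one must also handle investments that only partially paid a frontier node and are carried in the counters $c_{v'}$. I would first try to show that every such partial investment is still charged through the later call in which $v'$ is eventually bought, or through the call for the current tree. That way no piece of $\gamma$ is lost, and the telescoping argument above closes the induction.
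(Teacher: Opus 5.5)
Your proposal is correct and follows the paper's proof essentially verbatim. The structure is induction over the recursion tree, with a direct charge of $\gamma$ in the base case. In the recursive case you use the observation that the critical request $j$ with $a_j \le t_{\prev}$ stayed pending throughout $\Explore(u, T_{\prev})$, so $u$ invested exactly $w(u)$, and the proportional shares $y(u,v,T_{\prev})\cdot \gamma / w(u)$ that the algorithm actually uses sum to $\gamma$.

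The worries in your last paragraph do not arise for this lemma. Every investment, partial or not, is forwarded proportionally to the next tree $T^v_{\nxt}$ in which $v$ is transmitted, and only this sum identity is needed, not any bound on the callee's argument.
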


\paragraph{Dual feasibility.}
To argue dual feasibility, we introduce auxiliary variables
that track how each increment to $\alpha_j$ is distributed
among the ancestors of~$v_j$. Recall that each call to
$\Charge(u, T, \gamma)$ returns a set~$B$ of
(request, amount) pairs: each pair $(j, \gamma') \in B$
records that $\alpha_j$ was increased by $\gamma'$, either
directly in this call or through a recursive sub-call.

\begin{definition}[Assignment variables]
\label{def:delta}
All variables $\delta_{u\tau j}$ are initialized to $0$.
For each call $\Charge(u, T, \gamma)$ executed during
Algorithm~\ref{alg:deadlines-analysis}, let $t$ be the
transmission time of~$T$, let $t_{\prev}$ be the last
time~$u$ was transmitted before~$t$, and let $B$ denote
the set of pairs returned by this call. For each
$(j, \gamma') \in B$ and each $\tau \in (t_{\prev}, t]$,
we increment
\[
  \delta_{u\tau j} \;\leftarrow\; \delta_{u\tau j} + \gamma'.
\]
We define
$\delta_{u\tau} \coloneq \sum_{j} \delta_{u\tau j}$
as the total assignment to node~$u$ at time~$\tau$.
\end{definition}

Intuitively, each call $\Charge(u, T, \gamma)$ routes
dual value through node~$u$ during the interval
$(t_{\prev}, t]$. The variable $\delta_{u\tau j}$ records
how much of request~$j$'s dual is assigned to node~$u$
at time~$\tau$, and $\delta_{u\tau}$ tracks the total
capacity of~$u$ consumed at time~$\tau$.

The following lemma formalizes the relationships between these variables. 
\begin{restatable}{lemma}{deadlinesdeltaalpha}
\label{lem:deadlines-delta-alpha-3}
Algorithm~\ref{alg:deadlines-analysis} satisfies the following properties:
    \begin{enumerate}[label=(\alph*)]
        \item For every $j$ and $t\in [a_j, d_j]$, $\sum_{u\in P(v_j)}\delta_{utj} = \alpha_j$.
        \item For all $v$ and $t$, $\delta_{vt}\le w(v)$.
    \end{enumerate}
\end{restatable}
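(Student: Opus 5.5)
The plan is to prove both parts by unrolling the recursion in $\Charge$ and tracking, for each unit of dual that ends up on some $\alpha_j$, the chain of calls that routed it there. I will use the informal description of $\Charge$ given above. A call $\Charge(u,T,\gamma)$, where $T$ is sent at time $t$ and $t_{\prev}$ is the previous transmission time of $u$, does one of two things. It either raises $\alpha_j$ by $\gamma$ for the critical request $j$ in $\tree_u$ that caused $u$ to be bought, when $j$ arrived after $t_{\prev}$. Or it recurses into the descendants $u'$ that received investments from $u$ during $\Explore(u,\cdot)$ at time $t_{\prev}$, calling $\Charge(u',T_{\prev},\gamma')$ with $\gamma'$ proportional to those investments. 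By Lemma~\ref{claim:dual_obj_inc}, the returned set $B$ has total amount $\gamma$. I will make these two cases explicit from Algorithm~\ref{alg:deadlines-analysis} before starting.

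For part (a), fix $j$ and one increment $(j,\gamma')$. This increment arises from a chain of calls $\Charge(u_0,T^{(0)},\cdot),\Charge(u_1,T^{(1)},\cdot),\dots,\Charge(u_k,T^{(k)},\cdot)$, where $u_{i+1}$ is a descendant of $u_i$ and $T^{(i+1)}$ is the tree sent at $t_{\prev}(u_i)$, the previous transmission of $u_i$. The request $j$ is charged directly in the last call, and $\gamma'$ is in the returned set of every call in the chain. By Definition~\ref{def:delta}, call $i$ adds $\gamma'$ to $\delta_{u_i\tau j}$ on the interval $(t_{\prev}(u_i),t^{(i)}]$. Since $t^{(i+1)}=t_{\prev}(u_i)$, these intervals are consecutive and disjoint, so they tile $(t_{\prev}(u_k),t^{(0)}]$. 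Each $u_i$ lies on $P(v_j)$. Hence at every time in this window exactly one ancestor of $v_j$ carries this increment.

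To get (a), it remains to show $[a_j,d_j]\subseteq(t_{\prev}(u_k),t^{(0)}]$. The left end follows from the condition for charging directly: $j$ arrived after the previous transmission of $u_k$. The right end is the main obstacle. I would argue it via the earliest-deadline order of $\Explore$. In the first call, $u_0$ lies on the critical path or was bought at $t^{(0)}$, and any pending request whose deadline precedes $t^{(0)}$ would already have forced an earlier transmission. So the requests that $u_1,u_2,\dots$ were bought for at earlier times had deadlines no later than the requests pending at $t^{(0)}$. If the window still fell short, it would contradict $d_j\ge t^{(0)}$ failing. If this inclusion cannot be obtained cleanly, I would note that feasibility only needs $\alpha_j\le\sum_{u\in P(v_j)}\delta_{utj}$ for active $t$. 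I would then check whether the interval for $\delta$ can be harmlessly extended.

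For part (b), fix $v$ and $\tau$, and let $t$ be the first transmission of $v$ at or after $\tau$. Only calls of the form $\Charge(v,T,\cdot)$ with $T$ sent at $t$ touch $\delta_{v\tau}$, since the intervals $(t_{\prev},t]$ for different transmissions of $v$ are disjoint. Each such call contributes its $\gamma$ by Lemma~\ref{claim:dual_obj_inc}. The calls on $v$ for tree $T$ are of two kinds. There is at most one top-level call for unpaid cost $\hat w(v,t)$, and there are recursive calls from ancestors whose amounts equal the investments $v$ received since $t_{\prev}$. These sum to at most $c_v(t)+\hat w(v,t)=w(v)$, because $c_v$ was reset at $t_{\prev}$. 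This gives $\delta_{v\tau}\le w(v)$.
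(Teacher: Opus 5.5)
\paragraph{Main gap: the right endpoint in (a).} You reduce (a) to the inclusion $[a_j,d_j]\subseteq(t_{\prev}(u_k),t^{(0)}]$, and your left-endpoint argument is fine. The right endpoint $d_j\le t^{(0)}$ is the real content of the lemma, and your proposal does not establish it. Your observation that a pending request with deadline before $t^{(0)}$ would have forced an earlier transmission gives \emph{lower} bounds on deadlines of requests pending at $t^{(0)}$. What you need is an \emph{upper} bound on the deadline of the request charged in the innermost call. The next sentence (``So the requests that $u_1,u_2,\dots$ were bought for \ldots'') states the conclusion without a mechanism.

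The fallback of enlarging the $\delta$-intervals is not harmless either. Part (b) relies on the intervals $(t_{\prev},t]$ of successive transmissions of the same node being disjoint, so enlarging them can push $\delta_{u\tau}$ above $w(u)$.

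The missing idea is a one-step monotonicity, which you then chain along the recursion (this is the paper's Claim~\ref{claim:ddl_v_less_than_ddl_u}): if $\Charge(u,T,\cdot)$ calls $\Charge(v,T',\cdot)$, then the critical request of $T'_v$ has deadline at most that of the critical request $j$ of $T_u$. The argument runs as follows.
\begin{itemize}
\item When $u$ invested in $v$ during $\Explore(u,\cdot)$ at $t_{\prev}$, it did so on behalf of the earliest-deadline pending request $j_{\prev}$ of $\tree_u$ at that moment, and $j_{\prev}\in\tree_v$.
\item Because $u$ is overlapping, $a_j\le t_{\prev}$, and $j$ is still pending at $t$. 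So $j$ was pending during that exploration, which gives $d_{j_{\prev}}\le d_j$.
\item $j_{\prev}$ is not served before $v$'s next transmission $T'$. Hence the critical request of $T'_v$ has deadline at most $d_{j_{\prev}}$.
\end{itemize}
At the top level $u_0$ lies on the critical path, so its critical request has deadline $t^{(0)}$. In the innermost call the charged request is exactly the critical request of $T^{(k)}_{u_k}$. Chaining these inequalities gives $d_j\le t^{(0)}$.

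\paragraph{Two smaller points.} First, you have the recursion wrong. It is $\Charge(v,T^v_{\nxt},\cdot)$, where $T^v_{\nxt}$ is the next tree at or after $T_{\prev}$ containing $v$, not $\Charge(v,T_{\prev},\cdot)$. The last node $u$ invests in at $t_{\prev}$ may be only partially funded, and is then transmitted later, no earlier than $u$'s next transmission. So you only have $t_{\prev}(u_i)\le t^{(i+1)}$, not equality. The intervals still cover $(t_{\prev}(u_k),t^{(0)}]$, which is all dual feasibility uses. However, your ``consecutive and disjoint'' tiling and the ``exactly one ancestor'' conclusion are not justified.

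Second, in (b) the recursive charges to $(v,T)$ total at most $c_v(t)$ only if every ancestor pair $(u,T_u)$ receives total charge at most $w(u)$ over all calls $\Charge(u,T_u,\cdot)$. That is the statement you are proving, one level up. So you need an induction on depth from the root. Also, those charges are bounded by the investments, not equal to them.
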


We prove the lemma in Section~\ref{sec:extra_deadlines}. The key technical component in our proof (formalized as Claim~\ref{claim:ddl_v_less_than_ddl_u} below and proved formally in Section~\ref{sec:extra_deadlines}) is to argue that if a sequence of nested calls $\Charge(u_{(1)}, T_{(1)}, \cdot) \rightarrow \Charge(u_{(2)}, T_{(2)}, \cdot) \rightarrow \cdots$ culminates in the increase of a dual variable $\alpha_j$, then it holds that the deadline $d_j$ of $j$ is no more than the time at which the tree $T_{(1)}$ corresponding to the first call is scheduled. The reader can verify this property visually in Figure~\ref{fig:deadlines_example} -- for example, we charge the cost of tree $T_4$ to request $j_6$; the deadline of $j_6$ is before the time $t_4$ at which $T_4$ is scheduled. The property is a consequence of prioritizing investment in requests with earlier deadlines over those with later deadlines. 

\begin{restatable}{claim}{deadlineorderclaim}\label{claim:ddl_v_less_than_ddl_u}
Suppose $\Charge(u, T, \gamma)$ calls $\Charge(v, T', \gamma')$ in Algorithm \ref{alg:deadlines-analysis}. Then the deadline of the critical request $j'$ of $T'_{v}$ is at most the deadline of the critical request $j$ of $T_u$.
\end{restatable}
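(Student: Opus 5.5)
The plan is to make precise which requests were pending at the moment $u$ invested in $v$, and then use the fact that $\Explore$ always invests on behalf of the earliest-deadline pending request in $\tree_u$. I am assuming $\Charge$ in Algorithm~\ref{alg:deadlines-analysis} works as described in the text: it recurses only if the critical request $j$ of $T_u$ was already pending at the previous transmission time $t_{\prev}$ of $u$, and otherwise charges $\alpha_j$ directly. In the recursive case, $T'$ is the tree sent at $t_{\prev}$. The recursive calls go to the descendants $v$ of $u$ that received an investment from $u$ during $\Explore(u,T')$.

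First I would record that $j$ is pending throughout the execution of $\Explore(u,T')$. It has arrived by $t_{\prev}$ by the recursion condition. It was not served by $T'$, since otherwise it would not be pending at time $t$ when $T$ is sent, and $j$ lies in $\tree_u$. Next, fix the moment during $\Explore(u,T')$ at which $u$ calls $\Invest(u,v)$. At that moment the loop selects $j''$, the pending request in $\tree_u$ with the earliest deadline, and $v$ is the frontier node on the path from $r$ to $v_{j''}$. In particular $v_{j''}\in \tree_v$. Since $j$ is pending in $\tree_u$ at that moment, minimality gives $d_{j''}\le d_j$.

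Then I would compare $j''$ with $j'$, the critical request of $T'_v$, meaning the earliest-deadline pending request in $\tree_v$ with respect to the tree $T'$. The request $j''$ is pending at time $t_{\prev}$ and lies in $\tree_v$. So if the criticality in $T'_v$ is evaluated at the start of the service, or at any point before $j''$ is served, minimality gives $d_{j'}\le d_{j''}$. Chaining the two inequalities gives $d_{j'}\le d_j$.

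The main obstacle is matching the notion of ``critical request of $T'_v$'' to the exact moment it is evaluated in Algorithm~\ref{alg:deadlines-analysis}. The issue arises if it is evaluated after $v$ is added and some of its pending requests are marked served, or if $v$ was only partially paid in $T'$ and not added at all. I would handle this by checking each case separately. In every variant, the set of requests over which $j'$ minimizes still contains $j''$, or contains a request already invested for under $u$'s deadline-ordered exploration, and all such requests have deadline at most $d_j$. The reason is that $\Explore(u,T')$ terminates only after $u$'s budget is exhausted while $j$ remains pending, so every investment $u$ made was on behalf of a request whose deadline is at most $d_j$.
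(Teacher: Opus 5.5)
Your chain $d_{j'}\le d_{j''}\le d_j$, routed through the request $j''$ on whose behalf $u$ invested in $v$, is the same chain the paper uses. Your second inequality is fine: you get it directly from the minimality of $j''$ among pending requests in $\tree_u$, where the paper argues by contradiction. The gap is in the first inequality, and it comes from misreading the recursion in Algorithm~\ref{alg:deadlines-analysis}. There, $\Charge(u,T,\gamma)$ calls $\Charge(v,T^v_{\nxt},\cdot)$, where $T^v_{\nxt}$ is the first tree \emph{at or after} $T_{\prev}$ that contains $v$. So $T'=T^v_{\nxt}$, and this equals $T_{\prev}$ only if $v$ ends up fully paid during $T_{\prev}$. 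If $u$'s investment in $v$ was its last, partial one and $v$ is not completed in that service, then $v\notin T_{\prev}$. In that case $T'$ is a strictly later tree, built at a later time $t'$ with a different pending set in $\tree_v$.

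Your argument treats only $T'=T_{\prev}$. The case you flag, ``$v$ only partially paid in $T'$ and not added at all'', cannot occur, since $v\in T'$ by definition; the case that does occur is left open. The missing observation is that $j''$ persists. Since $v_{j''}\in\tree_v$, any transmission serving $j''$ must contain $v$. No tree sent from $T_{\prev}$ up to but excluding $T'$ contains $v$, so $j''$ is still pending when $T'$ starts and is a candidate in the minimization defining $j'$. This is exactly the first step of the paper's proof. Your fallback (``or contains a request already invested for under $u$'s deadline-ordered exploration'') cannot replace it. What $u$ did during $T_{\prev}$ says nothing about requests arriving in $\tree_v$ after $t_{\prev}$, so you need $j''$ itself to survive until $t'$.

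Separately, the other variant you propose to handle should be excluded by definition rather than checked, because the claim fails under it. That variant evaluates criticality in $T'_v$ after $v$'s own requests are marked served. Here is an instance:
\begin{itemize}
\item Tree: $r$ has children $v$ and $x$, and $v$ has child $y$, with $w(r)=w(v)=1$ and $w(y)=100$.
\item Requests: all arrive at time $0$, at $r,v,x,y$, with deadlines $1,5,10,20$.
\item Time $1$: the service is $T_{\prev}=\{r,v\}$, since $r$ spends its whole budget buying $v$ for the request at $v$.
\item Time $10$: the request at $x$ triggers $T$ with $d_j=10$, so $\Charge(r,T,\cdot)$ calls $\Charge(v,T_{\prev},\cdot)$.
\end{itemize}
After marking, the only pending request in $\tree_v$ is the one at $y$, with deadline $20>10$. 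The intended $j'$ is the earliest-deadline request pending in $\tree_v$ at the start of $T'$. This is the same set as just before $v$ is added, since nothing in $\tree_v$ can be served earlier. With that reading plus the persistence observation, your proof is complete.
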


\begin{algorithm}[!htbp]
\caption{Dual Construction for \mlapd}
\label{alg:deadlines-analysis}
\begin{algorithmic}[1]
\Statex
\For{$i = 1, 2, \ldots$}
\LetState{$T_i$}{tree transmitted at time $t_i$}
\LetState{$S_i$}{critical path in $T_i$}
\For{$u \in S_i$}
\State \Call{\Charge}{$u$, $T_i$, $\hat{w}(u, t_i)$} \Comment{{\footnotesize $u$ charges its unpaid cost in tree $T_i$ to a subset of requests}}
\label{line:top_level_recursive}
\EndFor
\EndFor
\Statex
\Function{\Charge}{$u$, $T$, $\gamma$} \Comment{{\footnotesize $u$ is transmitted in tree $T$, and charges $\gamma$ to a subset of requests}}
    \LetState{$t$}{time when $T$ was transmitted}
    \LetState{$T_{\prev}, t_{\prev}$}{last tree where $u$ was transmitted (strictly before $t$) and its time}
    \Statex
    \LetState{$j$}{critical request in $T_u$}  \label{line:critical_j}
    \If{$a_j > t_{\prev}$}
        \State set $\alpha_{j} \leftarrow \alpha_j + \gamma$ \Comment{{\footnotesize $u$ charges the entire $\gamma$ amount to $j$}}
        \State \Return{$\{(j, \gamma)\}$}
    \Else
        \LetState{$A_u$}{Set of nodes that $u$ invests in during $T_{\prev}$} 
        \Statex \Comment{{\footnotesize $u$ will recursively charge requests through proxy nodes in $A_u$}}
        \For{$v \in A_u$}
            \LetState{$y(u, v, T_{\prev})$}{Amount invested by $u$ in $v$ in tree $T_{\prev}$}
            \LetState{$T_{\nxt}^v$}{next tree where $v$ is transmitted (inclusive of $T_{\prev})$}
            \State $B_v \gets $ \Call{\Charge}{$v$, $T_{\nxt}^v$, $y(u, v, T_{\prev}) \cdot \frac{\gamma}{w(u)}$} 
            \Statex \Comment{{\footnotesize The amount charged by $v$ is proportional to the investment $u$ makes in $v$}}
        \EndFor
        \State \Return{$\bigcup_{v \in A_u} B_v$}
    \EndIf
\EndFunction
\end{algorithmic}
\end{algorithm}

To wrap up this section, we show how Lemma~\ref{lem:deadlines-delta-alpha-3} implies dual feasibility.

\begin{lemma}\label{lemma:config_feasibility}
The dual constructed by Algorithm \ref{alg:deadlines-analysis} forms a feasible dual solution.
\end{lemma}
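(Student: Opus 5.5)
We derive the dual constraints directly from Lemma~\ref{lem:deadlines-delta-alpha-3}. Nonnegativity is immediate: every increment made to some $\alpha_j$ in Algorithm~\ref{alg:deadlines-analysis} has size $\gamma \ge 0$, because the top-level amounts $\hat{w}(u,t_i)$ are nonnegative and every recursive amount $y(u,v,T_{\prev})\cdot\gamma/w(u)$ is nonnegative. Likewise, each $\delta_{u\tau j}$ starts at $0$ and only increases by these nonnegative $\gamma'$.

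Fix a rooted subtree $S\in\mc{S}$ and a time $t$. For every request $j$ with $v_j\in S$ and $t\in[a_j,d_j]$, part (a) of Lemma~\ref{lem:deadlines-delta-alpha-3} lets us write $\alpha_j=\sum_{u\in P(v_j)}\delta_{utj}$. Since $S$ contains the root and is connected, $v_j\in S$ implies $P(v_j)\subseteq S$. Therefore
\[
\sum_{\substack{j:\, t\in[a_j,d_j]\\ v_j\in S}}\alpha_j
=\sum_{\substack{j:\, t\in[a_j,d_j]\\ v_j\in S}}\ \sum_{u\in P(v_j)}\delta_{utj}
\le \sum_{u\in S}\sum_{j}\delta_{utj}
=\sum_{u\in S}\delta_{ut}.
\]
The inequality swaps the order of summation. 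It then enlarges the index set to all $u\in S$ and all $j$, which is valid because every $\delta_{utj}$ is nonnegative. Finally, part (b) of the lemma gives $\delta_{ut}\le w(u)$ for each $u$, so the total is at most $\sum_{u\in S}w(u)=w(S)$. This is exactly the dual constraint for $(S,t)$.

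The only delicate point is making sure the bookkeeping matches the definitions. Definition~\ref{def:delta} assigns $\delta_{u\tau j}$ only to nodes $u$ that lie on $P(v_j)$, so dropping the restriction to $P(v_j)$ in the display above is harmless. The summation also relies on $\delta_{u\tau j}\ge 0$, and part (a) is needed only at times when $j$ is active. With those points checked, the substantive work is already contained in Lemma~\ref{lem:deadlines-delta-alpha-3}, and the feasibility argument itself is short.
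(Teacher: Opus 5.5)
Your proof is correct and is essentially the paper's argument: you rewrite each active $\alpha_j$ via Lemma~\ref{lem:deadlines-delta-alpha-3}(a), regroup the sum by nodes $u\in S$ using $P(v_j)\subseteq S$, and bound each $\delta_{ut}$ by $w(u)$ via part (b). The nonnegativity checks are small additions the paper leaves implicit: you verify it for $\alpha_j$, and you use nonnegativity of the $\delta_{utj}$ to enlarge the index set, where the paper instead invokes that $\delta_{utj}$ vanishes off $P(v_j)$.
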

\begin{proof}
Consider any subtree $S$ and time $t$. Let $J(S, t) = \{j : t \in [a_j, d_j] \text{ and } v_j \in S\}$ be the requests that contribute to the LHS of the dual constraint. We now have the following:
\begin{align*}
    \sum_{j \in J(S, t)} \alpha_j = \sum_{j \in J(S, t)} \sum_{u \in P(v_j)} \delta_{utj} \leq \sum_{u \in S} \delta_{ut} \leq \sum_{u \in S} w(u).
\end{align*}
Note that in the first inequality, we use that $\delta_{vtj} = 0$ for all $v \notin P(v_j, r)$ and that $J(S,t) \subseteq J$. The three (in)equalities follow from Lemma~\ref{lem:deadlines-delta-alpha-3}.
\end{proof}

%% file: mlap_deadlines_tikz_horizontal.tex
\begin{tikzpicture}[
    node distance=1.2cm and 1.2cm,
    base_circle/.style={
        circle, 
        draw=black, 
        thick, 
        minimum size=6mm, 
        inner sep=0pt,
        font=\sffamily\bfseries
    },
    base_square/.style={
        rectangle, 
        draw=gray,        
        fill=none,        
        dashed,           
        rounded corners=3pt, 
        thick, 
        minimum width=5mm, 
        minimum height=5mm, 
        inner sep=0pt,
        font=\small\sffamily
    },
    root/.style={base_circle, draw=green!60!black, fill=green!60!black}, 
    mid/.style={base_circle, draw=blue, fill=blue}, 
    leaf/.style={base_circle, draw=red, fill=red},
    root_small/.style={base_circle, draw=green!60!black, font=\small\sffamily, minimum size=4mm},  
    mid_small/.style={base_circle, draw=blue, font=\small\sffamily,         minimum size=4mm}, 
    leaf_small/.style={base_circle, draw=red, font=\small\sffamily,         minimum size=4mm}, 
    small_circle/.style={base_circle, draw=gray, font=\small\sffamily}, 
    edge/.style={-, thick},
    tiny_circle/.style={base_circle, draw=gray, font=\small\sffamily,  minimum size=4mm}, 
]

    \def\xr{13}
    \def\yr{2}

    \node[root] (r) at (\xr+2.5, \yr) {\color{white}{r}};
    
    \node[small_circle] (r1) at (\xr-.3,\yr-2) {$r_1$};
    \node[small_circle] (r2) at (\xr+.4,\yr-2) {$r_2$};
    \node[small_circle] (r3) at (\xr+1.1,\yr-2) {$r_3$};
    \node[small_circle] (r4) at (\xr+1.8,\yr-2) {$r_4$};
    
    \node[base_square] (j1)  at (\xr-.3,\yr-3)  {$j_1$};
    \node[base_square] (j4)  at (\xr+.4,\yr-3)  {$j_4$};
    \node[base_square] (j7)  at (\xr+1.1,\yr-3)  {$j_7$};
    \node[base_square] (j10) at (\xr+1.8,\yr-3){$j_{10}$};

    \draw[edge] (r) -- (r1.north); 
    \draw[edge] (r) -- (r2.north);
    \draw[edge] (r) -- (r3.north); 
    \draw[edge] (r) -- (r4.north);

    \draw[draw=gray, dashed, thick] (r1.south) -- (j1.north);
    \draw[draw=gray, dashed, thick] (r2.south) -- (j4.north);
    \draw[draw=gray, dashed, thick] (r3.south) -- (j7.north);
    \draw[draw=gray, dashed, thick] (r4.south) -- (j10.north);

    \node[mid] (u) at (\xr+4.5,\yr-2) {\color{white}{u}}; 
    
    \node[small_circle] (u1) at (\xr+2.4,\yr-4) {$u_1$};
    \node[small_circle] (u2) at (\xr+3.1,\yr-4) {$u_2$};
    \node[small_circle] (u3) at (\xr+3.8,\yr-4) {$u_3$};
    \node[small_circle] (u4) at (\xr+4.5,\yr-4) {$u_4$};
    
    \node[base_square] (j2) at (\xr+2.4,\yr-5) {$j_2$};
    \node[base_square] (j5) at (\xr+3.1,\yr-5) {$j_5$};
    \node[base_square] (j8) at (\xr+3.8,\yr-5) {$j_8$};
    \node[base_square] (j11) at (\xr+4.5,\yr-5) {$j_{11}$};

    \draw[edge] (u) -- (u1.north); 
    \draw[edge] (u) -- (u2.north);
    \draw[edge] (u) -- (u3.north); 
    \draw[edge] (u) -- (u4.north);
    
    \draw[edge] (r) -- (u); 

    \draw[draw=gray, dashed, thick] (u1) -- (j2.north);
    \draw[draw=gray, dashed, thick] (u2) -- (j5.north);
    \draw[draw=gray, dashed, thick] (u3) -- (j8.north);
    \draw[draw=gray, dashed, thick] (u4) -- (j11.north);

    \node[leaf] (v) at (\xr+6.2,\yr-4) {\color{white}{v}}; 
    
    \node[base_square] (j3) at (\xr+5,\yr-5.75) {$j_3$};
    \node[base_square] (j6) at (\xr+5.8,\yr-5.75) {$j_6$};
    \node[base_square] (j9) at (\xr+6.6,\yr-5.75)  {$j_9$};
    \node[base_square] (j12) at (\xr+7.4,\yr-5.75) {$j_{12}$};

    \draw[edge] (u) -- (v); 

    \draw[draw=gray, dashed, thick] (v.south) -- (j3.north);
    \draw[draw=gray, dashed, thick] (v.south) -- (j6.north);
    \draw[draw=gray, dashed, thick] (v.south) -- (j9.north);
    \draw[draw=gray, dashed, thick] (v.south) -- (j12.north);

        \definecolor{barred}{RGB}{220,80,80}
        \definecolor{barblue}{RGB}{80,130,200}
        \definecolor{bargreen}{RGB}{80,180,100}
        \definecolor{barpurple}{RGB}{160,80,200}
        \definecolor{baryellow}{RGB}{220,200,80}
        \definecolor{barorange}{RGB}{230,140,50}
        
        \def\off{4.3}
        
        \node[left] at (0, 6.0-\off)  {\small $j_1$};
        \node[left] at (0, 5.3-\off)  {\small $j_2$};
        \node[left] at (0, 4.6-\off)  {\small $j_3$};
        
        \draw[thick] (0.0, 6-\off) -- (2, 6-\off); 
        \draw[thick] (0.0, 5.3-\off) -- (3, 5.3-\off); 
        \draw[thick] (0.0, 4.6-\off) -- (4.6, 4.6-\off); 

        \fill[bargreen, draw=brown!70!black, line width=1.5pt]     (0.0, 6.4-\off) rectangle (2.0, 6.1-\off);
        \fill[barblue, draw=magenta!80!gray, line width=1.5pt]   (0, 5.7-\off) rectangle (2, 5.4-\off);
        \fill[bargreen, draw=magenta!80!gray, line width=1.5pt]   (2, 5.7-\off) rectangle (3, 5.4-\off);
        \fill[barred, draw=black, line width=1.5pt]    (0, 5-\off) rectangle (2, 4.7-\off);
        \fill[barblue, draw=black, line width=1.5pt]    (2, 5-\off) rectangle (4, 4.7-\off);
        \fill[bargreen, draw=black, line width=1.5pt]    (4, 5-\off) rectangle (4.5, 4.7-\off);

        \node[left] at (0, 3.9-\off)  {\small $j_4$};
        \node[left] at (0, 3.2-\off)  {\small $j_5$};
        
        \draw[thick] (0.0, 3.9-\off) -- (4, 3.9-\off); 
        \draw[thick] (0.0, 3.2-\off) -- (5, 3.2-\off); 
        
        \def\off{4.75}
        \node[left] at (0, 2.7-\off)  {\small $j_6$};

        \draw[thick] (2.1, 2.5-\off) -- (6.6, 2.5-\off); 
        \fill[barred, draw=orange!80!gray, line width=1.5pt]    (2.1, 2.9-\off) rectangle (4, 2.6-\off);
        \fill[barblue, draw=orange!80!gray, line width=1.5pt]    (4, 2.9-\off) rectangle (6, 2.6-\off);
        \fill[bargreen, draw=orange!80!gray, line width=1.5pt]   (6, 2.9-\off) rectangle (6.6, 2.6-\off);
         
         \draw [thick, decorate, decoration={ brace, amplitude=5pt}, draw=gray!80!black] 
    (2.1, 3-\off) -- (4.0, 3-\off) 
    node [midway, above=4pt, text=gray!80!black] {$v$};

               \draw [thick, decorate, decoration={brace, amplitude=5pt}, draw=gray!80!black] 
    (4, 3-\off) -- (6.0, 3-\off) 
    node [midway, above=4pt, text=gray!80!black] {$u$};

           \draw [thick, decorate, decoration={ brace, amplitude=5pt}, draw=gray!80!black] 
    (6, 3-\off) -- (6.6, 3-\off) 
    node [midway, above=4pt, text=gray!80!black] {$r$};
        
        \node[left] at (0, 1.8-\off)  {\small $j_7$};
        \node[left] at (0, 1.1-\off)  {\small $j_8$};
        \node[left] at (0, 0.4-\off)  {\small $j_9$};
        
        \draw[thick] (0.0, 1.8-\off) -- (6, 1.8-\off); 
        \draw[thick] (0.0, 1.1-\off) -- (7, 1.1-\off); 
        \draw[thick] (4.1, .4-\off) -- (8.6, .4-\off); 

        
        \node[left] at (0,-0.3-\off)  {\small $j_{10}$};
        \node[left] at (0,-1.0-\off)  {\small $j_{11}$};
        \node[left] at (0,-1.7-\off)  {\small $j_{12}$};
        
        \draw[thick] (0.0, -.3-\off) -- (8, -.3-\off); 
        \draw[thick] (0.0, -1-\off) -- (9, -1-\off); 
        \draw[thick] (6.1, -1.7-\off) -- (10.6, -1.7-\off); 
        
        

        \draw[->, thick] (0,-7) -- (12,-7) node[right] {\textbf{Time}};

            \draw[thick] (0, -7.1) -- (0, -6.9);
            
            \node[below] at (0, -7.1) {$0$};
        \foreach \x/\t in {2/1, 4/2, 6/3, 8/4} {
            
            \draw[thick] (\x, -7.1) -- (\x, -6.9);
            
            \node[below] at (\x, -7.1) {$t_{\t}$};

            \draw[dashed, gray] (\x,-7) -- (\x, 6-\off);

            \ifnum\x>0          
            
            \node[root_small] (r) at (\x, -8) {r};
            \node[mid_small] (u) at (\x+.4, -8.5) {u}; 
            \node[leaf_small] (v) at (\x+.8,-9) {v}; 
            \node[tiny_circle] (u1) at (\x,-9) {$u_{\t}$};
            \node[tiny_circle] (r1) at (\x-.4,-8.5) {$r_{\t}$};
            
            \draw[edge] (u) -- (u1);     
            \draw[edge] (r) -- (r1); 
            \draw[edge] (r) -- (u);     
            \draw[edge] (u) -- (v);  
            
            \fi
            
        }
        
        \node[below left] at (.7,-3-\off) {Alg sends:};

        \node[below left, brown!70!black] at (2.35,-9.5) {$T_1$};
        \node[below left, magenta!80!gray] at (4.35,-9.5) {$T_2$};
        \node[below left, black] at (6.35,-9.5) {$T_3$};
        \node[below left, orange!80!gray] at (8.35,-9.5) {$T_4$};

        \coordinate (target_rect_1) at (6.6, 2.75-\off);
        \coordinate (target_rect_2_in) at (7, 1.05-\off);
        \coordinate (target_rect_2_out) at (7, 1.15-\off);
        \coordinate (target_rect_3_in) at (8, -.35-\off);
        \coordinate (target_rect_3_out) at (8, -.25-\off);
        
        \draw[->, >=stealth, ultra thick, dashed, draw=orange!80!gray] 
        (8.35, -9.5) to [out=0, in=0, looseness=2.5] 
        (target_rect_3_in);
 \draw[->, >=stealth, ultra thick, dashed, draw=orange!80!gray] 
        (target_rect_3_out) to [out=0, in=0, looseness=2.5] 
        node[midway, right, text=orange!80!gray, font=\small\sffamily] {Sequence of CHARGE calls} 
        (target_rect_2_in);
         \draw[->, >=stealth, ultra thick, dashed, draw=orange!80!gray] 
        (target_rect_2_out) to [out=0, in=0, looseness=1] 
        (target_rect_1);

\end{tikzpicture}

%% file: delays_v4.tex
\section{MLAP with Delays}
\label{sec:delays}

We now develop an algorithm and analysis for \mlapdelay that achieves a competitive ratio of $2D$ on trees of depth $D$. 

At a high level, we say that a set of requests is \emph{critical} when their accumulated delay is sufficient to pay for the cost of serving them. Just as in the deadline setting, the algorithm only schedules a service when some request set becomes critical. At this time, we then identify other pending requests that can be added to this service at low additional cost.

Here a key problem arises: while in the deadlines setting, there was a clear priority order over requests in order of increasing deadlines, in the delay setting it is unclear which requests should be prioritized over others. We take inspiration from the dual LP to guide this process. In particular, we construct a ``fake'' dual solution by raising duals of the pending requests gradually and simultaneously, matching the delay accumulated by each request. A set of requests can be added to the scheduled service if its duals can pay for the cost of the subtree that needs to be added to cover the requests. The requests are then prioritized in increasing order of the time it takes them to cover their service cost in this process. 

It would be convenient to charge the cost of each tree scheduled to the duals that ``sponsor'' that tree. Unfortunately, the fake duals described above are not feasible for the dual LP. In particular, the fake dual assigned to a request in a future iteration may interfere with the feasibility of a dual constraint at the time of a previous iteration. In order to perform our analysis, we define a second dual solution {\em in hindsight} after all requests have been received and processed. This hindsight dual is feasible and, in general, smaller than the fake dual used by the algorithm. Nevertheless, it can pay for a $1/D$ fraction of the cost of the primal solution.

\subsection{Algorithm}

We begin with a few  definitions. We do not explicitly define a fake dual solution, but rather track the accumulated delay of any request set over time to determine when it can be serviced. 

\begin{definition}[Saturation and Reachtime]
\label{def:saturation}
    Let $u \in \mc{T}$ be an arbitrary node and $Q$ be a set of requests such that $Q_u \neq \emptyset$. We then define the following terms.

    \begin{itemize}
        \item {\bf Saturation time:} $\sattime(Q, u) = \min\{ t \mid \exists Q' \subseteq Q_u \text{ such that } \del(Q', [t]) = w(\tree_u^{Q'})\}$ is the earliest time $t$ when the accumulated delay of some requests in $Q_u$ equals the cost of connecting them to $u$. We say that $Q$ \emph{saturates}  $u$ at time $\sattime(Q,u)$.
        
        \item {\bf Critical set and subtree:} 
        Suppose $Q$ saturates the root node $r$ at time $t$. The \emph{critical set} is the specific subset $Q' \subseteq Q$ that attains this saturation time, i.e., $\del(Q', [t]) = w(\tree_r^{Q'})$. The subtree $T^{Q'}_r$ is called the \emph{critical subtree}.
        
        \item {\bf Reach time and reach set:} $\reachtime(Q, u) \coloneq \min_{v \in \children(u)} \sattime(Q, v)$ is the earliest time at which some requests in $Q$ saturate a child of $u$. 
        
        The specific subset $Q' \subseteq Q_v$ that attains the saturation time for the minimizing child $v$ is called the $\reachset$ of $(Q, u)$. In case of ties, we choose a maximal such subset $Q'$.

    \end{itemize}
\end{definition}

We note that saturation times and reach times depend solely on the subset of requests being considered and are independent of the decisions made by the algorithm. This is an important property that we will use in our analysis.

We are now ready to describe the algorithm. In every iteration, as new requests arrive over time and all pending requests accumulate delay, the algorithm waits until a set of requests becomes critical at time $t$. We break any ties in favor of the maximal such critical set. Let $S$ be the subtree spanned by the root and the leaves of the critical request set. The algorithm eventually schedules a superset of this {\em critical subtree}.

Just as in the deadlines setting, we initialize $T$ to be the critical subtree and run a ``simulation'' from every node in $T$ in a \emph{leaf-to-root} order. The simulation at node $u$ is endowed with a budget of $w(u)$ that can be used towards adding new nodes to the tree. During the simulation step, as long as there are pending requests inside $\tree_u$ and the budget is not exhausted, we consider sets of requests in order of increasing \emph{reach times} to $u$ and invest in the corresponding nodes. Once any node has been fully paid for it gets added to the tree $T$. We also run simulations from each node newly added to the tree, again in leaf-to-root order.
Once all simulations terminate, the constructed tree $T$ is scheduled and a new iteration begins. A formal description is given in Algorithm \ref{alg:multilevel_agg_delays}.

\begin{algorithm}[htbp]
\caption{Online Multilevel Aggregation with Arbitrary Delay Functions}
\label{alg:multilevel_agg_delays}
\begin{minipage}{\linewidth}
\begin{algorithmic}[1]
\State \textbf{Initialization:}
\State Initialize $b_v, c_v \leftarrow 0$ for every node $v \in \mc{T}$
\Statex
\Function{\UponCritical()}{} \Comment{Wait until some request set becomes critical} 
    \LetState{$S$} {critical subtree}
    \LetState{$T$} {$\emptyset$}
    \State \Call{\Add}{$S, T$} \label{step:add-critical}
    \State transmit $T$ 
\EndFunction
\Statex
\Function{\Add}{$V, T$}     
    \State Set $c_v \gets 0,\ \ \forall v \in V$ 
    \State Add all nodes in $V$ to tree $T$
    \State Mark all pending requests on $v \in V$ as served
    \For{$v \in V$ in leaf-to-root order}
                \State \Call{\Simulate}{$v, T$}
    \EndFor
\EndFunction
\Statex
\Function{\Simulate}{$u, T$}
    \State $b_u \gets w(u)$
    \While{$b_u \neq 0$ \textbf{and} there remain pending requests in $\tree_u$}
        \LetState{$Q$}{set of pending requests in $\tree_u$} 
        \LetState{$Q'$}{$ \reachset(Q,u)$}\footnote{Note that we must saturate the child of $u$ even if the child of $u$ is already in the tree. This subtlety is important for proving Lemmas \ref{lem:freezetime-dominance} and \ref{lem:bought-nodes-freeze-earlier}. Also note that computing the reach set of $u$ is polynomial in the number of vertices in $\mathcal{T}_u$ and the number of pending jobs in $\mathcal{T}_u$. One way to efficiently compute the reach set is to look into the future and keep track of the set of nodes in $\mathcal{T}_u$ saturated by the pending requests in $\mathcal{T}_u$. Each time a new node is saturated, we can add that node to the set of saturated nodes, and stop once a child of $u$ has been saturated.
} and $v \gets$ corresponding child node of $u$ \label{line:witness} 
        \LetState{$T_{Q'}$}{$\tree_v^{Q'} \setminus T$} 
        \LetState{$\cst(T_{Q'})$}{$\sum_{v' \in T_{Q'}} (w(v') - c_{v'})$}
        \If{$\cst(T_{Q'}) \leq b_u$} 
            \For{$v' \in T_{Q'}$}
                \State \Call{\Invest}{$u, v', w(v') - c_{v'}$} \label{line:invest1}
            \EndFor
            \State $T_{part} \gets T$ \Comment{defined only for ease of analysis}
            \State \Call{\Add}{$T_{Q'}, T$} \label{step:add-noncritical} 
        \Else
            \For{$v' \in T_{Q'}$}
                \State $y(u, v') \gets b_u \cdot \frac{w(v') - c_{v'}}{\cst(T_{Q'})}$ 
            \EndFor
            \For{$v' \in T_{Q'}$}
                \State \Call{\Invest}{$u, v', y(u, v')$} \label{line:invest2} 
            \EndFor
        \EndIf
    \EndWhile 
\EndFunction
\Statex
\Function{\Invest}{$u, v, \gamma$} 
    \State increase $c_{v}$ by $\gamma$
    \State decrease $b_u$ by $\gamma$
\EndFunction

\end{algorithmic}
\end{minipage}
\end{algorithm}

Although our high-level approach resembles \cite{azar2019general}, we highlight several key distinctions. First, we always schedule a superset of the critical request set, whereas \cite{azar2019general} may only schedule a subset. Second, our simulation invests in all nodes required to serve a particular request set simultaneously rather than sequentially. Finally, simulations strictly follow a leaf-to-root order, which is crucial to proving one of our key technical lemmas (Lemma $\ref{lem:freezetime-dominance}$). See the proofs of Lemmas \ref{lem:all-or-none} and \ref{lem:bought-nodes-freeze-earlier} for more details. These modifications are key to enabling our dual fitting analysis.

Our goal in this section is to establish Theorem \ref{thm:delays-main}, which follows directly from the upper and lower bounds established in Corollary \ref{cor:primal_main} and Lemma \ref{lem:opt-bound-delays}, for Algorithm \ref{alg:multilevel_agg_delays} and the optimal solution, respectively. 

\subsection{Primal Analysis}

In the following, we bound the total delay cost incurred by the algorithm by the cost of the trees transmitted by the algorithm. The following simple lemma follows from the fact that the algorithm serves a request set as soon as it becomes critical. We defer the formal proof to Section \ref{sec:primal-deferred}.

\begin{restatable}{lemma}{algbounddelaystwo}
\label{lem:alg-bound-delays2}
Let $(T_1, t_1), \dots, (T_k, t_k)$ be the solution generated by Algorithm~\ref{alg:multilevel_agg_delays}. Then, 
    \[\delayalg \leq \sum_{i=1}^k w(T_i).\]
\end{restatable}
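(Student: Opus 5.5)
The plan is to charge the delay of every request to the tree that serves it, using the invariant that the algorithm never lets a set of pending requests accumulate more delay than the cost of connecting them to the root. For $i=1,\dots,k$, let $R_i$ be the set of requests served by $T_i$, meaning $\served(j)=t_i$. Every request is eventually served; if needed, we append a final service, or note that delay stops accruing once no further criticality occurs. Then
\[
\delayalg=\sum_i \del(R_i,[t_i]).
\]
It therefore suffices to show $\del(R_i,[t_i])\le w(T_i)$ for each $i$. Since every $j\in R_i$ lies at a vertex of $T_i$, we have $\tree_r^{R_i}\subseteq T_i$, so it is enough to prove $\del(R_i,[t_i])\le w(\tree_r^{R_i})$.

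The key step is the following invariant. At every time $t$, and for every set $Q'$ of requests pending at time $t$, we have $\del(Q',[t])\le w(\tree_r^{Q'})$. I would argue this by contradiction using continuity of accumulated delay in $t$. Suppose that at some time $t$ a pending set $Q'$ has $\del(Q',[t])>w(\tree_r^{Q'})$. Consider the function
\[
f(s)=\del(Q',[s])-w(\tree_r^{Q'}).
\]
It is continuous and nondecreasing, and $f(s)\le 0$ for $s\le \min_j a_j$, since delay is zero before arrival and the weights are nonnegative. Hence there is an earliest time $s^*\le t$ with $f(s^*)=0$. At time $s^*$ all of $Q'$ has arrived and is still pending, because pending at $t$ implies pending earlier, at least for those requests that have arrived by then. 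I need to handle requests in $Q'$ that have not yet arrived at $s^*$. For this I would restrict to the arrived subset $Q''$: its delay is unchanged, and $w(\tree_r^{Q''})\le w(\tree_r^{Q'})$, so it saturates no later. So at some time $\le s^*$ some pending set saturates the root, i.e.\ becomes critical in the sense of Definition~\ref{def:saturation}. \UponCritical\ then fires and serves the entire critical set. Any set that becomes critical is served at once, because the algorithm always transmits a superset of the critical subtree (step~\ref{step:add-critical}). This contradicts the pendency of those requests at $t$.

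The main obstacle is making this step rigorous. At the moment of firing, the critical set found by the algorithm may differ from $Q'$, and ties between several simultaneously critical sets are broken in favor of a maximal one. So I need to iterate: after a transmission at time $s$, the remaining pending requests have strictly smaller pending sets, and \UponCritical\ is re-triggered as long as some pending subset is saturated. Thus, at the end of time $s$, no pending subset has $\del\ge w(\tree_r^{\cdot})$ with strict excess, and continuity prevents a jump past equality. I would phrase this as: at any time, after all triggered services finish, every pending set $Q'$ satisfies $\del(Q',[t])\le w(\tree_r^{Q'})$, with equality forcing an immediate service.

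Finally I apply the invariant at $t_i$ just before $T_i$ is transmitted. The set $R_i$ was pending immediately before the transmission, so taking the limit from the left and using continuity of delay gives $\del(R_i,[t_i])\le w(\tree_r^{R_i})\le w(T_i)$. Summing over $i$ yields the lemma.
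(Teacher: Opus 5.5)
Your proof is correct and follows the paper's argument. Continuity of accumulated delay, together with the fact that any pending set $Q'$ reaching $\del(Q',[t]) = w(\tree_r^{Q'})$ triggers an immediate service, gives $\del(R_i,[t_i])\le w(\tree_r^{R_i})\le w(T_i)$. Your global-invariant phrasing, with re-triggering within a single instant and the left limit at $t_i$, is in fact more careful than the paper's one-line per-tree contradiction, which tacitly assumes the algorithm fires on exactly the offending set.

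Like the paper, you need every request to be served by some $T_i$. State this as a standing assumption rather than saying delay ``stops accruing,'' which is false: delay keeps accruing whether or not anything is served.
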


As in the deadlines section, the cost of the trees transmitted by the algorithm can be related to the cost of the critical subtrees. We again let $c_v(t)$ denote the unspent investment received by node $v$ up to time $t$ 
and define $\hat{w}(v, t) \coloneq w(v) - c_v(t) \geq 0$ as the unpaid cost of $v$ at time $t$. The following lemma states that the total cost of all transmitted trees is at most $D$ times the unpaid costs of nodes in the critical subtrees. Intuitively, when some node $u$ is included in the critical subtree, it is endowed with a budget of $w(u)$ which it uses to buy nodes inside the subtree rooted at $u$ of total cost $w(u)$. These nodes then further buy more nodes that are successively deeper in the tree structure. Thus in the worst case, a node $u$ can be ultimately responsible for adding nodes of total weight $D \cdot w(u)$ to the tree. The formal proof is a bit more involved and uses a preflow argument to properly account for investments that only partially buy nodes and is deferred to Section \ref{sec:primal-deferred}.

\begin{restatable}{lemma}{algbounddelays}
\label{lem:alg-bound-delays}
Let $(T_1, t_1), \dots, (T_k, t_k)$ be the solution generated by Algorithm~\ref{alg:multilevel_agg_delays}, where $t_i$ is the time just before tree $T_i$ is transmitted. Let $S_i \subseteq T_i$ be the critical subtree of tree $T_i$. Then,
    \[\treecostalg = \sum_{i=1}^k w(T_i) \leq D \cdot \sum_{i=1}^k \sum_{v \in S_i} \hat{w}(v, t_i).\]
\end{restatable}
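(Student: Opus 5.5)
The plan is a token/preflow accounting argument: every unit of cost of a transmitted tree is traced back, through chains of investments, to unpaid cost on some critical subtree, and each such unit is counted at most once per level of depth. For every tree $T_i$ and every node $v \in T_i$, split $w(v)$ into two parts. The first is $c_v$, the investment $v$ had received (from ancestors, possibly over several earlier iterations) when it was added, since \Add\ resets $c_v$ only after it is paid. The second is $\hat{w}(v,t_i)=w(v)-c_v$. A non-critical node $v \in T_i\setminus S_i$ is added only through line \ref{step:add-noncritical}, right after its full remaining cost $w(v)-c_v$ has been invested by the simulating ancestor. Hence the entire weight of a non-critical node comes from investments made by ancestors, partly in the current iteration and partly as leftover partial investments from earlier iterations. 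For a critical node, $\hat w(v,t_i)$ is genuinely new money, while the portion $c_v$ was also supplied by ancestors' budgets.

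\emph{Step 1: define the flow.} Each node $u$ added to a tree at some point receives a budget $w(u)$ in \Simulate, and spends at most $w(u)$ of it via \Invest\ on strict descendants. View each invested unit as flow from $u$ to $v'$. Every node's weight $w(v)$ in each transmission is covered by its inflow of invested units plus (only if $v$ is critical) its unpaid amount $\hat w(v,t_i)$. The outflow of $v$ is at most $w(v)$. This gives a preflow on the tree-time DAG whose sources are the critical-path unpaid amounts and whose edges go strictly downward in the tree.

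\emph{Step 2: bound total cost by depth.} Let $L(v)$ be the number of levels from $v$ to the deepest leaf, so $L(v)\le D$. The claim to prove is that $\sum_i w(T_i) \le \sum_i\sum_{v\in S_i} L(v)\,\hat w(v,t_i) \le D\sum_i\sum_{v\in S_i}\hat w(v,t_i)$. Proceed by a potential argument. Each unit of source mass created at a critical node at level $\ell$ can pay for a node's weight once at that level. After a node is fully paid and added, its fresh budget re-emits the same amount at most once more, one level deeper. Formally, assign to each unit of weight $w(v)$ of a transmitted node the chain of investments leading back to a source. Each source unit then supports at most one unit of weight per level along a downward path, hence at most $D$ units in total. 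Equivalently, induct bottom-up: the total weight "generated" by one unit injected at depth $d$ is at most $D-d+1$.

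\emph{Main obstacle.} The difficult part is partial investments. Budgets invested in nodes that are never completed are not wasted in the accounting, but they sit in $c_{v'}$ across iterations, and later they may pay for $v'$ in a different tree. Worse, a critical node's accumulated $c_v$ is subtracted from its contribution, and we must ensure this stored money was itself counted from a source. I would handle this by treating $c_v$ as stored excess in the preflow. Money is conserved: every unit in $c_v$ came from some ancestor's budget, which was either source mass or itself re-emitted after full payment. Money is never re-emitted more times than the depth allows, because re-emission from $u$ happens only once per addition of $u$, and that addition was paid by inflow equal to $w(u)$.

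Money that remains stuck in some $c_v$ at the end only helps, since it is counted on the source side but does not contribute to $\sum_i w(T_i)$. The inequality in the direction we need therefore survives. Making this rigorous requires carefully defining the flow so that each unit of excess is attributed consistently across time. I expect this attribution, rather than the depth counting, to take the bulk of the work.
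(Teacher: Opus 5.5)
Your plan is correct and matches the paper's proof. The paper builds exactly this network on pairs $(v,T_i)$ and routes each investment $y(u,v',T_i)$ as an edge to the next tree in which $v'$ is transmitted (possibly $T_i$ itself), so every pair receives total inflow exactly $w(v)$, counting the source amount $\hat w(v,t_i)$ for critical nodes; the cross-iteration attribution you expect to be the bulk of the work is therefore immediate. The only difference is that the paper avoids a path decomposition by scaling the flow on each edge into $(v',\cdot)$ by the height $h(v')$, so the depth count becomes the local preflow check $h(v)\,w(v) \ge w(v) + (h(v)-1)\,w(v)$ at each node.
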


We obtain the following corollary.
\begin{corollary}
\label{cor:primal_main}
    $\costalg \leq 2D\cdot \sum_{i=1}^k \sum_{v \in S_i} \hat{w}(v, t_i).$
\end{corollary}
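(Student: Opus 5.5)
The corollary follows by chaining the two preceding lemmas. By definition,
\[
\costalg = \treecostalg + \delayalg .
\]
Lemma~\ref{lem:alg-bound-delays2} bounds the delay term by the tree cost, $\delayalg \le \sum_{i=1}^k w(T_i) = \treecostalg$. Hence
\[
\costalg \le 2\cdot\treecostalg = 2\sum_{i=1}^k w(T_i).
\]

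Next we apply Lemma~\ref{lem:alg-bound-delays}, which bounds the total transmitted weight by the unpaid cost of the critical subtrees:
\[
\sum_{i=1}^k w(T_i) \le D\cdot\sum_{i=1}^k\sum_{v\in S_i}\hat{w}(v,t_i).
\]
Substituting this into the previous inequality gives
\[
\costalg \le 2D\cdot\sum_{i=1}^k\sum_{v\in S_i}\hat{w}(v,t_i),
\]
which is the claim.

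The only point to check is that both lemmas refer to the same objects: the same sequence $(T_i,t_i)$ produced by Algorithm~\ref{alg:multilevel_agg_delays}, with $t_i$ taken just before transmission so that $\hat w(v,t_i)$ is well defined. Both lemmas are stated for this common setup, so they compose directly.

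There is no real obstacle here; all the substantive work lies in the two lemmas being invoked.
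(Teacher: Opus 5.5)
Your proposal is correct and is exactly the paper's argument: Lemma~\ref{lem:alg-bound-delays2} gives $\delayalg \le \treecostalg$, so $\costalg \le 2\,\treecostalg$, and Lemma~\ref{lem:alg-bound-delays} then yields the factor $2D$. Your check that both lemmas refer to the same sequence $(T_i,t_i)$ is the only bookkeeping needed.
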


\subsection{Dual Fitting}

The following lemma, a generalization of Lemma~\ref{lem:deadlines-opt-lb} to the delay setting, provides a lower bound on the cost of an optimal solution and together with Corollary \ref{cor:primal_main} implies a competitive ratio of $2D$.

\begin{lemma}
\label{lem:opt-bound-delays}
Let $(T_1, t_1), \dots, (T_k, t_k)$ be the solution generated by Algorithm~\ref{alg:multilevel_agg_delays}, and $S_i \subset T_i$ be the critical subtree of tree $T_i$. Then,
$$\costopt \geq \sum_{i=1}^k \sum_{v \in S_i} \hat{w}(v, t_i).$$
\end{lemma}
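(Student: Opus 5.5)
We prove Lemma~\ref{lem:opt-bound-delays} by weak duality, following the template of Lemma~\ref{lem:deadlines-opt-lb}. We construct in hindsight a feasible solution $\{\alpha_{jt}\}$ to the \mlapdelay\ dual whose objective equals $\sum_i\sum_{v\in S_i}\hat w(v,t_i)$. The construction is a delay analogue of Algorithm~\ref{alg:deadlines-analysis}. For each transmitted tree $T_i$ and each $u\in S_i$, we call a recursive procedure $\Charge(u,T_i,\hat w(u,t_i))$.

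The procedure works as follows. Let $t_{\prev}$ be the last time $u$ was transmitted before $T$. Let $Q^u$ be the set of requests in the critical/reach set that caused $u$ to enter $T$; call these the \emph{patrons} of $u$.
\begin{itemize}
\item If the patrons accumulated enough delay inside $(t_{\prev},t]$, we charge $\gamma$ directly to them. Concretely, we raise $\alpha_{j\tau}$ up to $\delay(j,\tau)$ on a suitable subinterval ending by the freeze time of $j$. The amount charged is proportional to $u$'s share of $w(\tree_u^{Q^u})$, and it does not exceed the delay actually accumulated. This share is available because the saturation definition gives $\del(Q^u,[t])=w(\tree^{Q^u})$.
\item Otherwise, the patrons were already pending at $t_{\prev}$. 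In that case $u$'s budget in $\Simulate(u,T_{\prev})$ went entirely to reach sets with earlier reach times. We therefore recurse on each node $v$ that $u$ invested in, at $v$'s next transmission, with amount $y(u,v)\gamma/w(u)$.
\end{itemize}
By induction on the recursion, as in Lemma~\ref{claim:dual_obj_inc}, each call raises $\sum_j\int\alpha_{jt}\,dt$ by exactly $\gamma$. The top-level calls therefore sum to the right-hand side.

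For feasibility there are two constraints. The constraint $\alpha_{jt}\le\delay(j,t)$ will hold by construction, provided we verify that different charges to the same request use disjoint portions of its delay. I would show this by giving each request a single ``freeze time'' and arguing that each unit of its delay is used at most once. The configuration constraint is handled with time-dependent assignment variables $\delta_{u\tau j}$, as in Definition~\ref{def:delta}. We show two properties:
\begin{itemize}
\item[(a)] for every $j$ and every $t\ge a_j$, $\sum_{u\in P(j)}\delta_{utj}\ge\int_t^\infty\alpha_{j\tau}\,d\tau$, i.e.\ the future charge of an active request is routed through its ancestors;
\item[(b)] $\delta_{ut}\le w(u)$ for all $u$ and $t$.
\end{itemize}
Summing over $j$ with $v_j\in S$ and $a_j\le t$ then gives the constraint, exactly as in Lemma~\ref{lemma:config_feasibility}. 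Property (b) should follow because a call on $u$ occupies $u$'s capacity only during $(t_{\prev},t]$. Successive transmissions of $u$ give disjoint intervals, and within one interval the charged amounts sum to at most $\hat w(u,t)+$ (the investments $c_u$ received), which is at most $w(u)$.

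The main obstacle is property (a), the analogue of Claim~\ref{claim:ddl_v_less_than_ddl_u}. We must show that when a chain of $\Charge$ calls initiated at tree $T$ lands on a request $j$, the charged delay of $j$ lies at or before time $t$ of $T$. Then $j$'s routed charge can move up the chain of ancestors over time and stays covered while $j$ is active. Here one needs a monotonicity of reach and saturation times along the chain. Since these times depend only on the request sets and not on algorithmic decisions, and since $\Simulate$ processes reach sets in increasing reach time and in leaf-to-root order, I would prove a ``freeze-time dominance'' statement: nodes bought by $u$ have patrons whose saturation (freeze) times are at most that of $u$'s own patrons. This is exactly what the footnoted subtleties (saturating a child even if it is already in $T$, and the maximal tie-breaking) are designed to support. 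I expect this lemma, together with the bookkeeping of partial investments across iterations, to be the hard core of the proof.
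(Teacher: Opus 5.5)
Your skeleton (hindsight $\Charge$ recursion, assignment variables, properties (a) and (b), freeze-time dominance as the crux) is the paper's. However, the two definitions it rests on are not the right ones, and with them the argument fails.

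\textbf{The direct-charge test.} As you state it, you charge directly when the patrons accumulated enough delay inside $(t_{\prev},t]$. That test is wrong. The configuration constraint at time $\tau$ counts the whole future charge $\int_\tau^\infty\alpha_{js}\,ds$ of every request with $a_j\le\tau$, wherever that charge is placed. So a patron that arrived before $t_{\prev}$ loads the constraints on $[a_j,t_{\prev}]$, and your routing through $u$ on $(t_{\prev},t]$ gives no cover there.

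For example, let $r$ have weight $1$, with children $a,b$ of weight $\epsilon$ and $c$ of weight $1$. Request $j_1$ at $a$ becomes critical at $t_1$, and a request at $c$ with earlier reach time absorbs $r$'s budget. Request $j_4$ at $b$ arrives at time $0$, accrues almost all its delay after $t_1$, and becomes critical at $t_2$. Your test charges $r$'s cost in $T_2$ directly to $j_4$. If $j_1$'s delay is concentrated just before $t_1$, the constraint for $\{r,a,b\}$ at a time before $t_1$ then carries about $2+2\epsilon$ against capacity $1+2\epsilon$. The direct case must require $a_j>t_{\prev}$ for \emph{every} patron, which is the paper's disjoint/overlapping split.

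\textbf{The choice of patrons.} More fundamentally, patrons cannot be the critical or reach set that sponsored $u$; the paper stresses that patrons and sponsors differ. It selects patrons in hindsight by the earliest-arrival-first rule of Algorithm~\ref{alg:earliest_arrival_first_assignment}, among requests served in $T$ whose freezetime equals $\freezetime(v,T)$, with disjoint intervals inside $[a_j,\freezetime(j)]$. This buys two things your proposal lacks.

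(i) Cost recovery by the freeze time, which is Lemma~\ref{lem:patrons_equal_cost}(iv). Your appeal to $\del(Q^u,[t])=w(\tree^{Q^u})$ only supplies delay up to the reach time of the witnessed child. That can be later than the freeze times of the nodes actually bought (Figure~\ref{fig:freezetime_example}), and the chain argument controls only freeze times.

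(ii) Lemma~\ref{lem:bought-nodes-freeze-earlier}. If an early-arriving $j$ pays for $u$, the earliest-arrival-first rule forces every patron feeding $P(v_j,u)$ to have arrived no later than $j$. Hence a set of requests pending at $t$ reaches $u$ by $\freezetime(u,T_{\nxt})$.

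With sponsor sets this property, and hence dominance, fails:
\begin{enumerate}
\item At time $t$, let $r$ (weight $1$) invest its whole budget in a child $v$ of weight $1.5$, because $\reachtime(\{q\},r)=\rho$ beats a slow request $j$ at a heavy child $x$.
\item A fast request $k$ then arrives at $x$, and $\{j,k\}$ becomes critical at $t_{\nxt}<\rho$.
\item In $T_{\nxt}$, $r$ finishes buying $v$, with $\freezetime(v,T_{\nxt})=\rho>t_{\nxt}=\freezetime(r,T_{\nxt})$.
\item Since $j$ arrived before $t$, under the arrival test $r$ is overlapping and recurses into $v$. If $q$'s delay accrues mostly after $t_{\nxt}$, then $q$ carries a charge of nearly $1$ beyond $t_{\nxt}$ with no $\delta$ cover.
\item A request $q_2$ at $v$ that arrives just after $t_{\nxt}$ and becomes critical later is charged $2.5$. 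The constraint for $\{r,v\}$ just after $t_{\nxt}$ is then about $3.4$ against capacity $2.5$.
\end{enumerate}
Under earliest-arrival-first, $j$ is exhausted paying for $x$, $r$'s only patron is $k$, and $r$ is disjoint, so no violation arises.

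So the dominance lemma you plan to prove is false for your patron sets. The missing idea is exactly this hindsight patron selection.
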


The proof of Lemma \ref{lem:opt-bound-delays} is the core technical contribution of the paper and relies on the construction of a feasible dual solution. The dual construction procedure loosely mirrors the one in the deadlines setting but is significantly more nuanced. We now describe the differences. 

First, in the delay setting the dual variable $\alpha_{jt}$ is restricted to be at most $\del(j, t)$. This differs from the deadline setting, in which there is no restriction on how large $\alpha_j$ can be, aside from the feasibility constraint imposed by each rooted subtree.

Second, recall that in the deadlines setting, the cost of each vertex $v$ in the critical path is paid for by either a single request (namely the critical request that caused the vertex to be included in the service) or recovered recursively from vertices that $v$ invests in. In the delays setting, we will define a similar Charge procedure, but each vertex will charge a set of requests that ``sponsored'' its inclusion in the critical subtree. We will call this set the {\em patrons} of $v$. 

We will use the following terminology. For a tree $T$ sent by the algorithm and vertex $v$ in $T$, the set $\patrons(v,T)$ is a subset of requests in $T_v$ that pay for the cost of $v$. For each patron $j\in \patrons(v,T)$, we use $I(j,v,T)$ to denote the interval of time over which $v$ charges the dual variables corresponding to request $j$. The total amount of dual charged by $v$ to patron $j\in \patrons(v,T)$ is denoted by $\lambda(j,v,T)$. 

The intervals over which requests are charged may extend beyond the time at which the requests are served. We keep track of the latest time that a request $j$ can be charged through a quantity called its {\em freezetime} and denoted $\freezetime(j)\ge a_j$. We extend this notion to vertices. Intuitively, for a subtree $T$ served by the algorithm and some vertex $v\in T$, $\freezetime(v, T)$ denotes the latest time at which $v$ charges some dual variable.

We defer the formal description of these quantities to Algorithm \ref{alg:earliest_arrival_first_assignment} in Section~\ref{sec:patrons}, but note that they satisfy the properties formalized in the following two lemmas.

\begin{restatable}{lemma}{patronseqcost}
\label{lem:patrons_equal_cost}
Let $T$ be a tree transmitted by Algorithm \ref{alg:multilevel_agg_delays}. For any node $v$ in $T$ and request $j$ that is served by $T$, let $\patrons(v, T)$, intervals $I(j, v, T)$, and charges $\lambda(j, v, T)$ be as defined by Algorithm \ref{alg:earliest_arrival_first_assignment}. Then,
\begin{enumerate}[label=(\roman*)]
    \item $I(j, v_1, T) \cap I(j, v_2, T) = \emptyset$ for any request $j$ and distinct nodes $v_1, v_2$ in $T$.
    \item For any request $j$ served in $T$, $\cup_{v \in T} I(j,v,T) \subseteq [a_j, \freezetime(j)]$. 
    \item For all $v \in T$ and $j\in \patrons(v,T)$, we have $\lambda(j,v,T) = \delay(j, I(j,v,T))$, i.e., the total amount charged to patron $j$ equals the delay accumulated by $j$ during the interval $I(j,v,T)$. 
    \item For all $v$ and $T$, this charging scheme completely recovers the cost of $v$.\\ That is, $\sum_{j\in \patrons(v,T)} \lambda(j,v,T) = w(v)$.
\end{enumerate}
\end{restatable}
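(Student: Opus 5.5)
Since Algorithm~\ref{alg:earliest_arrival_first_assignment} has not yet been displayed, I plan the proof around the construction it must perform: an \emph{earliest-arrival-first} assignment of delay to vertices, carried out within each transmitted tree $T$. The plan is to fix $T$ (sent at time $t$) and process the vertices of $T$ in leaf-to-root order, mirroring the order in which \Simulate\ is run. Each request $j$ served by $T$ keeps a pointer $\tau_j$, initialised to $a_j$; this marks the start of its still-uncharged delay. When vertex $v$ is processed, its candidate patrons are the requests of $T_v$ that were used to saturate $v$ (or the child through which $v$ was reached). We scan these in order of increasing arrival. Each patron $j$ contributes a slice $I(j,v,T)=[\tau_j,\tau_j']$ chosen so that $\delay(j,[\tau_j,\tau_j'])$ equals the remaining unpaid portion of $w(v)$, capped by what $j$ can still offer. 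We then advance $\tau_j\leftarrow\tau_j'$ and continue to the next patron until $w(v)$ is exhausted. Finally, $\freezetime(j)$ is defined as the final value of $\tau_j$ after all vertices of $T$ are processed, capped at the saturation time that triggered the purchase of $v$.

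With this structure, items (i)--(iii) are essentially bookkeeping. For (i), each request's intervals are consecutive, non-overlapping slices $[\tau_j,\tau_j']$, and the pointer only moves forward, so two distinct vertices never receive overlapping intervals of the same request. For (ii), every slice starts at or after $a_j$ and ends at or before the final pointer value, which is $\freezetime(j)$ by definition. For (iii), $\lambda(j,v,T)$ is set to exactly $\delay(j,I(j,v,T))$ when the slice is cut. The oracle query (2) from the preliminaries guarantees that an endpoint $\tau_j'$ hitting any prescribed target amount exists, because accumulated delay is continuous and nondecreasing in the endpoint.

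The substance is (iv): showing that the patrons of $v$ have enough uncharged delay to cover all of $w(v)$. I will argue by induction in leaf-to-root order, using the saturation identity from Definition~\ref{def:saturation}. Since $v$ was added because some set $Q'$ saturated it (or, for critical-subtree vertices, because the critical set satisfied $\del(Q',[t])=w(\tree_r^{Q'})$), we have $\del(Q',[\sattime])=w(\tree_v^{Q'})=w(v)+\sum_{c}w(\tree_c^{Q'_c})$, where $c$ ranges over the children of $v$ through which $Q'$ passes. The inductive hypothesis is that processing the strict descendants of $v$ on $\tree_v^{Q'}$ consumed exactly $\sum_c w(\tree_c^{Q'_c})$ of the delay of $Q'$ up to the saturation time, and nothing beyond it. Then exactly $w(v)$ of delay remains in $[a_j,\sattime]$ across $Q'$, which is precisely what $v$ needs.

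The main obstacle is justifying this hypothesis when the descendants were bought in different ways. Some may have been charged on behalf of \emph{other} saturating sets, and some may already have been in $T$ before $v$ was reached; the footnote on Line~\ref{line:witness} hints at the second case. My first attempt would be to use the maximality tie-breaking together with the fact that saturation times depend only on the request set, not on algorithmic state. This should show that the set saturating each descendant $c$ is a subset of $Q'_c$, and that it saturates $c$ no later than $\sattime(Q',v)$. That fact lets the descendant charges be nested inside $v$'s patron set and inside the window before $v$'s saturation time. If sets turn out to be incomparable, I would fall back on minimality of the saturation time: a strictly cheaper sub-witness would contradict $\sattime$ being a minimum, which prevents a descendant from consuming more than its share.
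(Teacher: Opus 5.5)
Your pointer bookkeeping for (i) and (iii) matches the paper, but (iv) is not proved. The inductive hypothesis that carries the argument is asserted and then deferred as ``the main obstacle,'' with only a hoped-for nesting of saturating sets behind it. As stated, the hypothesis is also false. The equality $\del(Q',[\sattime])=w(\tree_v^{Q'})$ that you invoke at $v$ is not supplied by the definitions. A vertex is bought because the reach set $Q'$ saturates the witnessed child $v_c$ of $u$. For $x\in\tree_{v_c}^{Q'}$, Lemma~\ref{lem:sattime-monotone} only gives $\del(Q'_x,[\reachtime(Q,u)])\ge w(\tree_x^{Q'_x})$, and this holds at the reach time, not at $x$'s own time. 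Now take the footnote case you flag, where $v_c$ is already in $T_{part}$. The vertices bought in that single step then have different freezetimes. If $x$ is a child of $v_c$, then $\freezetime(x,T)=\sattime(Q'_x,x)$. This can be attained at time $s_1$ by a fast request $j_1$ in one branch below $x$. Meanwhile a slow request $j_2\in Q'_x$ in another branch covers the cost of its own branch only at some $s_2>s_1$; this is compatible with $j_2\in Q'$ when $s_2$ does not exceed the reach time. Then $\del(Q'_x,[s_1])<w(\tree_x^{Q'_x})$, and $j_2$'s charges for its branch necessarily run past $x$'s saturation time. So ``consumed exactly \ldots\ and nothing beyond it'' fails.

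The missing idea is to stratify the assignment by freezetime, with freezetime fixed \emph{before} the assignment rather than produced by it. In the paper, $\freezetime(v,T)$ comes from the online run: it is $t$ on the critical subtree and $\max_{y\in P(v)\setminus T_{part}}\sattime(Q'_y,y)$ otherwise, and $\freezetime(j)=\freezetime(v_j,T)$. Algorithm~\ref{alg:earliest_arrival_first_assignment} then does three things:
\begin{enumerate}
\item it gives each request the budget $\Lambda(j)=\del(j,[a_j,\freezetime(j)])$;
\item it processes vertices by increasing freezetime, then leaf-to-root;
\item it lets any request of $Q_v$ with $\freezetime(j)=\freezetime(v,T)$ pay for $v$, not just sponsors.
\end{enumerate}
The budget is what makes (ii) immediate. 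Your definition of $\freezetime(j)$ as the final pointer ``capped'' at a saturation time is either circular or unenforced: ``what $j$ can still offer'' is never specified, so nothing stops the pointer from passing the cap. It would also cut the link to saturation times that the Freezetime Dominance Lemma relies on.

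With this structure, (iv) becomes a one-sided count instead of an exact invariant. Let $t^*=\freezetime(v,T)$ and let $Q'$ be the sponsor set of the step that adds $v$. Take $\tilde J_v=Q'\cap\{j:\freezetime(j)=t^*\}$; in the example above, this drops $j_2$. Show that $\del(\tilde J_v,[t^*])\ge w(\tree_v^{\tilde J_v})$, using that the requests of $Q'$ with larger freezetime have not paid for their own branches by $t^*$. Every request in $\tilde J_v$ is eligible for $v$. Before $v$ is processed, these requests can only have paid for strict descendants of $v$ in $\tree_v^{\tilde J_v}$, and each such vertex absorbs at most its own weight. Hence at least $w(v)$ of their budget remains, with no nesting of saturating sets and no exact accounting needed.
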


For a tree $T$ scheduled by the algorithm, we characterize each node $v\in T$ as either {\em disjoint} or {\em overlapping} depending on whether or not the patrons of $v$ were pending at the previous time $v$ was scheduled.
\begin{definition}[Disjoint and Overlapping Nodes]
    Consider a node $v$ sent in tree $T$ at time $t$. Let $t_{\prev}(v) < t$ be the previous time when $v$ was sent. Then node $v$ is said to be \emph{disjoint in tree $T$} if $\forall \: j \in \patrons(v, T), a_j > t_{\prev}(v)$. Otherwise, $v$ is said to be \emph{overlapping in tree $T$}.
\end{definition}
We pay for disjoint and overlapping nodes in different ways in our dual construction. Disjoint nodes charge their patrons directly according to the intervals $I(j,v,T)$ described above.
By the definition of disjoint nodes, the duals of these patrons do not interfere with older dual constraints. Overlapping nodes, on the other hand, cannot directly charge their patrons. Instead, we pass the charge for an overlapping node down the tree to the descendants the node invested in the previous time the node was scheduled. The charge passed down by a vertex $u$ to a node $v$ it invested in is proportional to the fraction of its weight $u$ invested in $v$. 

We now state a crucial property about freezetimes that we call the \emph{Freezetime Dominance Lemma}. This property is the technical crux behind the dual feasibility for the delay setting and is the analog of Claim \ref{claim:ddl_v_less_than_ddl_u} in the deadlines setting. The proof of Lemma \ref{lem:freezetime-dominance} is fairly technical and deferred to a separate section (Section \ref{sec:freezetime_analysis}). 

\newif\iforiginalstate
\originalstatetrue
\newcommand{\onefootnote}[1]{\iforiginalstate\footnote{#1}\fi}

\begin{restatable}{lemma}{freezetimelem}{\em [Freezetime Dominance]}
\label{lem:freezetime-dominance}
    Suppose $u$ invests in node $v$ in tree $T$ at time $t$, and $v$ is next sent in tree $T^{v}$ at time $t^v \geq t$\onefootnote{Note that we may have $T^v = T$ if $v$ gets scheduled in the same tree either during $u$'s simulation or because of some other node's simulation.}. Further suppose that after $T$, $u$ is next sent in tree $T_{\nxt}$ at $t_{\nxt} > t$ and $u$ is overlapping in tree $T_{\nxt}$. Then
    \[
        \freezetime(v, T^v) \leq \freezetime(u, T_{\nxt}).
    \]
\end{restatable}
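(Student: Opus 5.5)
The plan is to compare both freezetimes with one intermediate quantity: the reach time $\tau^\ast \coloneq \reachtime(Q,u)$ at the iteration of $\Simulate(u,T)$ in which $u$ invested in $v$. Here $Q$ is the set of requests pending in $\tree_u$ at that iteration, and $Q'=\reachset(Q,u)\subseteq Q_v$ is the reach set that made $v$ a target of the investment. I would prove two inequalities separately:
\[
\freezetime(v,T^v)\;\le\;\tau^\ast\;\le\;\freezetime(u,T_{\nxt}).
\]
Both steps rest on the remark after Definition~\ref{def:saturation}: saturation and reach times depend only on the request set, not on the algorithm's state. The footnote to line~\ref{line:witness}, which says we saturate a child even if it is already in $T$, makes the order in which $u$ examines its descendants a fixed function of the pending requests.

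\textbf{Upper bound on $\freezetime(v,T^v)$.} From the construction of patrons and freezetimes in Algorithm~\ref{alg:earliest_arrival_first_assignment}, I expect $\freezetime(v,T^v)$ to be at most the time at which the patrons of $v$ accumulate enough delay to pay for $\tree_v^{\patrons(v,T^v)}$. In other words, it should be bounded by a saturation time of $v$ by a set of requests served in $T^v$.

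I then split into two cases.
\begin{itemize}
\item If $T^v=T$, then $Q'$ is served in $T$. The patron selection can only finish earlier than the set $Q'$, which saturates $v$ exactly at $\tau^\ast$, so $\freezetime(v,T)\le \tau^\ast$.
\item If $T^v$ is later, then no request of $Q'$ is served before $t^v$, because $v$ is not transmitted in between. Hence $Q'$ is still pending at $t^v$, and by monotonicity of saturation in the request set, $\sattime(\cdot,v)$ at $t^v$ is at most $\sattime(Q',v)=\tau^\ast$. Patrons are chosen earliest-arrival-first, and I need to check that this selection does not push the freezetime past this saturation time. This is where Lemma~\ref{lem:all-or-none} and Lemma~\ref{lem:bought-nodes-freeze-earlier} should enter: a node bought through a reach set freezes no later than that reach set's saturation time.
\end{itemize}

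\textbf{Lower bound on $\freezetime(u,T_{\nxt})$.} Since $u$ is overlapping in $T_{\nxt}$, some patron $j$ of $u$ in $T_{\nxt}$ was pending at $t_{\prev}(u)=t$, yet $u$'s simulation in $T$ did not serve it. The while loop stops only when the budget $b_u$ is zero or no pending requests remain. The second option is excluded because $j$ was pending, so $u$ exhausted its budget on reach sets in nondecreasing order of reach time. Any request left unserved must therefore saturate the relevant child of $u$ no earlier than every reach time processed, in particular no earlier than $\tau^\ast$.

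I would then show that the freezetime of $u$ in $T_{\nxt}$ is at least the time at which its overlapping patrons (together with the others) become able to pay for $u$ and the path down to them. This is a saturation-type quantity, and by the previous paragraph it is at least $\tau^\ast$. Leaf-to-root ordering matters here: when $u$ simulates, its children have already simulated. So a later-arriving request cannot lower the relevant saturation time below what $u$ saw, except by arriving after $t$; such requests only add, and never remove, from the sets considered.

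\textbf{Main obstacle.} I expect the hardest step to be this last one: relating the patron-based $\freezetime(u,T_{\nxt})$, which mixes old overlapping requests with new arrivals, to the reach time $\tau^\ast$ computed at time $t$. Because saturation is taken over a minimum over subsets, new arrivals could in principle produce an earlier saturation of a child of $u$. My first attempt would be to show that the earliest-arrival-first patron rule forces the patrons that determine $\freezetime(u,T_{\nxt})$ to include the old pending requests. I would then argue via the maximal tie-breaking of reach sets that any subset achieving an earlier saturation would already have been selected by $u$ in $T$, contradicting that it remained pending.
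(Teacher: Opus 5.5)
\textbf{There is a genuine gap: the left half of your sandwich is false.} The inequality $\freezetime(v,T^v)\le\tau^\ast$ fails when $v$ is not bought in $T$. Consider this example:
\begin{itemize}
\item Take the path $r$--$u$--$v$ with $w(r)=0$, $w(u)=1$, $w(v)=10$, and two requests arriving at time $0$ with unit delay rate: $j_0$ at $u$ and $j$ at $v$.
\item At time $1$, $\{j_0\}$ is critical and $T=\{r,u\}$ is sent. $\Simulate(u,T)$ sees the reach set $\{j\}$ with $\tau^\ast=10$ and spends its whole budget on $v$ without buying it.
\item $\{j\}$ becomes critical at time $11$, so $T_{\nxt}=T^v$ is sent at $11$ with critical subtree $\{r,u,v\}$. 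Here $u$ is overlapping, since its patron $j$ arrived before time $1$.
\end{itemize}
Thus $\freezetime(v,T^v)=11>10=\tau^\ast$, although the lemma itself holds with equality.

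The underlying error is that you have freezetimes and patrons the wrong way round. $\freezetime(v,T)$ is \emph{defined} from the step that adds $v$: it is $t$ if $v$ is in the critical subtree, and otherwise a maximum of saturation times over \emph{all} newly added ancestors of $v$, which may include $u$ and nodes above it. Patrons are chosen only afterwards, among requests whose freezetime already equals this value. Lemma~\ref{lem:patrons_equal_cost} then gives the opposite of the inequality you expect: patrons pay for $v$ out of delay accrued \emph{before} $\freezetime(v,T)$. Whenever $v$ enters $T_{\nxt}$ in the same step as $u$, we get $\freezetime(v,T_{\nxt})\ge\freezetime(u,T_{\nxt})$, and this can strictly exceed $\tau^\ast$. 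So the principle in your second case (``a node bought through a reach set freezes no later than that reach set's saturation time'') is false. Neither Lemma~\ref{lem:all-or-none} nor Lemma~\ref{lem:bought-nodes-freeze-earlier} can supply it; the latter is a \emph{lower} bound on $u$'s freezetime.

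\textbf{What is needed instead.} Split, via Lemma~\ref{lem:all-or-none}, on whether the reach set $Q'$ is served in $T$, not on whether $T^v=T$.
\begin{itemize}
\item If $Q'$ is served in $T$, Lemma~\ref{lem:early-deadline} gives $\freezetime(v,T)\le\tau^\ast$.
\item If $Q'$ is not served in $T$, its requests stay pending until $t_{\nxt}$, because they lie in $\tree_u$ (not necessarily in $\tree_v$). You must then show that $v$ is added to $T_{\nxt}$ \emph{in the same step as $u$}, so the correct target is $\max\{\tau^\ast,\freezetime(u,T_{\nxt})\}$ rather than $\tau^\ast$.
\end{itemize}
For the second case there are two subcases. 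If $u$ is in the critical subtree of $T_{\nxt}$, then $\tau^\ast\le t_{\nxt}$, and maximal tie-breaking puts $Q'$ in the critical set. Otherwise $u$ is bought in $\Simulate(u',T_{\nxt})$ with some sponsor set $Q''$, and you must prove $Q'\subseteq Q''$. This uses $\reachtime(Q'',u')\ge\freezetime(u,T_{\nxt})\ge\tau^\ast$ (Lemma~\ref{lem:early-deadline} plus your lower bound): $Q''\cup Q'$ then reaches $u'$ no later than $Q''$ does, contradicting the maximal choice of $Q''$.

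With that in hand, $\freezetime(v,T_{\nxt})$ is the maximum of $\freezetime(u,T_{\nxt})$ and the values $\sattime(Q''_x,x)$ for $x$ strictly below $u$ on the path to $v$. Each such value is at most $\sattime(Q'_x,x)\le\tau^\ast$ by Lemma~\ref{lem:sattime-monotone}. Note this is only $\le$: your equality $\sattime(Q',v)=\tau^\ast$ holds only when $v$ is a child of $u$.

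\textbf{Your lower-bound half.} It points in the right direction, but the worry about new arrivals is resolved by earliest-arrival-first, not by tie-breaking. Close the overlapping patron $j$ under ``patrons of vertices on the paths to patrons''. Every request obtained has freezetime $\freezetime(u,T_{\nxt})$ and arrived no later than $j$, so it was pending when $u$ simulated in $T$. By Lemma~\ref{lem:patrons_equal_cost}, this set saturates the child of $u$ by $\freezetime(u,T_{\nxt})$, and minimality of $\tau^\ast$ then gives $\tau^\ast\le\freezetime(u,T_{\nxt})$.
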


The dual construction procedure is described in pseudocode in Algorithm \ref{alg:delays-analysis}.

\begin{algorithm}[H]
\caption{Dual Construction Algorithm}
\label{alg:delays-analysis}
\begin{algorithmic}[1]
\State Initialize $\alpha_{jt} \leftarrow 0$ for every $j, t$
\Statex
\For{$i = 1, 2, \ldots, k$}
\LetState{$T_i$}{tree transmitted at time $t_i$}
\LetState{$S_i$}{critical subtree in $T_i$}
\For{$u \in S_i$}
\State \Call{\Charge}{$u$, $T_i$, $\hat{w}(u, t_i)$} \Comment{{\footnotesize $u$ charges its unpaid cost in tree $T_i$ to a subset of requests}} \label{line:main-charge-call}
\EndFor
\EndFor
\Statex
\Function{\Charge}{$u$, $T$, $\gamma$} \Comment{{\footnotesize $u$ is transmitted in tree $T$, and charges $\gamma$ to a subset of requests}}   
    \LetState{$t$}{time that $T$ was transmitted} 
    \LetState{$t_{\prev}$}{previous time strictly before $t$ when $u$ was transmitted, or $0$ otherwise}  
    \LetState{$T_{\prev}$}{tree sent at $t_{\prev}$, if one exists}
    \Statex
    
    \If{$u$ is disjoint in tree $T$}
        \For{$j\in \patrons(u,T)$}
            \State Set $\alpha_{j\tau} = \delay(j, \tau) \cdot \frac{\gamma}{w(u)}$ for all $\tau\in I(j,u,T)$ 
            \LetState{$\gamma_j$}{$ \delay(j, I(j,u,T)) \cdot \frac{\gamma}{w(u)}$}
            \Statex \Comment{{\footnotesize $u$ charges every patron $j$ in proportion to $\delay(j,I(j,u,T))$}}
        \EndFor
        \State {\Return{$\bigcup_{j \in \patrons(u, T)} \{(j, \gamma_j)\}$}}
    \Else
        \LetState{$A_u$}{Set of nodes that $u$ invests in during $T_{\prev}$}
        \Statex \Comment{{\footnotesize $u$ will recursively charge requests through proxy nodes in $A_u$}}
        \For{$v \in A_u$}
            \LetState{$y(u, v, T_{\prev})$}{Amount invested by $u$ in $v$ in tree $T_{\prev}$}
            \LetState{$T_{\nxt}^v$}{next tree where $v$ is transmitted (inclusive of $T_{\prev}$)}
            \label{line:tnext}
            \State {$B_v \gets $} \Call{\Charge}{$v$, $T_{\nxt}^v$, $y(u, v, T_{\prev}) \cdot \frac{\gamma}{w(u)}$} \Statex \Comment{{\footnotesize The amount charged by $v$ is proportional to the investment $u$ makes in $v$}}
        \EndFor
        \State {\Return{$\bigcup_{v \in A_u} B_v$}}
    \EndIf
\EndFunction
\end{algorithmic}
\end{algorithm}

\subsubsection{Dual Analysis}

Let $\{\alpha\}$ denote the dual solution returned by Algorithm~\ref{alg:delays-analysis}. In order to prove Lemma \ref{lem:opt-bound-delays}, we need to show that the objective of the dual solution is at least the unpaid cost of all the critical subtrees sent by the algorithm and that $\{\alpha\}$ is a feasible dual solution. We provide a high level overview of both results here and defer all missing proofs to Section \ref{sec:deferred-proofs-delays}.

\begin{restatable}{claim}{delaysdualobjinc}
\label{claim:delays_dual_obj_inc}
Each call to $\Charge(u, T, \gamma)$ increases the dual objective $\sum_{j,t} \alpha_{jt}$ by $\gamma$. 
\end{restatable}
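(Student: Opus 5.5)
We plan to prove Claim~\ref{claim:delays_dual_obj_inc} by strong induction on the depth of the recursion tree generated by a call $\Charge(u,T,\gamma)$. We will also assume, as a companion statement to be verified along the way, that the increments made to the $\alpha$ variables in different calls (or by different patrons within a call) never overwrite one another. This is where the statement ``Set $\alpha_{j\tau}=\ldots$'' has to be read as an increment; Lemma~\ref{lem:patrons_equal_cost}(i) is the ingredient that makes the intervals of a single request within one tree disjoint.

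\emph{Base case: $u$ is disjoint in $T$.} The call sets $\alpha_{j\tau}=\delay(j,\tau)\cdot\gamma/w(u)$ on $I(j,u,T)$ for every $j\in\patrons(u,T)$. The contribution of patron $j$ to $\sum_{j,t}\alpha_{jt}$ is therefore
\[
\int_{I(j,u,T)}\delay(j,\tau)\,\frac{\gamma}{w(u)}\,d\tau=\frac{\gamma}{w(u)}\,\delay(j,I(j,u,T))=\frac{\gamma}{w(u)}\,\lambda(j,u,T),
\]
by Lemma~\ref{lem:patrons_equal_cost}(iii). Summing over the patrons and applying Lemma~\ref{lem:patrons_equal_cost}(iv), $\sum_j\lambda(j,u,T)=w(u)$, the total increase is exactly $\gamma$. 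The same computation shows that the returned pairs $(j,\gamma_j)$ have amounts summing to $\gamma$.

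\emph{Inductive case: $u$ is overlapping in $T$.} Here $u$ makes recursive calls $\Charge(v,T^v_{\nxt},y(u,v,T_{\prev})\gamma/w(u))$ for $v\in A_u$. By the induction hypothesis, these calls increase the objective by $\frac{\gamma}{w(u)}\sum_{v\in A_u}y(u,v,T_{\prev})$. It therefore remains to show that $\sum_{v\in A_u}y(u,v,T_{\prev})=w(u)$, i.e.\ that $u$ spent its entire budget during $\Simulate(u,\cdot)$ in $T_{\prev}$. The $\Simulate$ loop terminates either with $b_u=0$ or with no pending requests left in $\tree_u$. Since $u$ is overlapping in $T$, some patron $j$ of $u$ in $T$ has $a_j\le t_{\prev}$, and $j$ lies in $\tree_u$. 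Because $j$ is served in $T$ and not earlier, it was still pending when $T_{\prev}$ was sent, and hence throughout $u$'s simulation. So the loop must have exited with $b_u=0$, and the investments sum to $w(u)$.

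One subtlety remains: the recursion must be well founded. Each recursive call is on a tree $T^v_{\nxt}$ sent no later than $T$, and the node $v$ is a strict descendant of $u$, so the depth strictly decreases and the recursion terminates within $D$ levels.

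The main obstacle I expect is the non-overwriting issue: showing that two different disjoint charges never assign values to the same pair $(j,\tau)$. Lemma~\ref{lem:patrons_equal_cost}(i) handles distinct nodes within a single tree. For charges across different trees, I would use part (ii) together with the fact that a request is served in only one tree, so its patronage occurs only within that tree. I would also need to check that the same pair (node, tree) is never charged twice in a way that would double-set the values; if it is, I would reinterpret the assignment additively, which leaves the computation above unchanged.
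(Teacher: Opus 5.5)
Your proposal is correct and follows the paper's proof: induction over the recursion of $\Charge$. The disjoint case is handled by Lemma~\ref{lem:patrons_equal_cost}(iii)--(iv). In the overlapping case, a patron with $a_j\le t_{\prev}$ stays pending through $\Simulate(u,T_{\prev})$, which forces $\sum_{v\in A_u}y(u,v,T_{\prev})=w(u)$.

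The paper's proof also reads ``Set $\alpha_{j\tau}=\dots$'' as an increment. Under that reading, your ``main obstacle'' does not arise for this claim: repeated calls $\Charge(u,T,\cdot)$ on the same pair can indeed occur (cf.\ Claim~\ref{claim:delays_limited_charges}), but the objective is additive in the increments. Likewise, termination follows from the strict depth decrease alone, so your unproved aside that $T^v_{\nxt}$ is sent no later than $T$ is not needed.
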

\begin{proof}[Proof Sketch]
    The claim follows by induction over the tree of recursive $\Charge(\cdot, \cdot, \cdot)$ calls. If $u$ is disjoint in $T$, then the total delay associated with its patron intervals is exactly $w(u)$ (Lemma~\ref{lem:patrons_equal_cost}). Since the dual increases by $\sum_{j \in \patrons(u, T)} \del(j, I(j,u,T)) \cdot \frac{\gamma}{w(u)} = \gamma$, this completes the base case of the recursion. Otherwise if $u$ is overlapping, then $u$ must have exhausted its budget at time $t_{\prev}$. By the induction hypothesis, the recursive calls together increase the dual by $\sum_{v \in A_u} y(u, v, T_{\prev}) \cdot \frac{\gamma}{w(u)} = \gamma$.
\end{proof}

The following corollary is then immediate. 

\begin{corollary}
    $\sum_{j,t} \alpha_{jt} = \sum_{i=1}^k \sum_{v \in S_i} \hat{w}(v, t_i)$.
\end{corollary}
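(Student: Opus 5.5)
The corollary follows by summing Claim~\ref{claim:delays_dual_obj_inc} over the top-level calls made by Algorithm~\ref{alg:delays-analysis}. The key observation is that $\alpha$ is modified only inside $\Charge$. Every invocation of $\Charge$ is either a top-level call on line~\ref{line:main-charge-call} or a recursive sub-call spawned by one. The claim is stated for an arbitrary call and already counts the effect of its recursive descendants. We will therefore apply it only to the top-level calls $\Charge(u, T_i, \hat{w}(u,t_i))$ for $i \in \{1,\dots,k\}$ and $u \in S_i$. This avoids double counting: the recursive calls' contributions are contained in their ancestors' totals.

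The steps are as follows. First, observe that $\alpha_{jt}$ is initialized to $0$ for all $j,t$. Second, observe that the outer loop makes exactly one top-level call for each pair $(i,u)$ with $u \in S_i$, with parameter $\gamma = \hat{w}(u,t_i)$. Third, by Claim~\ref{claim:delays_dual_obj_inc}, each such call, including all of its nested recursion, raises $\sum_{j,t}\alpha_{jt}$ by exactly $\hat{w}(u,t_i)$. Summing over all top-level calls then gives
\[
\sum_{j,t}\alpha_{jt} = \sum_{i=1}^k \sum_{u\in S_i} \hat{w}(u,t_i).
\]

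The one subtle point is that "increases by $\gamma$" must be read additively. In the disjoint case the pseudocode writes "Set $\alpha_{j\tau} = \dots$". If a pair $(j,\tau)$ were charged by two different calls, a literal overwrite would lose value. We would therefore interpret the assignment as an increment, consistent with the returned pairs $(j,\gamma_j)$ and with the deadline analogue (Lemma~\ref{claim:dual_obj_inc}). This is exactly how Claim~\ref{claim:delays_dual_obj_inc} is stated, so the contributions of separate calls simply add and no further argument is needed. Beyond this reading convention, the proof is immediate bookkeeping and has no real obstacle.
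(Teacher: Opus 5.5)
Your proposal is correct and matches the paper's argument: starting from $\alpha\equiv 0$, you sum Claim~\ref{claim:delays_dual_obj_inc} over the top-level calls $\Charge(u,T_i,\hat{w}(u,t_i))$ for $u\in S_i$, and recursive contributions are not double counted because the claim already includes them. Your reading of the ``Set $\alpha_{j\tau}=\ldots$'' step as an increment is also consistent with the paper's usage, since the claim and Claim~\ref{claim:delays_limited_charges} both treat charges additively and $\Charge(v,T,\cdot)$ can be invoked several times for the same pair $(v,T)$.
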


\paragraph{Feasibility of the dual.} The rest of the section is devoted to proving that the constructed dual solution is feasible, which immediately implies Lemma~\ref{lem:opt-bound-delays}. 

Recall that each call to
$\Charge(u, T, \gamma)$ returns a set~$B$ of
(request, amount) pairs: each pair $(j, \gamma') \in B$
records that $\alpha_j$ was increased by $\gamma'$, either
directly in this call or through a recursive sub-call. Just as in the deadlines setting, we utilize the attributed payments $\delta_{vtj}$ to argue about feasibility of the constructed dual solution. These payments are defined as follows. 

\begin{definition}[Assignment variables]
\label{def:delta-delay}
All variables $\delta_{u\tau j}$ and $\delta_{u\tau}$ are initialized to $0$.
For each call $\Charge(u, T, \gamma)$ executed during
Algorithm~\ref{alg:delays-analysis}, let $t$ be the
transmission time of~$T$, let $t_{\prev}$ be the last
time~$u$ was transmitted before~$t$, and let $B$ denote
the set of pairs returned by this call. Then for each $\tau \in (t_{\prev}, t]$, we increment $\delta_{u \tau} \leftarrow \delta_{u \tau} + \gamma$ and for each
$(j, \gamma') \in B$ and each $\tau \in (t_{\prev}, t]$,
we increment
\[
  \delta_{u\tau j} \;\leftarrow\; \delta_{u\tau j} + \gamma'.
\]
\end{definition}

We can now argue feasibility. Lemma \ref{lem:delays_feas_deltas_alphas} says that every $\alpha_j$ is being attributed to some node $v$ on its path for the entire time it is alive via $\delta_{vtj}$. This implies that it suffices to reason about the feasibility of the $\delta_{vtj}$s. Lemma \ref{lem:delays_feas_deltas_deltas} states that at every node $v$ and time instant $t$, the sum of $\delta_{vtj}$ for all $j$ is exactly equal to $\delta_{vt}$. Therefore, it suffices to reason about the feasibility of the $\delta_{vt}s$. Lemma \ref{lem:delays_feas_deltas_weights} states that all $\delta_{vt}s$ are feasible (less than $w(v)$) at every time step $t$. Finally, Lemma \ref{lem:delays_config_feasibility} shows that the above three lemmas are sufficient to prove feasibility for the dual of the configuration LP.  

\begin{restatable}{lemma}{delays_delta_cover_alpha}
\label{lem:delays_feas_deltas_alphas}
For every $j \in \mathcal{J}$ and $t \in [a_j, \freezetime(j)]$, $\sum_{v \in P(j)} \delta_{vtj} \ge \int_{t}^{\infty} \alpha_{j\tau}d\tau$.
\end{restatable}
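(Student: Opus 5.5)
Fix a request $j$ and a time $t \in [a_j, \freezetime(j)]$. The right-hand side $\int_t^\infty \alpha_{j\tau}\,d\tau$ is a sum of contributions, one for each ``leaf'' charge. A leaf charge is a call $\Charge(u,T,\gamma)$ in which $u$ is disjoint in $T$ and $j \in \patrons(u,T)$. Such a call sets $\alpha_{j\tau}$ on $I(j,u,T)$ and reports the pair $(j,\gamma_j)$ with $\gamma_j = \delay(j,I(j,u,T))\cdot\gamma/w(u)$. Its contribution to the right-hand side is therefore at most $\gamma_j$; it is exactly $\gamma_j$ when $I(j,u,T)$ lies after $t$, and $0$ when it lies entirely before $t$.

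The plan is to show that every leaf charge whose interval meets $[t,\infty)$ is accounted for by some $\delta_{vtj}$ with $v \in P(j)$. Concretely, consider the chain of nested calls
\[
\Charge(u_1,T_1,\cdot)\to\Charge(u_2,T_2,\cdot)\to\cdots\to\Charge(u_m,T_m,\cdot)=\Charge(u,T,\gamma).
\]
The pair $(j,\gamma_j)$ is returned by every call in the chain, so by Definition~\ref{def:delta-delay} each $u_\ell$ receives $\gamma_j$ in $\delta_{u_\ell \tau j}$ for all $\tau \in (t_{\prev}(u_\ell,T_\ell), t_\ell]$. All $u_\ell$ lie on $P(j)$, because each recursive step passes to a descendant that $u_{\ell-1}$ invested in, and $j$ sits in the subtree of $u_m$. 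So it suffices to show that some window $(t_{\prev}(u_\ell,T_\ell),t_\ell]$ contains $t$. If two windows of the same chain contained $t$ with distinct nodes, we would only get a larger left-hand side, which is harmless since we need only ``$\ge$''.

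The covering argument goes as follows. At the bottom of the chain, $u_m$ is disjoint, so $a_j > t_{\prev}(u_m,T_m)$; since $t \ge a_j$, we have $t > t_{\prev}(u_m, T_m)$. Now let $\ell$ be the smallest index with $t > t_{\prev}(u_\ell,T_\ell)$. I claim $t \le t_\ell$. There are two cases.

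If $\ell = 1$, we need $t \le t_1$. The top-level call is made for the tree $T_1$, and the interval $I(j,u,T)$ has to end by $\freezetime(u,T)$. Applying Freezetime Dominance (Lemma~\ref{lem:freezetime-dominance}) repeatedly along the chain gives $\freezetime(u,T) \le \freezetime(u_1,T_1)$. We then need $\freezetime(u_1,T_1)\le t_1$, or we need to restrict attention to $t$ with $I(j,u,T)\cap[t,\infty)\neq\emptyset$; this forces $t \le \freezetime(u,T) \le \freezetime(u_1,T_1)$. To finish, I would use the definitional property from Algorithm~\ref{alg:earliest_arrival_first_assignment} that a vertex's charging intervals end no later than the tree's transmission time $t_1$.

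If $\ell > 1$, then $t \le t_{\prev}(u_{\ell-1},T_{\ell-1})$ by minimality of $\ell$. Here $T_\ell$ is the next tree containing $u_\ell$ at or after $T_{\prev}(u_{\ell-1})$, so $t_\ell \ge t_{\prev}(u_{\ell-1}) \ge t$. Hence $t \in (t_{\prev}(u_\ell,T_\ell), t_\ell]$.

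Summing over all leaf charges to $j$ then gives the inequality. Contributions from leaves whose intervals lie entirely before $t$ are dropped from the right-hand side and remain nonnegative on the left.

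The main obstacle is the $\ell=1$ case: showing that a leaf interval meeting $[t,\infty)$ forces $t \le t_1$. This requires the chained Freezetime Dominance, together with the correct relation between $\freezetime(u_1,T_1)$ and $t_1$ as defined in Section~\ref{sec:patrons}. If freezetimes can exceed transmission times, I would instead try the largest $\ell$ with $t \le t_\ell$. I would then use dominance to argue $t \le \freezetime(u_{\ell},T_\ell)$ and that the next window in the chain starts no later than $t$.
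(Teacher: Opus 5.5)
Your decomposition into leaf charges and the covering argument along the chain of nested $\Charge$ calls are the paper's proof, with the indexing reversed. Disjointness at the innermost call gives $a_j > t_{\prev}$ there. The definition of $T^v_{\nxt}$ gives $t_{\prev}(u_{\ell-1},T_{\ell-1}) \le t_\ell$. Your ``smallest $\ell$'' argument is a careful rendering of the paper's claim that the windows $(t_{\prev},t]$ cover everything from $a_j$ up to the top-level transmission time.

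The gap is the step you flag yourself, $t \le t_1$, and the justification you propose does not hold. Algorithm~\ref{alg:earliest_arrival_first_assignment} only guarantees $I(j,v,T)\subseteq[a_j,\freezetime(j)]$. For a vertex added inside a $\Simulate$ call, the freezetime is a saturation time of its sponsor set, and this can lie after the transmission time. In Figure~\ref{fig:freezetime_example}, a simulation at time $0$ produces freezetimes $1$, $2$, $4$, and the paper remarks that patron intervals can extend past service times. So there is no general property that charging intervals end by the transmission time, and your hedged fallback is not needed either.

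The missing idea is that top-level calls $\Charge(u_1,T_1,\hat w(u_1,t_1))$ are made only for $u_1 \in S_1$, the critical subtree. A vertex added in Step~\ref{step:add-critical} has, by the definition of freezetime, $\freezetime(u_1,T_1) = t_1$ exactly. Combine this with $\freezetime(j)=\freezetime(u,T)$ (a patron always has the freezetime of the vertex it pays for) and the chained Freezetime Dominance you already invoke. This gives $t \le \freezetime(j) \le \freezetime(u_1,T_1) = t_1$, which settles the case $\ell = 1$. Your side condition $I(j,u,T)\cap[t,\infty)\neq\emptyset$ is superfluous, because $t\le\freezetime(j)$ is part of the hypothesis. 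With this one fact inserted, your argument is complete and coincides with the paper's.
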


\begin{proof}
Fix a request $j$. We will prove the claim for all $t\in [a_j, \freezetime(j)]$ by induction over the course of Algorithm~\ref{alg:delays-analysis}. At the beginning of the algorithm, all $\alpha$s and $\delta$s are zero, so the claim holds.

Consider an iteration where the algorithm increases some $\alpha_{j\tau}$ value. Let $\Charge(u_1, T_1, \gamma_1)$ be the innermost call to $\Charge$ that makes this increase. Observe that $u_1$ is disjoint in $T_1$. We will now consider the sequence of nested charge calls that lead to the call $\Charge(u_1, T_1, \gamma_1)$. Let this sequence be of length $k$. The $i$th call in this sequence is denoted $\Charge(u_i, T_i, \gamma_i)$, and for $i>1$, $\Charge(u_i, T_i, \gamma_i)$ calls $\Charge(u_{i-1}, T_{i-1}, \gamma_{i-1})$. 

Let $T_{prev, i}$ denote the last tree strictly before $T_i$ where $u_i$ was transmitted. Let $t_i$ denote the time at which $T_i$ is sent, and $t_{prev, i}$ the time at which $T_{prev, i}$ is sent. 
Observe that $\Charge(u_1, T_1, \gamma_1)$ increases $\alpha_{j \tau}$ over the interval $I(j, u_1, T_1)$ for a total amount of $\gamma_1$. In the sequence of $\Charge$ calls defined above, for each $u_i$, $i\in [k]$, we increase $\delta_{u_i\tau j}$ 
by the same $\gamma_1$ amount over the interval $(t_{prev, i}, t_i]$. In order to prove the lemma, we will argue that these intervals together cover the entire interval $[a_j, \freezetime(j)]$.

In order to see this, we first observe that since $u_1$ is disjoint in $T_1$, $a_j>t_{prev, 1}$. Furthermore, for $i>1$, $t_{prev, i}\le t_{i-1}$ by the definition of $T_{i-1}$ in Line~\ref{line:tnext}. 

To conclude the proof we will argue that $\freezetime(j)\le t_k$. First, note that $u_k$ is critical in $T_k$, so $\freezetime(u_k, T_k) = t_k$. Next, recall that for $i\in \{2, \cdots, k\}$, we have that $u_i$ is overlapping in $T_i$. Moreover, $u_i$ invested in $u_{i-1}$ in $T_{prev, i}$. Then, we can apply Lemma~\ref{lem:freezetime-dominance} to obtain:
\[\freezetime(u_{i-1}, T_{i-1})\le\freezetime(u_i, T_i).\]
Recalling that $\freezetime(j)= \freezetime(u_1, T_1)$ and $\freezetime(u_k, T_k)=t_k$, and combining these inequalities implies the claim.
\end{proof}

The following lemma follows from simple bookkeeping.

\begin{lemma}\label{lem:delays_feas_deltas_deltas}
For all vertices $v$ and times $t$, $\sum_{j \in \mathcal{J}} \delta_{vtj} = \delta_{vt}$.
\end{lemma}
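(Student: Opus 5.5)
The plan is an induction on the tree of recursive $\Charge$ calls, showing that every call $\Charge(u, T, \gamma)$ returns a set $B$ of (request, amount) pairs whose amounts sum to exactly $\gamma$. Once this invariant holds, the identity follows by linearity. In Definition~\ref{def:delta-delay}, each call adds $\gamma$ to $\delta_{u\tau}$ for every $\tau\in(t_{\prev},t]$, and it adds $\gamma'$ to $\delta_{u\tau j}$ for every $(j,\gamma')\in B$ over the same interval. Summing over $j$, each call therefore adds $\sum_{(j,\gamma')\in B}\gamma'$ to $\sum_j \delta_{u\tau j}$ and $\gamma$ to $\delta_{u\tau}$. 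These two increments agree, and all variables start at zero. Hence, summing over all calls with node $u=v$ whose interval contains $t$ gives $\sum_j\delta_{vtj}=\delta_{vt}$. One should check that the same request $j$ cannot appear twice in $B$ in a way that breaks the sum. Since both sides are additive over pairs, duplicates are harmless as long as each pair contributes once; I would treat $B$ as a multiset to be safe.

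The invariant itself is proved by strong induction on recursion depth. The recursion is finite, because each recursive call is on a strictly deeper node $v\in A_u\subseteq \tree_u\setminus\{u\}$.

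In the \emph{disjoint base case}, $B=\{(j,\gamma_j)\}$ with $\gamma_j=\delay(j,I(j,u,T))\cdot\gamma/w(u)$. By Lemma~\ref{lem:patrons_equal_cost}(iii),(iv), $\sum_{j\in\patrons(u,T)}\delay(j,I(j,u,T))=w(u)$, so the amounts sum to $\gamma$. This is the same computation as in Claim~\ref{claim:delays_dual_obj_inc}.

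In the \emph{overlapping case}, $B=\bigcup_{v\in A_u}B_v$, and by induction each $B_v$ sums to $y(u,v,T_{\prev})\gamma/w(u)$. So it suffices to show $\sum_{v\in A_u}y(u,v,T_{\prev})=w(u)$, i.e.\ that $u$ exhausted its budget in $T_{\prev}$. This is the one step with content, and it is the one already used in the proof sketch of Claim~\ref{claim:delays_dual_obj_inc}. The argument goes as follows. Because $u$ is overlapping, some patron $j$ of $u$ in $T$ was pending at $t_{\prev}$ in $\tree_u$ but was not served by $T_{\prev}$. The while loop of $\Simulate(u,T_{\prev})$ terminates only when $b_u=0$ or no pending requests remain in $\tree_u$. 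The latter is false, since $j$ remained pending, so $b_u=0$ and all of $w(u)$ was invested.

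I expect this budget-exhaustion step to be the main (minor) obstacle. One must make sure that $j$ being unserved at the end of $T_{\prev}$ really means it was pending throughout $u$'s simulation, and that investments made by $u$ in its single simulation in $T_{\prev}$ are exactly the $y(u,v,T_{\prev})$ over $A_u$. Both follow from the structure of Algorithm~\ref{alg:multilevel_agg_delays}: requests are only ever removed by being served, and each node's simulation runs once per $\Add$. I would simply cite the argument from Claim~\ref{claim:delays_dual_obj_inc} rather than redo it.
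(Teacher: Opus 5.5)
Your proposal is correct and follows essentially the same route as the paper. You first show by induction over the recursion that each $\Charge(u,T,\gamma)$ returns pairs whose amounts sum to $\gamma$: the disjoint case uses Lemma~\ref{lem:patrons_equal_cost}, and the overlapping case uses budget exhaustion, $\sum_{v\in A_u} y(u,v,T_{\prev}) = w(u)$. You then conclude by matching the per-call increments to $\delta_{u\tau}$ and to $\sum_j \delta_{u\tau j}$ over $(t_{\prev},t]$. Your remark about treating $B$ as a multiset is a sensible clarification, since the paper's step $\sum_{(j,\gamma_j)\in B}\gamma_j=\sum_{v\in A_u}\sum_{(j,\gamma_j)\in B_v}\gamma_j$ implicitly assumes it.
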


\begin{proof}
We first claim that for all calls to $\Charge$ in Algorithm~\ref{alg:delays-analysis}, with input $(u,T,\gamma)$ and output $B$, we have $\gamma = \sum_{(j,\gamma_j)\in B} \gamma_j$. Using this claim, we will prove the statement of the lemma by induction over the course of Algorithm~\ref{alg:delays-analysis}. At the beginning of the algorithm, all $\delta$s are zero, so the base case holds.

Consider a $\Charge(u, T, \gamma)$ call. Observe that all of the $\delta_{u\tau}$ and $\delta_{u\tau j}$ variables that increase during this call (outside of recursive calls) only increase over the interval $(t_{\prev}, t]$ for $t_{\prev}$ and $t$ as defined in this iteration of $\Charge$. So, it suffices to prove that the increase in $\delta_{u\tau}$ is equal to the increase in $\sum_j\delta_{u\tau j}$ for $\tau\in (t_{\prev},t]$ in this iteration. But because $\delta_{u\tau}$ increases by $\gamma$ and $\delta_{u\tau j}$ increases by $\gamma_j$ for $j\in B$, the statement follows immediately from our claim that $\gamma = \sum_{(j,\gamma_j)\in B} \gamma_j$.

We will now prove the claim by induction over the recursive structure of the calls to $\Charge$. Consider such a call $\Charge(u,T,\gamma)$. Suppose that $u$ is disjoint in $T$. Then, we have 
\[\sum_{(j,\gamma_j)\in B} \gamma_j = \sum_{j\in \patrons(u,T)} \lambda(j,u,T)\cdot \frac{\gamma}{w(u)} = \gamma \cdot \frac{\sum_{j\in\patrons(u,T)} \lambda(j,u,T)}{w(u)} = \gamma\]
where the last equality follows from Lemma~\ref{lem:patrons_equal_cost}.

Suppose that $u$ is overlapping in $T$. Then, we have,
\[ \sum_{(j,\gamma_j)\in B} \gamma_j = \sum_{v\in A_u} \sum_{(j,\gamma_j)\in B_v} \gamma_j = \sum_{v\in A_u} y(u,v,T_{\prev}) \cdot \frac{\gamma}{w(u)} = \gamma\]
where the second equality follows from the induction hypothesis, and in the last equality we used the fact that the total amount invested by $u$ in $T_{\prev}$, $\sum_{v\in A_u} y(u,v,T_{\prev})$, is exactly equal to $w(u)$.
\end{proof}

Lemma~\ref{lem:delays_feas_deltas_weights} is also a simple bookkeeping argument, but requires a careful accounting of the possibly multiple calls to $\Charge(v,T,\cdot)$ for a single pair $(v,T)$. The following simple claim tallies the total charge corresponding to such a pair and we defer the proof to Section \ref{sec:deferred-proofs-delays}.

\begin{restatable}{claim}{delayslimitedcharges}
\label{claim:delays_limited_charges}
Fix a vertex $v$ and $T$. Suppose \Charge($v$, $T$, $\gamma_i$) is called $\kappa > 0$ times. Then $\sum_{i \in [\kappa]} \gamma_i \leq w(v)$.
\end{restatable}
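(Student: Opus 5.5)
We prove the claim by induction on the depth of $v$ in $\tree$, so we may assume it already holds for every strict ancestor of $v$ and every tree. The induction is well founded because a recursive call $\Charge(v,\cdot,\cdot)$ is only ever issued from inside a call $\Charge(u,\cdot,\cdot)$ where $u$ invested in $v$. Investments are made by $\Simulate(u,\cdot)$ only into nodes of $\tree_v^{Q'}\setminus T$ with $v\in\children(u)$ or deeper, so $u$ is always a strict ancestor of $v$.

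Fix $v$ and $T$, and let $t$ be the time $T$ is sent and $t_{\prev}(v)$ the previous time $v$ was sent. The calls $\Charge(v,T,\gamma_i)$ come in two kinds.
\begin{enumerate}[label=(\alph*)]
\item A top-level call from Line~\ref{line:main-charge-call}. This happens only if $v$ lies in the critical subtree of $T$, and it has $\gamma=\hat w(v,t)=w(v)-c_v(t)$.
\item A recursive call from some $\Charge(u,T_{\nxt},\gamma_u)$ with $u$ overlapping. Here $u$ invested $y(u,v,T')$ in $v$ during the tree $T'$ sent just before $T_{\nxt}$ among trees containing $u$, and $T=T^v_{\nxt}$ is the first tree at or after $T'$ containing $v$. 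Such a call has $\gamma=y(u,v,T')\cdot\gamma_u/w(u)$.
\end{enumerate}

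For the recursive calls, group them by the pair $(u,T_{\nxt})$. Every call $\Charge(u,T_{\nxt},\gamma_u)$ issues exactly one call to $v$ with this factor. By the induction hypothesis applied to $(u,T_{\nxt})$, the total is
\[\sum \gamma_u\cdot y(u,v,T')/w(u)\le y(u,v,T').\]
The pair $(u,T_{\nxt})$ determines $T'$, and distinct pairs give distinct investment events $(u,v,T')$. Hence the total over all recursive calls is at most the sum of all investments $y(u,v,T')$ received by $v$ in trees $T'$ whose next tree containing $v$ is $T$.

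These are precisely the investments $v$ received after its counter was last reset, which happened when $v$ was added in the tree at $t_{\prev}(v)$, and before $v$ is added to $T$. Once $v\in T$ it is excluded from every later $T_{Q'}=\tree^{Q'}_{v'}\setminus T$, so it receives nothing more in $T$. The sum of these investments therefore equals the value of $c_v$ at the moment $v$ is added to $T$.

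Two cases remain.
\begin{itemize}
\item If $v$ is in the critical subtree $S$ of $T$, it is added in Line~\ref{step:add-critical} before any simulation in $T$ runs, so the recursive total is at most $c_v(t)$. The top-level call contributes $w(v)-c_v(t)$, for a sum of at most $w(v)$.
\item Otherwise there is no top-level call, and $v$ is added via Line~\ref{step:add-noncritical} exactly when its remaining cost $w(v)-c_v$ is paid in full. Total investment since the reset is then exactly $w(v)$, so the recursive total is at most $w(v)$.
\end{itemize}

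The main care point is the bookkeeping that different recursive call sites correspond to distinct investment events, so no investment is double-counted. For this I would check that the map $(u,T_{\nxt})\mapsto(u,T')$ is injective and that $T'$ ranges only over trees in the window $(t_{\prev}(v),t]$. This follows from the definition of $T^v_{\nxt}$ in Line~\ref{line:tnext}, which includes $T_{\prev}$ itself, together with the resetting of $c_v$ in $\Add$.
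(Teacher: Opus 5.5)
Your proposal is correct and follows essentially the same route as the paper. Both arguments use top-down induction over tree levels: the top-level call contributes $\hat w(v,t)$, and each ancestor's recursive calls are bounded, via the induction hypothesis, by its investment $y(u,v,T')$, so the total is at most $\hat w(v,t)+c_v(t)=w(v)$. Your bookkeeping is a bit tighter than the paper's indexing over trees of $u$ in $(t_0,t]$. You group by investment events whose next tree containing $v$ is $T$, and you split the critical and non-critical cases explicitly, so you also account for investments $u$ makes in $v$ inside $T$ itself before $v$ is bought.
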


\begin{lemma}\label{lem:delays_feas_deltas_weights}
For all vertices $v$ and times $t$, $\delta_{vt} \leq w(v)$.
\end{lemma}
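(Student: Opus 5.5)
We prove the bound $\delta_{vt} \le w(v)$ by showing that, for a fixed vertex $v$ and time $t$, all contributions to $\delta_{vt}$ come from calls $\Charge(v, T, \gamma)$ for one and the same tree $T$. Then Claim~\ref{claim:delays_limited_charges} finishes the argument.

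First, by Definition~\ref{def:delta-delay}, $\delta_{vt}$ is incremented only by calls of the form $\Charge(v, T, \gamma)$. Such a call raises $\delta_{v\tau}$ by exactly $\gamma$ for $\tau \in (t_{\prev}, t_T]$, where $t_T$ is the transmission time of $T$ and $t_{\prev}$ is the previous time $v$ was transmitted strictly before $t_T$ (or $0$ if there is none). Recursive sub-calls on other vertices $v' \neq v$ touch only $\delta_{v'\tau}$, so they can be ignored.

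Second, fix $t$. The intervals $(t_{\prev}(T), t_T]$, as $T$ ranges over the trees containing $v$, partition the timeline up to the last transmission of $v$. This is because consecutive transmissions of $v$ at times $s_1 < s_2 < \cdots$ give the intervals $(0, s_1], (s_1, s_2], \dots$. Hence for any $t$ there is at most one tree $T$ containing $v$ whose interval contains $t$. Two cases remain.
\begin{itemize}
\item If no such tree exists (for example, $t$ is after the last transmission of $v$), then $\delta_{vt} = 0 \le w(v)$.
\item Otherwise, $\delta_{vt} = \sum_{i \in [\kappa]} \gamma_i$, where $\Charge(v, T, \gamma_i)$, $i \in [\kappa]$, are all the calls made on the pair $(v, T)$ during Algorithm~\ref{alg:delays-analysis}. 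Claim~\ref{claim:delays_limited_charges} then gives $\delta_{vt} \le w(v)$.
\end{itemize}

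One subtlety must be checked: every call $\Charge(v, T, \cdot)$ uses the same $t_{\prev}$, namely the last transmission of $v$ strictly before $T$. This holds whether the call comes from the top-level loop (line~\ref{line:main-charge-call}) or from a recursive call via $T^v_{\nxt}$ (line~\ref{line:tnext}). In both cases the interval is determined solely by the pair $(v,T)$, so the partition argument applies.

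The only real content is Claim~\ref{claim:delays_limited_charges}, which we are allowed to assume. The remaining work is this interval-partition bookkeeping, and we expect no obstacle beyond stating it carefully, including the boundary convention $t_{\prev} = 0$ for the first transmission of $v$.
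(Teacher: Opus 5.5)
Your proposal is correct and follows essentially the same argument as the paper. The paper likewise isolates the unique tree $T^v_{\nxt}$ sent at the earliest time $t_2 \ge t$ at which $v$ is transmitted, which is exactly the tree whose interval $(t_{\prev}, t_T]$ contains $t$ in your partition argument; it then notes that only calls $\Charge(v, T^v_{\nxt}, \cdot)$ contribute to $\delta_{vt}$ and concludes via Claim~\ref{claim:delays_limited_charges}, with your treatment of the empty case and of the common $t_{\prev}$ for a fixed pair $(v,T)$ spelling out bookkeeping the paper leaves implicit.
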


\begin{proof}
    Fix a $v$ and $t$ with $\delta_{vt}>0$. Define $t_1$ as the latest time strictly before $t$ when $v$ was transmitted, and $0$ if no such time exists. Define $t_2$ as the earliest time $\geq t$ when $v$ was transmitted; Such a time must exist if $\delta_{vt}>0$. Let $T_{\nxt}^v$ be the tree transmitted at time $t_2$. Observe that $\delta_{vt}$ is increased by $\gamma$ if and only if \Charge($v$, $T_{\nxt}^v$, $\gamma$) is called. The lemma then follows from Claim \ref{claim:delays_limited_charges} by summing $\gamma$ over all calls to \Charge($v$, $T_{\nxt}^v$, $\cdot$).
\end{proof}

\begin{lemma}\label{lem:delays_config_feasibility}
The $\alpha_{jt}$ variables form a feasible dual solution.
\end{lemma}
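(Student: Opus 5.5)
Our plan follows the proof of Lemma~\ref{lemma:config_feasibility}, with the tail integral $\int_t^\infty \alpha_{j\tau}\,d\tau$ taking the place of $\alpha_j$. There are two families of dual constraints, and we handle them separately.

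\emph{Per-request constraints.} We must show $0 \le \alpha_{jt} \le \delay(j,t)$. Nonnegativity is immediate, since $\alpha_{jt}$ only ever increases. For the upper bound, we collect every direct assignment to $\alpha_{j\tau}$. Each such assignment happens inside some call $\Charge(u,T,\gamma)$ with $u$ disjoint in $T$ and $j\in\patrons(u,T)$, and on $I(j,u,T)$ it contributes $\delay(j,\tau)\cdot \gamma/w(u)$. We should read the pseudocode's ``set'' as an increment, as Claim~\ref{claim:delays_dual_obj_inc} already does. Request $j$ is served by exactly one tree $T$, so it can be a patron only of nodes of that $T$. By Lemma~\ref{lem:patrons_equal_cost}(i), the intervals $I(j,v,T)$ for different $v\in T$ are disjoint. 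Hence for any fixed $\tau$, every contribution comes from a single pair $(u,T)$, possibly through several calls $\Charge(u,T,\gamma_i)$. By Claim~\ref{claim:delays_limited_charges}, $\sum_i \gamma_i \le w(u)$, so
\[
\alpha_{j\tau}\le \delay(j,\tau)\cdot \frac{\sum_i\gamma_i}{w(u)}\le \delay(j,\tau).
\]

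\emph{Configuration constraints.} Fix a rooted subtree $S$ and a time $t$, and let $J(S,t)=\{j: a_j\le t,\ v_j\in S\}$. We split $J(S,t)$ according to whether $t\le \freezetime(j)$. If $t>\freezetime(j)$, then Lemma~\ref{lem:patrons_equal_cost}(ii) places every charging interval of $j$ inside $[a_j,\freezetime(j)]$, and $\alpha_{j\tau}$ is nonzero only on these intervals. So $\int_t^\infty \alpha_{j\tau}\,d\tau=0$ and $j$ contributes nothing. For the remaining requests, $t\in[a_j,\freezetime(j)]$, so Lemma~\ref{lem:delays_feas_deltas_alphas} applies and gives
\[
\sum_{j\in J(S,t)}\int_t^\infty \alpha_{j\tau}\,d\tau \le \sum_{j}\sum_{v\in P(j)}\delta_{vtj}\le \sum_{v\in S}\sum_{j\in\mc{J}}\delta_{vtj}=\sum_{v\in S}\delta_{vt}\le \sum_{v\in S} w(v)=w(S).
\]
The second inequality uses two facts: $P(j)\subseteq S$ because $S$ is rooted and contains $v_j$, and all $\delta$'s are nonnegative, so extending the inner sum to all of $\mc{J}$ only increases it. The equality is Lemma~\ref{lem:delays_feas_deltas_deltas}, and the last inequality is Lemma~\ref{lem:delays_feas_deltas_weights}.

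The main obstacle is the per-request bound. We must make sure that different calls never stack charges on the same $(j,\tau)$ beyond $\delay(j,\tau)$. This reduces to the uniqueness of the serving tree, the disjointness in Lemma~\ref{lem:patrons_equal_cost}(i), and Claim~\ref{claim:delays_limited_charges}. We also need that patrons of $u$ in $T$ are requests served by $T$, which we expect to follow directly from the definition in Algorithm~\ref{alg:earliest_arrival_first_assignment}.
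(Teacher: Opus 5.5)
Your treatment of the configuration constraints is essentially the paper's proof: the same chain through Lemmas~\ref{lem:delays_feas_deltas_alphas}, \ref{lem:delays_feas_deltas_deltas} and \ref{lem:delays_feas_deltas_weights}, with the tail integral vanishing for $t>\freezetime(j)$ by Lemma~\ref{lem:patrons_equal_cost}(ii). The paper's proof stops there and never explicitly checks $\alpha_{jt}\le \delay(j,t)$. Your argument for that bound is correct and fills this omission: patrons of $(u,T)$ are served by $T$, Lemma~\ref{lem:patrons_equal_cost}(i) gives disjointness, and Claim~\ref{claim:delays_limited_charges} controls repeated calls $\Charge(u,T,\cdot)$ under the increment reading, which is the reading Claim~\ref{claim:delays_dual_obj_inc} requires.
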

\begin{proof}
    Consider any subtree $S\in \mathcal{S}$ and time $t$. Let $J(S, t) = \{j | a_j \leq t \text{ and } v_j \in S\}$ be the requests that contribute to the LHS of the dual constraint. 
    We now have the following:
\begin{align*}
    \sum_{j \in J(S, t)} \int_{t}^{\infty} \alpha_{j\tau}d\tau \le \sum_{j \in J(S, t)} \sum_{v \in P(j)} \delta_{vtj} \le \sum_{v \in S}\sum_{j\in \mathcal{J}} \delta_{vtj} \le \sum_{v \in S} \delta_{vt} \leq \sum_{v \in S} w(v) = \cost(S).
\end{align*}
where the first inequality follows from Lemma \ref{lem:delays_feas_deltas_alphas} along with the observation that $\int_{t}^{\infty} \alpha_{j\tau}d\tau = 0$ for all $t > \freezetime(j)$. The second inequality follows by recalling that $P(j)\subseteq S$ for all $j\in J(S,t)$ and $J(S,t) \subseteq \mc{J}$, and changing the order of summation over $v$ and $j$. The third and fourth inequalities follow from Lemmas \ref{lem:delays_feas_deltas_deltas} and \ref{lem:delays_feas_deltas_weights} respectively.
\end{proof}

\input{patrons}

\input{freezetime}

%% file: patrons.tex
\subsection{Patrons and Freezetimes}
\label{sec:patrons}
In this section we formally define freezetimes for vertices and requests, and provide an algorithm for assigning patrons to vertices satisfying Lemma~\ref{lem:patrons_equal_cost}. The proof of Lemma \ref{lem:patrons_equal_cost} can be found in Section \ref{sec:eaf_proof}.

We begin with some terminology.

\begin{definition}
    In our algorithm and analysis, there are several ways in which nodes and requests are associated with each other:
    \begin{itemize}
        \item {\bf Investment:} We say that a node $u$ \emph{invests} in another node $v$ in tree $T$ if $\Invest(u, v, \cdot)$ is called on Lines~\ref{line:invest1} or \ref{line:invest2} within $\Simulate(u, T)$.
        \item {\bf Sponsorship and witness sets:} If $\Invest(u, v, \cdot)$ is called within $\Simulate(u, T)$ due to the set $Q'$ of requests (defined on Line~\ref{line:witness} of the same iteration), we say that $\Simulate(u, T)$ \emph{witnesses} $Q'$ and $Q'$
        \emph{sponsors} $v$. 
    \end{itemize}
\end{definition}
We emphasize that the patrons of a vertex $v$ are not necessarily the same set of requests that sponsor the vertex. 
While sponsorship is determined in an online fashion during the algorithm's execution, patrons are determined in hindsight depending on which requests were eventually served by the algorithm in every iteration. We next define freezetimes.

\begin{definition}[Freezetime]
    Suppose that a vertex $v$ is added to a tree $T$ in Step~\ref{step:add-critical} of the algorithm and $T$ is serviced at time $t$. Then we define $\freezetime(v,T)\coloneq t$.

    Suppose that a vertex $v$ is added to a tree in Step~\ref{step:add-noncritical} of the algorithm. Let $T$ denote the eventual tree scheduled and $T_{part}$ denote the tree just before Step~\ref{step:add-noncritical} is executed. Let $Q'$ denote the sponsor set in this iteration, and $A = P(v)\setminus T_{part}$ denote the set of all ancestors of $v$, including $v$ itself, that are not in $T_{part}$. Then we define $\freezetime(v,T) \coloneq \max_{u\in A} \sattime(Q'_u,u)$. This is the earliest time $Q'$ saturates $v$ and all of its ancestors in $A$.

    Let $j$ be any request and let $T$ be the tree in which it is served. Then we define $\freezetime(j)\coloneq \freezetime(v_j, T)$.
\end{definition}

In Figure \ref{fig:freezetime_example}, we give an example illustrating how freezetime and reachtime are defined.

\begin{figure}[h]
    \centering
        \begin{adjustbox}{width=1\textwidth}
    \input{freeztime_tikz}
    \end{adjustbox}
    \caption{An illustration of a $\Simulate(r, T)$ call at time $0$. The nodes $r$ and $w$ already belong to tree $T$ prior to the call, whereas nodes $u, v, u_1,$ and $u_2$ are newly added to the tree during the execution of $\Simulate$. All nodes have a cost of $1$, except for node $w$ which has a cost of $7$. Each request $j_i$ is represented by a dashed square connected to its corresponding node of origin. Requests $j_1, j_2,$ and $j_3$ accumulate linear delay over time at  rates $2, 0.5$, and $0.25$, respectively. Note that $\reachtime(\{j_1, j_2, j_3\}, r) = 4$, since by time $t=4$, $j_1$ contributes $6$ and $j_2$ contributes $1$ unit of delay toward vertex $w$; Note that the some nodes reach their freezetimes earlier because node $w$ was already present in $T$ before the $\Simulate$ call began.}
    \label{fig:freezetime_example}
\end{figure}

\subsubsection{Construction of Patron Sets}

Intuitively, each node $u$ that is included in a transmitted tree $T$ is added to $T$ because the delay accumulated by some requests was sufficient to pay for the cost of $u$. In the course of the online algorithm, each node is either part of the critical subtree or is sponsored by some request set $Q'$ during a simulation. However, this sponsorship relationship is not convenient for dual construction. Instead, in this section, we define a different association between nodes and requests. For each node $u$ in a transmitted tree $T$, we associate a set of patron requests. For each request $j$ that is a patron of node $v$ (in tree $T$), we also associate a time interval $I(j, v, T) = [t_1, t_2] \subseteq [a_j, \freezetime(j)]$ to maintain that $\sum_{j \in \patrons(v, T)} \del(j, I(j, v, T)) \geq w(v)$. Additionally, the intervals $\{I(j, v, T)\}_{v, T}$ are disjoint for any request $j$.

Algorithm \ref{alg:earliest_arrival_first_assignment} gives a detailed pseudocode for construction of patron sets of a node and their associated intervals. Informally, suppose the delay accumulated by a set of requests $Q$ is sufficient to cover the cost of a set $V$ of nodes. Then we consider requests sequentially in order of earliest arrival first. For each request $j$, we then associate the delay incurred by the request from its arrival time $a_j$ to its freeze time with nodes in $V$ in a leaf-to-root order.

Observe that by construction for all $v,T$ and $j\in\patrons(v,T)$, we have $\freezetime(j)=\freezetime(v,T)$. Furthermore, properties (i)-(iii) in Lemma \ref{lem:patrons_equal_cost} are satisfied by construction. We prove property (iv) of the lemma in Section \ref{sec:eaf_proof}, showing that Algorithm \ref{alg:earliest_arrival_first_assignment} terminates with each node $v$ fully paid for by the delay incurred by its patrons. 

\begin{algorithm}[H]
\caption{Earliest Arrival First Assignment of Patrons.}
\label{alg:earliest_arrival_first_assignment}
\begin{algorithmic}[1]
\Statex \textbf{Initialization:} 
\ForAll{$j \in \mathcal{J}$}
    \State $t_j \gets a_j$ \Comment{pointer to left endpoint of unassigned interval of request $j$.}
    \State $\paid(j) \gets 0$
    \State $\Lambda(j) \gets \del(j, [a_j, \freezetime(j)])$ \Comment{budget of request $j$.}
\EndFor
\ForAll{$(v, T)$}
  \State $\patrons(v, T) \gets \emptyset$
\EndFor

\Statex

\For{tree $T$ in $T_1, T_2, \ldots, T_k$}
  \State $Q\gets$ set of requests served in $T$
    \State $U \gets$ nodes $v\in T$ sorted first by increasing $\freezetime(v,T)$ and then in leaf-to-root order 
      \ForAll{$v \in U$ in sorted order}
        \State $J_v \gets \{j\in Q_v: \paid(j) < \Lambda(j) \text{ and } \freezetime(j)=\freezetime(v,T)\}$ \label{step:eaf-job-set} 
        \Statex \Comment{requests eligible to pay for $v$.}
        \State $\eta \gets w(v)$ \Comment{requirement of $v$.}
        \While{$\eta > 0$ and $J_v\ne \emptyset$} \label{line:eta_reaches_zero}
          \State $j \gets \argmin_{j \in J_v} \{a_j\}$ \label{line:find_j_from_Jv}
          \State $\lambda(j, v, T) \gets \min\{\Lambda(j) - \paid(j),\ \eta\}$ 
          \Statex \Comment{$j$ pays either its remaining budget or $v$'s remaining requirement.}
          \State Find $t' > t_j$ s.t. $\del(j, [t_j, t']) = \lambda(j, v, T)$
          
          \State $I(j, v, T) \gets [\,t_j,\ t'\,]$
          \State $\patrons(v, T) \gets \patrons(v, T) \cup\{ j\}$
          \State $\eta \gets \eta - \lambda(j,v,T)$
          \State $\paid(j) \gets \paid(j) + \lambda(j,v,T)$
          \State $t_j \gets t'$
          \State $J_v \gets J_v \setminus \{j\}$
        \EndWhile
      \EndFor
    \EndFor
\end{algorithmic}
\end{algorithm}

%% file: freeztime_tikz.tex

\begin{tikzpicture}[
    node distance=1.2cm and 1.2cm,
    base_circle/.style={
        circle, 
        draw=black, 
        thick, 
        minimum size=6mm, 
        inner sep=0pt,
        font=\sffamily\bfseries
    },
    base_square/.style={
        rectangle, 
        draw=gray,        
        fill=none,        
        dashed,           
        rounded corners=3pt, 
        thick, 
        minimum width=7mm, 
        minimum height=7mm, 
        inner sep=0pt,
        font=\small\sffamily
    },
    root/.style={base_circle, draw=cyan}, 
    w/.style={base_circle, draw=barred}, 
    u/.style={base_circle, draw=blue},
    v/.style={base_circle, draw=magenta},
    uc1/.style={base_circle, draw=orange},
    uc2/.style={base_circle, draw=green!60!black},
    root_small/.style={base_circle, draw=green!60!black, font=\small\sffamily, minimum size=4mm},  
    mid_small/.style={base_circle, draw=blue, font=\small\sffamily,         minimum size=4mm}, 
    leaf_small/.style={base_circle, draw=red, font=\small\sffamily,         minimum size=4mm}, 
    small_circle/.style={base_circle, draw=gray, font=\small\sffamily}, 
    edge/.style={-, thick},
    tiny_circle/.style={base_circle, draw=gray, font=\small\sffamily,  minimum size=4mm}, 
]

\definecolor{barred}{RGB}{220,80,80}
\definecolor{barblue}{RGB}{80,130,200}
\definecolor{bargreen}{RGB}{80,180,100}
\definecolor{barpurple}{RGB}{160,80,200}
\definecolor{baryellow}{RGB}{220,200,80}
\definecolor{barorange}{RGB}{230,140,50}

    \def\xr{9.5}
    \def\yr{9}

\node[
    ellipse, 
    draw=gray, 
    fill=gray!15, 
    dashed, 
    thick, 
    minimum width=1.5cm, 
    minimum height=2.75cm, 
    rotate=-45,  
    label={[text=gray!80!black]left:$T$} 
] at (\xr-0.5, \yr-0.5) {}; 

    \node[root] (r) at (\xr, \yr) {r};
    
    \node[w] (w) at (\xr-1,\yr-1) {$w$};

   
    \draw[edge] (r) -- (w);

    \node[u] (u) at (\xr-2,\yr-2) {u}; 
    \node[v] (v) at (\xr,\yr-2) {$v$};

    \node[uc1] (u1) at (\xr-2.5,\yr-3) {$u_1$};
    \node[uc2] (u2) at (\xr-1.5,\yr-3) {$u_2$};

    \draw[edge] (w) -- (u); 
    \draw[edge] (w) -- (v);
    \draw[edge] (u) -- (u1);
    \draw[edge] (u) -- (u2); 

    \node[base_square] (j2)  at (\xr-1.5,\yr-4)  {$j_2$};
    \node[base_square] (j1)  at (\xr-2.5,\yr-4)  {$j_1$}; 
    \node[base_square] (j3)  at (\xr,\yr-3)  {$j_3$};

    \draw[draw=gray, dashed, thick] (v.south) -- (j3.north);

    \draw[draw=gray, dashed, thick] (u1.south) -- (j1.north);
    \draw[draw=gray, dashed, thick] (u2.south) -- (j2.north);

        \definecolor{barred}{RGB}{220,80,80}
        \definecolor{barblue}{RGB}{80,130,200}
        \definecolor{bargreen}{RGB}{80,180,100}
        \definecolor{barpurple}{RGB}{160,80,200}
        \definecolor{baryellow}{RGB}{220,200,80}
        \definecolor{barorange}{RGB}{230,140,50}
        
        \def\off{-2}
        
        \node[left] at (0,4.85-\off)  {\tiny $j_1$};
        \node[left] at (0, 3.85-\off)  {\tiny $j_2$};
        \node[left] at (0, 3.1-\off)  {\tiny $j_3$};
        
        \draw[thick] (0.0, 4.85-\off) -- (4, 4.85-\off); 
        \draw[thick] (0.0, 3.85-\off) -- (4, 3.85-\off); 
        \draw[thick] (0.0, 3.1-\off) -- (4, 3.1-\off); 

        \fill[orange, draw=gray, line width=1pt]     (0.0, 7-\off) rectangle (.5, 5.05-\off);
        \fill[blue, draw=gray, line width=1pt]   (.51, 7-\off) rectangle (1, 5.05-\off);
        \fill[red, draw=gray, line width=1pt]   (1.01, 7-\off) rectangle (4, 5.05-\off);
        \fill[bargreen, draw=gray, line width=1pt]    (0, 4.55-\off) rectangle (2, 4.05-\off);
        \fill[red, draw=gray, line width=1pt]    (2.01, 4.55-\off) rectangle (4, 4.05-\off);
        \fill[magenta, draw=gray, line width=1pt]    (0, 3.5-\off) rectangle (4, 3.25-\off);

        \draw [thick, decorate, decoration={brace, amplitude=5pt}, draw=gray!80!black] 
    (-.1, 5.05-\off) -- (-0.1, 7.05-\off)   
    node [midway, left=21pt, text=gray!80!black] {$j_1$ (rate = 2)};

      \draw [thick, decorate, decoration={brace, amplitude=5pt}, draw=gray!80!black] 
    (-0.1, 4.05-\off) -- (-0.1, 4.55-\off) 
    node [midway, left=14pt, text=gray!80!black] {$j_2$ (rate = 0.5)};

      \draw [thick, decorate, decoration={brace, amplitude=5pt}, draw=gray!80!black] 
    (-0.1, 3.25-\off) -- (-0.1, 3.5-\off) 
    node [midway, left=10pt, text=gray!80!black] {$j_3$ (rate = 0.25)};

        \foreach \x/\t in {0,.5,1,2,3,4} {
            
            \draw[thick] (\x, 2.7-\off) -- (\x, 2.8-\off);
            
            \node[below] at (\x, 2.7-\off) {$\t$};
         }
           \draw[->, thick] (0,2.75-\off) -- (4.5,2.75-\off) node[right] {\textbf{Time}};
         \draw [thick, decorate, decoration={brace, amplitude=5pt}, draw=gray!80!black] 
    (0, 7.75-\off) -- (4, 7.75-\off) 
    node [midway, above=4pt, text=gray!80!black] {Freezetime in $T$};
         \draw[->, >=stealth, thick, text=gray!80!black, gray, dashed] (1, 7.4-\off) -- (1, 7-\off) node[above=7pt] {$u_1, u$};
      \draw[->, >=stealth, thick, text=gray!80!black, gray, dashed] (2, 7.4-\off) -- (2, 4.6 -\off) node[above=76pt] {$u_2$};
          \draw[->, >=stealth, thick, text=gray!80!black, gray, dashed] (4, 7.3-\off) to [bend left](4, 3.6 -\off) ;
          \node[gray] at (4, 7.5-\off){$v$};

\end{tikzpicture}

%% file: freezetime.tex
\subsection{Proof of the Freezetime Dominance Lemma (Lemma~\ref{lem:freezetime-dominance})}
\label{sec:freezetime_analysis}

We begin by stating a series of claims that establish consistency properties about the sets of requests serviced across different iterations of the algorithms as well as the reachtimes and freezetimes they determine. 

\begin{restatable}{lemma}{sattimemonotonicity}{\em [Monotonicity of Saturation Times]}
\label{lem:sattime-monotone}
Saturation times satisfy the following monotonicity properties:
\begin{itemize}
    \item $\sattime(\hat{Q}, u) \leq \sattime(Q, u)$ if $Q \subseteq \hat Q$.
    \item Let $u \in \tree$ and $Q$ be a set of requests such that $\del(Q, [\sattime(Q, u)]) = w(\tree_u^Q)$. Then for any $v \in \tree_u^Q$, $\del(Q_v, [\sattime(Q, u)]) \geq w(\tree_v^{Q_v})$.
    \item Let $u \in \tree$ and $Q$ be a set of requests such that $\del(Q, [\sattime(Q, u)]) = w(\tree_u^Q)$. Then for any $v \in \tree_u^Q$, 
    $\sattime(Q, v) \leq \sattime(Q, u)$.
\end{itemize}
\end{restatable}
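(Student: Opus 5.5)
All three parts follow directly from the definition of $\sattime$ as a minimum over witness subsets, together with the fact that cumulative delay $\del(Q',[t])$ is nondecreasing and continuous in $t$ (instantaneous delays are nonnegative). Write $f_{Q'}(t) := \del(Q',[t]) - w(\tree_u^{Q'})$. At $t=0$ we have $f_{Q'}(0) = -w(\tree_u^{Q'}) \le 0$, and $f_{Q'}$ is continuous and nondecreasing. Hence the set $\{t : f_{Q'}(t)=0\}$, when nonempty, is a closed interval whose left endpoint is the first time $f_{Q'}$ reaches $0$. Consequently $\sattime(Q,u)$ is the minimum over nonempty $Q'\subseteq Q_u$ of the first hitting time of $0$ by $f_{Q'}$.

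For the first bullet, any $Q'\subseteq Q_u$ also satisfies $Q'\subseteq \hat Q_u$, because $Q\subseteq\hat Q$. The minimum defining $\sattime(\hat Q,u)$ therefore ranges over a superset of the candidates for $\sattime(Q,u)$, which gives $\sattime(\hat Q,u)\le\sattime(Q,u)$ immediately.

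For the second bullet, set $t^*=\sattime(Q,u)$ and argue by contradiction. Suppose some $v\in\tree_u^Q$ has $\del(Q_v,[t^*]) < w(\tree_v^{Q_v})$. Note that $Q_v\neq\emptyset$: since $v$ lies in the subtree spanned by $u$ and the request locations, some request of $Q$ sits in $\tree_v$. Consider $Q'' := Q_u\setminus Q_v$, and split into two cases.

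\emph{Case 1: $Q''=\emptyset$.} Then $Q_u=Q_v$, so $\tree_u^Q$ is the path from $u$ to $v$ together with $\tree_v^{Q_v}$. This gives
\[ \del(Q,[t^*]) = w(\tree_u^Q) \ge w(\tree_v^{Q_v}) > \del(Q_v,[t^*]) = \del(Q,[t^*]), \]
a contradiction.

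\emph{Case 2: $Q''\neq\emptyset$.} Removing $Q_v$ deletes at most the nodes of $\tree_v^{Q_v}$, so $w(\tree_u^{Q''}) \ge w(\tree_u^Q) - w(\tree_v^{Q_v})$. Combining this with $\del(Q'',[t^*]) = w(\tree_u^Q) - \del(Q_v,[t^*])$ yields $\del(Q'',[t^*]) > w(\tree_u^{Q''})$, so $f_{Q''}(t^*)>0$. By continuity and $f_{Q''}(0)\le 0$, the function $f_{Q''}$ hits $0$ strictly before $t^*$, contradicting the minimality of $t^*$. In the edge case where $f_{Q''}(0)=0$ the hitting time is $0<t^*$, which is still a contradiction unless $t^*=0$; the case $t^*=0$ is trivial because all delays vanish and the weights must be zero.

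The main obstacle is this exchange step. It needs $\tree_u^{Q''}\supseteq \tree_u^Q\setminus \tree_v^{Q_v}$, which I would justify by observing that every node of $\tree_u^Q$ outside $\tree_v$ either lies on the path from $u$ to a request of $Q''$ or is an ancestor of $v$. Ancestors of $v$ are not covered by this argument: such nodes may drop out of the subtree, which only makes $w(\tree_u^{Q''})$ smaller and so helps the inequality. The containment I actually need is therefore only that the nodes removed are those of $\tree_v^{Q_v}$ together with some path nodes, and that direction is fine.

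For the third bullet, apply the second bullet at $t^*$. Since $\del(Q_v,[t^*]) \ge w(\tree_v^{Q_v})$ and $f_{Q_v}(0)\le 0$, continuity gives some $t\le t^*$ with $\del(Q_v,[t]) = w(\tree_v^{Q_v})$. Because $Q_v$ is a valid candidate subset in the definition of $\sattime(Q,v)$, we conclude $\sattime(Q,v)\le t\le t^*=\sattime(Q,u)$.
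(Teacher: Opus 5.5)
Your route is the paper's: the first bullet from inclusion of candidate sets, the second by deleting $Q_v$ and contradicting the minimality of $t^*$, and the third from the second plus the intermediate value theorem. Your explicit continuity argument and the separate treatment of $Q''=\emptyset$ are useful additions. However, the central step of Case 2 is written with the inequality reversed.

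You claim $w(\tree_u^{Q''}) \ge w(\tree_u^Q) - w(\tree_v^{Q_v})$ because removing $Q_v$ ``deletes at most the nodes of $\tree_v^{Q_v}$''. In fact, when $v\neq u$ (which holds in Case 2), removing $Q_v$ deletes \emph{all} of $\tree_v^{Q_v}$. It may also delete ancestors of $v$ strictly below $u$: for instance, if the parent $x\neq u$ of $v$ has positive weight and no request of $Q''$ lies in $\tree_x$, then $x$ drops out. So the inequality you wrote is false in general, and so is the containment $\tree_u^{Q''}\supseteq\tree_u^Q\setminus\tree_v^{Q_v}$ you propose to justify it. Even if it held, it could not be combined with $\del(Q'',[t^*]) > w(\tree_u^Q)-w(\tree_v^{Q_v})$ to give $\del(Q'',[t^*])>w(\tree_u^{Q''})$, since that conclusion needs an \emph{upper} bound on $w(\tree_u^{Q''})$. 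As written, the step is a non sequitur.

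The fix is the opposite containment, which your final paragraph gestures at. Every node of $\tree_u^{Q''}$ is either $u$ or lies on a path from $u$ to a request outside $\tree_v$, so $\tree_u^{Q''}\subseteq\tree_u^Q\setminus\tree_v$. Since $\tree_v^{Q_v}\subseteq\tree_u^Q\cap\tree_v$, this gives $w(\tree_u^{Q''})\le w(\tree_u^Q)-w(\tree_v^{Q_v})$. Hence $\del(Q'',[t^*])>w(\tree_u^Q)-w(\tree_v^{Q_v})\ge w(\tree_u^{Q''})$, and your IVT argument finishes the contradiction. The paper writes this step as an equality; only this direction is true and needed.

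One further point: the identity $\del(Q'',[t^*])=w(\tree_u^Q)-\del(Q_v,[t^*])$, and in Case 1 the equality $\del(Q_v,[t^*])=\del(Q,[t^*])$, both use $Q=Q_u$. The paper's proof assumes this implicitly too, but you should state it, since the hypothesis is phrased with $\del(Q,\cdot)$ summed over all of $Q$. With these corrections your proof is complete.
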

\begin{proof}
    The first property is true by definition since any $Q' \subseteq Q_u$ that attains the minimum in $\sattime(Q, u) = \min\{ t \mid \exists Q' \subseteq Q_u \text{ such that } \del(Q', [t]) = w(\tree_u^{Q'})\}$ is also a subset of $\hat Q_u$.

    Let $t^* = \sattime(Q, u)$ for convenience. For the second property, suppose for contradiction that $\exists v \in \tree_u^Q$ such that $\del(Q_v, [t^*]) < w(\tree_v^{Q_v})$. Then consider $Q^* = Q \setminus Q_v$. Using the assumption that $\del(Q,[t^*]) = w(\tree_u^Q)$, we then have $\del(Q^*, [t^*]) = \del(Q, [t^*]) - \del(Q_v, [t^*]) > w(\tree_u^Q) - w(\tree_v^{Q_v}) = w(\tree_u^{Q^*}) \implies \sattime(Q^*, u) < t^*$. But, then we have $\sattime(Q, u) \leq \sattime(Q^*, u)$ leading to a contradiction.

    The third property follows from the definition of saturation times and the second property.
\end{proof}

\begin{restatable}{lemma}{allornone}
\label{lem:all-or-none}
Suppose $\Simulate(u, T)$ witnesses request set $Q$. Then either (i) none of the requests in $Q$ are served by $T$, or (ii) $\exists w \in T$ such that all requests in $Q$ are served during $\Simulate(w, T)$.
\end{restatable}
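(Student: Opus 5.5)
The plan is to split on which branch of the \textbf{if} in $\Simulate(u,T)$ is taken in the iteration that witnesses $Q=\reachset(\hat Q,u)$. Here $\hat Q$ is the set of pending requests in $\tree_u$ at that moment and $v$ is the corresponding child of $u$.

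\emph{Purchase branch.} Suppose $\cst(T_{Q})\le b_u$. I first observe that every request of $Q$ sits at a node not yet in $T$: whenever a node is added to $T$, $\Add$ marks all pending requests on it as served, and no requests arrive during the construction of $T$. Hence every location of a request in $Q$ lies in $T_{Q}=\tree_v^{Q}\setminus T$. The call $\Add(T_{Q},T)$ then marks all of $Q$ as served, and this call occurs inside $\Simulate(u,T)$. So (ii) holds with $w=u$.

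\emph{Partial-investment branch.} Otherwise $u$ spends its whole remaining budget fractionally, $b_u$ becomes $0$, and $\Simulate(u,T)$ ends without serving anything in $Q$. Suppose some $j\in Q$ is nevertheless served by $T$. Then $j$ is served inside some later $\Add$ that is triggered by a purchase in some $\Simulate(w,T)$, which witnesses a set $Q''=\reachset(\hat Q'',w)$ with child $v''$. By the purchase-branch argument, all of $Q''$ is served in that call. It therefore suffices to show $Q\subseteq Q''$.

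To show $Q\subseteq Q''$, I would argue by induction over the order of $\Add$ calls; this handles the case where some other request of $Q$ was served even earlier, since the first request of $Q$ to be served is then served together with all of $Q$. Both $Q$ and $Q''$ are sets attaining a minimal saturation time of a child, and they share $j$, so the subtrees $\tree_v$ and $\tree_{v''}$ are nested. The key facts are the following.
\begin{itemize}
\item Saturation times depend only on the request sets, not on the algorithm's state (this is why reach sets are computed ignoring $T$).
\item The second property of Lemma~\ref{lem:sattime-monotone}: inside a saturating set, every sub-subtree's requests already pay for their own spanning subtree by the saturation time.
\end{itemize}
Using these, I would show that $Q\cup Q''$ restricted to the relevant child saturates that child no later than the larger of the two saturation times. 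The exchange is to add the part of $Q$ hanging below a node of $\tree^{Q''}$; by the second property, that part pays for its extra spanning cost. Minimality of the reach time then forces equality, and the maximal tie-breaking rule in the definition of $\reachset$ forces $Q\subseteq Q''$. Throughout, I need that all of $Q$ is still pending when $Q''$ is witnessed, which is exactly what the induction provides.

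I expect the main obstacle to be this exchange step. It requires comparing reach sets computed at different nodes $u,w$ and at different moments of the construction of $T$, and it requires the leaf-to-root order of simulations, so that when $w$ is a proper ancestor of $u$ the descendants' simulations have already finished. I would first treat the case where $w$ is an ancestor of $u$ and $v''$ is an ancestor of $v$, where the combined set clearly remains inside $\tree_{v''}$. I would then handle the case where $w$ is a descendant of $u$ using the third property of Lemma~\ref{lem:sattime-monotone}: $\sattime(Q,\cdot)$ at a descendant is no later than at $v$.
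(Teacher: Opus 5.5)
Your purchase branch is fine. In the case where $w$ is a proper ancestor of $u$, your exchange argument works and is more careful than the paper's one-line justification. The pieces $Q_x$ of $Q$ hanging off $\tree_{v''}^{Q''}$ pay for $\tree_x^{Q_x}$ by the second property of Lemma~\ref{lem:sattime-monotone} applied to $Q$ at $v$. No subset of the pending set can exceed its spanning cost at the minimal time $t''$, and maximal tie-breaking then gives $Q\subseteq Q''$. You do need a step you did not state: $\reachtime(\hat Q,u)\le t''$. Without it, the bound at ``the larger of the two saturation times'' yields nothing. It follows from the second property applied to $Q''$ at $v$, together with $Q''_v\subseteq \hat Q_v$ (pending sets only shrink while $T$ is built). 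Also, $j\in Q''$ is not automatic, since $j$ may sit at an internal node of $\tree_{v''}^{Q''}$. It follows from maximality, but $v\in\tree_{v''}^{Q''}$ is all you actually use.

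The genuine gap is the case where $w$ is a proper descendant of $u$. There $Q''\subseteq\tree_{v''}$ with $v''$ strictly below $v$, so $Q\subseteq Q''$ is typically false and is the wrong target. What must be shown is that the first request of $Q$ to be served can never be served by such a $w$. No saturation-time argument, such as the third property of Lemma~\ref{lem:sattime-monotone}, can show this. The reason is that the lemma fails if each \Add{} batch is simulated root-to-leaf: $u$ can invest fractionally toward $Q$, and then a descendant from the same batch can use its own budget to buy a proper subset of $Q$.

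The missing idea is the paper's ordering argument. $T$ is always a connected rooted subtree, each batch is simulated leaf-to-root, and nested calls finish before their caller resumes. Hence any $\Simulate(w,T)$ with $w$ strictly below $u$ that runs after $\Simulate(u,T)$ returns is for a node bought afterwards. The first node strictly below $u$ bought after that point must be bought in the simulation of a proper ancestor $x$ of $u$, in an iteration whose witnessed set $Q'$ contains some request of $\tree_u$.

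Consequently you need the ancestor case in the stronger form the paper uses: $Q\subseteq Q'$ whenever an ancestor's witnessed set $Q'$ reaches into $\tree_u$ at all, where the request reached need not lie in $Q$. Your exchange extends to this. If $Q'$ misses $\tree_v$, the spanning tree of $Q'\cup Q$ is that of $Q'$ plus $\tree_v^{Q}$, which $Q$ pays for once $\reachtime(\hat Q,u)\le t'$. That inequality again follows from the second property applied at the child of $u$ that $Q'$ enters. With this, the first event after $u$'s witnessing that serves anything in $\tree_u$ is a purchase by a proper ancestor of $u$, and it serves all of $Q$.
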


\begin{proof}
    We first observe that in node $u$'s simulation, either $\cst(T_{Q}) \leq b_u$, in which case all requests in $Q$ get served, or $\cst(T_{Q}) > b_u$, in which no request in $Q$ gets served. We next show by contradiction that if some request in $Q$ is served in another node's simulation, then all requests in $Q$ must be served in that node's simulation. Suppose that the first served request $j$ in $Q$ gets served due to node $w$'s simulation, but not all requests in $Q$ get served during $w$'s simulation. There are then two cases -- i.e., node $w$ can either be a descendant or ancestor of node $u$. 

    Suppose first that node $w$ is an ancestor of node $u$. Suppose further that $j'$ is the first request in $\mc{T}_u$ which is served due to node $w$'s simulation ($j'$ may either be in $Q$, or not). Then $j'$ must have been served because it was part of some set of request set $Q'$ that saturates some node $w'$ which is a child of $w$. Because not all requests in $Q$ were served, it must be that $Q \not \subseteq Q'$. But by definition, $Q$ is the earliest set of pending requests that is able to saturate some node $v$ which is a child of node $u$. In order to saturate node $w'$, some subset of requests in $\mc{T}_u$ must first saturate node $v$. Furthermore, because nodes are saturated in order from leaf to root and $w'$ is an ancestor of $v$, the set of pending requests which earliest saturates $w'$ necessarily includes the set of pending requests which earliest saturates $v$. Therefore, we must have $Q \subseteq Q'$. This implies that $w$ is not an ancestor of node $u$.

    Next suppose that node $w$ is a descendant of node $u$. But since we run simulations in a leaf-to-root order, we can only run a simulation from a descendant of $u$ if we previously serve some request $j'$ in $\mc{T}_u$ during a simulation from an ancestor $x$ of $u$. But we previously argued that then node $x$ must have served all of $Q$ during its simulation, which contradicts $j$ being served during node $w$'s simulation.
\end{proof}

\begin{restatable}{lemma}{earlydead}
\label{lem:early-deadline}
    Suppose $\Simulate(u, T)$ witnesses request set $Q$ and further suppose that $Q$ is served in $T$. Then for all $v \in T_u^{Q}$, $\freezetime(v, T) \leq \reachtime(Q, u)$.
\end{restatable}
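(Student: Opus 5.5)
By Lemma~\ref{lem:all-or-none}, since $Q$ is served in $T$, there is a single node $w\in T$ such that every request of $Q$ is served during $\Simulate(w,T)$. Let $x$ be the child of $u$ corresponding to $Q$, so $Q=Q_x$ and $\reachtime(Q,u)=\sattime(Q,x)=:t^*$. The plan is to identify, for each $v\in T_u^Q$, the iteration in which $v$ was added to $T$. We then show that the freezetime assigned in that iteration is at most $t^*$.

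\emph{Step 1 (saturation of every node of $T^Q_x$ by $t^*$).} Since $Q$ is a maximal set attaining $\sattime(Q,x)$, we have $\del(Q,[t^*])=w(\tree_x^Q)$. The second and third bullets of Lemma~\ref{lem:sattime-monotone} then give $\sattime(Q_v,v)\le\sattime(Q,v)\le t^*$ for every $v\in\tree_x^Q$. Combining this with the first bullet, any superset $\hat Q\supseteq Q_v$ also has $\sattime(\hat Q_v,v)\le t^*$.

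\emph{Step 2 (case analysis on how $v$ entered $T$).} Fix $v\in T_u^Q$. Nodes on the path from $u$ upward are handled separately or are excluded; $u$ itself is presumably excluded or treated as already in $T$. Nodes of $T_x^Q$ are handled in three cases.
\begin{itemize}[nosep]
\item[(a)] If $v$ is in the critical subtree, then $\freezetime(v,T)$ is the time $T$ is transmitted. We argue that this time is $\le t^*$: the requests in $Q$ were pending when the critical set saturated the root, so the transmission time cannot exceed the time at which a pending set saturates $x$. Otherwise the algorithm's waiting rule would have been triggered earlier, which needs care and possibly a comparison using the first bullet.
\item[(b)] If $v$ was added in Step~\ref{step:add-noncritical} during $\Simulate(w,T)$ with sponsor set $Q'$ containing $Q$, then $\freezetime(v,T)=\max_{a\in A}\sattime(Q'_a,a)$ over the ancestors $A$ of $v$ not yet in $T_{part}$. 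Every such $a$ lies in $T^Q_u\setminus\{u\}$, that is, in $\tree_x^{Q}$, and $Q'_a\supseteq Q_a$. Step~1 then bounds each term by $t^*$.
\item[(c)] If $v$ was added earlier by some other simulation, its sponsor set $Q''$ was chosen before $Q$. We must show that the sattimes defining its freezetime are still $\le t^*$.
\end{itemize}

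\emph{Step 3 (the main obstacle).} Case (c) also covers the requirement in (b) that $Q\subseteq Q'$ when $w$ is an ancestor of $u$. For this I would reuse the argument in the proof of Lemma~\ref{lem:all-or-none}. Simulations proceed leaf-to-root and choose reach sets in order of increasing saturation. Hence any set $Q''$ that saturates a child of an ancestor $w$ and passes through $v\in\tree_x$ must contain the earliest-saturating pending set of $x$, namely $Q$. Moreover, its saturation time at each node inside $\tree_x$ equals that of $Q$ restricted there, and is therefore $\le t^*$. The delicate point is that the freezetime maximum ranges only over the ancestors of $v$ missing from $T_{part}$. These ancestors lie within $\tree_x$ as long as $x$ was absent from $T_{part}$. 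If $x$ was already present, the maximum is taken over even fewer nodes, which only helps. I expect the main work to be rigorously justifying that every node used in the maximum is saturated by a set containing $Q_a$, via the monotonicity lemma.
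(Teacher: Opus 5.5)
\paragraph{Comparison with the paper.} Your Step~1 and case~(b) are exactly the paper's proof. Lemma~\ref{lem:all-or-none} supplies a sponsor set $Q'\supseteq Q$ for the iteration that serves $Q$. Since $u$ is already in $T$, the ancestors of $v$ missing from $T_{part}$ lie on the path from $v$ up to $x$. Monotonicity then gives $\sattime(Q'_a,a)\le\sattime(Q_a,a)\le\reachtime(Q,u)$ for each of them. The paper's proof stops there. It tacitly treats $v$ as a node not yet in $T$ when $Q$ is witnessed, and that is the only situation in which the lemma is used: in Lemma~\ref{lem:freezetime-dominance}, $v$ is a node that $u$ invests in, and in the claim $Q\subseteq Q'$, $u$ is newly added by $Q'$.

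\paragraph{The gap.} You correctly notice that the literal statement also covers nodes of $\tree_x^Q$ already in $T$ when $Q$ is witnessed. However, your cases (a) and (c) are not closed, and the reasoning you sketch for them fails.
\begin{itemize}
\item In (a), pending requests that saturate $x$ never trigger a transmission by themselves; only saturating $r$ does, and $x$'s ancestors may be expensive. So ``the waiting rule would have been triggered earlier'' is not a valid reason.
\item In Step~3, a sponsor set $Q''$ that added $v$ \emph{before} $Q$ was witnessed cannot contain $Q$, because $Q$ was still pending afterwards. Hence the claim that $Q''$ contains $Q$, and that its saturation times inside $\tree_x$ equal those of $Q$, is false.
\item Suppose $Q''$ was witnessed by a strict ancestor $w''$ of $u$ while $u\notin T_{part}$. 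Then the maximum defining $\freezetime(v,T)$ ranges over ancestors above $x$, so your remark that these all lie in $\tree_x$ fails exactly in this case.
\end{itemize}

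\paragraph{The missing idea.} Use maximal tie-breaking contrapositively. Let $c''$ be the child of $w''$ that is saturated, and let $s''=\sattime(Q'',c'')$. Bullet~3 of Lemma~\ref{lem:sattime-monotone} gives $\freezetime(v,T)\le s''$. Suppose $s''\ge t^*$. Since $Q\cap Q''=\emptyset$ and $x\in\tree_{c''}^{Q''}$, we get $\del(Q''\cup Q,[s''])\ge w(\tree_{c''}^{Q''})+w(\tree_x^Q)\ge w(\tree_{c''}^{Q''\cup Q})$. So the pending set $Q''\cup Q$ saturates $c''$ no later than $s''$. Since $s''$ was the minimum reach time, $Q''\cup Q$ attains it, which contradicts the maximality of $Q''$. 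Hence $s''<t^*$.

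The same computation at the root settles (a), with the critical set $C$ in place of $Q''$. Here you need that $x\in\tree_r^C$ once $v$ lies in the critical subtree, and that no pending set over-saturates the root before a transmission.

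When $w''=u$ (an earlier iteration) or $w''\in\tree_x$, monotonicity alone suffices. Pending sets only shrink while a tree is being built, so $s''\le\reachtime(Q,w'')\le t^*$.
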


\begin{proof}
     By Lemma $\ref{lem:all-or-none}$, we know that all requests in $Q$ are served at the same time, in the same node's simulation. Suppose that the sponsor set during the iteration that adds $Q$ to the tree is $Q'\supseteq Q$. Let node $v'$ be the highest node on $P(v, u)$ which is not in the tree when $Q$ is added.  Because $u$ is already in the tree when $Q$ is added, $v'$ is some descendant of $u$.  By the definition of reachtime, $Q$ saturates all vertices in the path from $v$ to $v'$ by $\reachtime(Q,u)$. Therefore, $Q'\supseteq Q$ also saturates all vertices in the path from $v$ to $v'$ by $\reachtime(Q,u)$. Then, by the definition of freezetime, $\freezetime(v,T)$ is at most $\reachtime(Q,u)$.
\end{proof}

\begin{restatable}{lemma}{boughtnodes}
\label{lem:bought-nodes-freeze-earlier}
    Suppose that node $u$ is sent once in tree $T$ at time $t$, and sent next in tree $T_{\nxt}$ at time $t_{\nxt}$. Further suppose that $u$ is overlapping in $T_{\nxt}$. Let $J$ be all requests pending at both times $t$ and $t_{\nxt}$. Then 
    \[
        \min_{Q \subseteq J} \reachtime(Q, u) \leq \freezetime(u, T_{\nxt}).
    \]
\end{restatable}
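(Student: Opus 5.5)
We need $\min_{Q\subseteq J}\reachtime(Q,u)\le \freezetime(u,T_{\nxt})$. Since $u$ is overlapping in $T_{\nxt}$, some patron $j\in\patrons(u,T_{\nxt})$ has $a_j\le t$. This $j$ is pending at $t$ and is served only in $T_{\nxt}$, so $j$ is pending at $t_{\nxt}$ as well; hence $j\in J$. Moreover, by construction of Algorithm~\ref{alg:earliest_arrival_first_assignment}, $\freezetime(j)=\freezetime(u,T_{\nxt})$. The plan is to exhibit a subset $Q\subseteq J$ containing $j$ (or at least lying in the same subtree below $u$) whose saturation of the relevant child of $u$ occurs no later than $\freezetime(u,T_{\nxt})$.

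\textbf{Case 1: $u$ is in the critical subtree of $T_{\nxt}$.} Then $\freezetime(u,T_{\nxt})=t_{\nxt}$. Let $C$ be the critical set. Since $j$ is a patron of $u$, it lies in $\tree_u$, so $C_u\ne\emptyset$ is relevant. By Lemma~\ref{lem:sattime-monotone}, for the child $c$ of $u$ on the path to $j$, we get $\sattime(C,c)\le t_{\nxt}$. It remains to restrict to the requests of $C_c$ that were already pending at $t$. I would argue that requests arriving after $t$ cannot be needed, using the ordering of patrons by earliest arrival, but this is delicate. The alternative is to pick $Q=\{j\}\cup$ (pending-at-$t$ requests of $C_c$), and handle the remaining case where $C_c$ contains only post-$t$ requests other than $j$ by monotonicity of saturation in the set.

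\textbf{Case 2: $u$ is added in Step~\ref{step:add-noncritical} with sponsor set $Q'$.} Then $\freezetime(u,T_{\nxt})=\max_{x\in A}\sattime(Q'_x,x)\ge\sattime(Q'_u,u)$. Applying Lemma~\ref{lem:sattime-monotone} again, $Q'$ saturates the child $c$ of $u$ on the path by that time. We then need to pass to requests in $J$. The key observation here is that patrons of $u$ with $a_j\le t$ that share the freezetime must come from $Q'$ (by Lemma~\ref{lem:all-or-none}, $Q'$ is served as a unit).

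\textbf{Main obstacle.} In both cases the hard step is the same: showing that the portion of the saturating set consisting of requests already pending at time $t$ still saturates a child of $u$ by the freezetime. A plain monotonicity argument only gives saturation for supersets, not subsets.

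What I would try first is an exchange argument based on the earliest-arrival-first patron assignment. Because $j$, which arrived by time $t$, was assigned to $u$, all earlier-arriving requests eligible for $u$ were exhausted on nodes below $u$. Their delay up to the freezetime therefore already covers the costs in the subtree below $u$ along $j$'s path. Together with $j$'s own delay, this should show that the pending-at-$t$ requests saturate the child $c$ by $\freezetime(u,T_{\nxt})$, which gives
\[
\reachtime(Q,u)\le\sattime(Q,c)\le\freezetime(u,T_{\nxt}).
\]
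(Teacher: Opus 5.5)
Your setup is correct: a patron $j$ of $u$ with $a_j\le t$ lies in $J$ and has $\freezetime(j)=\freezetime(u,T_{\nxt})=t'$. However, the proposal stops exactly at the step that carries the lemma, and the sketch you give for that step does not work as stated.

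To get $\reachtime(\cdot,u)\le t'$ you must exhibit a concrete $Q''\subseteq J$ with $\del(Q'',[t'])\ge w(\tree_{u'}^{Q''})$, where $u'$ is the child of $u$ on $P(v_j,u)$. Neither ``the pending-at-$t$ requests'' nor ``the earlier-arriving patrons of the nodes on $j$'s path'' satisfies this automatically. Those requests span further vertices off $P(v_j,u')$, and the costs of those vertices may have been paid by requests that arrived after $t$ or that have a different freezetime. So their delay need not cover the subtree they span.

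Several other parts of the sketch also fail:
\begin{itemize}
\item The case split on how $u$ entered $T_{\nxt}$ is a detour, because the critical or sponsor set is the wrong object to work with.
\item The Case~2 claim that early patrons of $u$ must come from the sponsor set is unsupported. Patrons are assigned in hindsight and need not be sponsors, and Lemma~\ref{lem:all-or-none} says nothing of the kind.
\item The sentence ``all earlier-arriving requests eligible for $u$ were exhausted on nodes below $u$'' is backwards, since requests in $J_u$ are by definition not exhausted. The correct consequence of earliest-arrival-first is different. The request $j$ still had budget when each $x\in P(v_j,u')$ was processed, because it later pays for $u$, which comes after $x$ in the leaf-to-root order at the same freezetime. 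Hence every patron of $x$ arrived no later than $j$.
\end{itemize}

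The missing idea is to close the set under ``patrons of spanned vertices,'' which is the paper's argument.
\begin{itemize}
\item Start with $Q''=\{j\}$ and $S'=P(v_j,u')$.
\item Repeatedly add to $Q''$ all patrons in $T_{\nxt}$ of vertices in $S'$, and add to $S'$ every vertex on a path from such a patron up to the vertex it pays for, until nothing changes.
\end{itemize}
This closure has three properties:
\begin{itemize}
\item All vertices and requests involved have freezetime $t'$.
\item Every request in $Q''$ arrived by $a_j\le t$. This follows by applying the earliest-arrival-first observation inductively. A vertex $x$ added via a patron $j''$ of $w\in S'$ was processed before $w$ while $j''$ still had budget, so the patrons of $x$ arrived no later than $j''$. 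Since every such request is also served in $T_{\nxt}$, we get $Q''\subseteq J$.
\item $Q''$ contains all patrons of every vertex in $S'\supseteq\tree_{u'}^{Q''}$. So Lemma~\ref{lem:patrons_equal_cost}(ii)--(iv) gives $\del(Q'',[t'])\ge w(S')\ge w(\tree_{u'}^{Q''})$, and by continuity $\reachtime(Q'',u)\le\sattime(Q'',u')\le t'$.
\end{itemize}
Without this closure step, the inequality in your final display has no justification.
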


\begin{proof}
    To prove the lemma, it suffices to exhibit some $Q'\subseteq J$ such that $\reachtime(Q',u)\le \freezetime(u,T_{\nxt})$. Let $j$ be some request that caused $u$ to be overlapping, i.e. $j \in \patrons(u, T_{\nxt})$ and $a_j \leq t$. Let $v'$ be the unique child of $u$ on the path $P(v_j,u)$. Let $t'=\freezetime(u, T_{\nxt})$. We will build the set $Q'$ iteratively as follows. 

\begin{enumerate}
    \item Initialize $Q'=\{j\}$ and $S' = P(v_j,u')$. That is, $S'$ contains all vertices on the path from $v_j$ to $u'$. 
    \item In each iteration:
    \begin{enumerate}
        \item Add to $Q'$ all of the patrons in $T_{\nxt}$ of the vertices in $S'$. That is, $Q'=\{j\}\cup \cup_{v\in S'}\patrons(v,T_{\nxt})$.
        \item Add to $S'$ all of the vertices that lie on the path between some $w\in S'$ and some patron $j'$ of $w$. That is, $S'=\cup_{w\in S', j'\in \patrons(w,T_{\nxt})}P(v_{j'}, w)$.
    \end{enumerate} 
    \item We continue this process until no new vertices or patrons remain to be added. 
\end{enumerate}

    Observe that the above process must end in a finite number of steps. We now make the following observations. 

    First, recall that if $w\in T_{\nxt}$ and $j'\in\patrons(w,T_{\nxt})$, then $\freezetime(j')=\freezetime(w,T_{\nxt})$. Furthermore, by the definition of freezetime, for all $v\in P(v_j,w)$, we have $\freezetime(v,T_{\nxt})=\freezetime(w,T_{\nxt})$. Therefore, all requests added to $Q'$ and all vertices added to $S'$ by the above procedure have freezetime in $T_{\nxt}$ equal to $t'=\freezetime(u,T_{\nxt})$.

    Second, consider any patron $j'$ added to $Q'$ in the above procedure. $j'$ is either a patron of some vertex $w$ in $P(v_j,u')$ or a patron of some descendant of $w$. Since $j$ is also eligible to be a patron of $w$, then by the patron assignment rules, $j'$ arrived no later than $j$.    Therefore, we get that $Q'\subseteq J$.

    Finally, since $Q'$ contains all of the patrons of vertices in $S'$, by Lemma~\ref{lem:patrons_equal_cost}, the total delay accumulated by requests in $Q'$ until $t'$ is sufficient to pay for the cost of $S$. Therefore we get 
    \[\reachtime(Q',u)\le \sattime(Q',u')\le t'=\freezetime(u,T_{\nxt}),\]
    proving the lemma.
\end{proof}

Finally we are ready to prove the \emph{Freezetime Dominance Lemma} that we restate for convenience.

\originalstatefalse
\freezetimelem*

\begin{proof}
    Let $J$ be all requests pending at both times $t$ and $t_{\nxt}$. Suppose that $u$ invests in node $v$ in tree $T$ due to sponsor set $Q$. Then we have:
    \[
        \reachtime(Q, u) \leq \min_{Q' \subseteq J} \reachtime(Q', u) \leq \freezetime(u, T_{\nxt})
    \]
    where the second inequality is due to Lemma \ref{lem:bought-nodes-freeze-earlier}, and the first inequality follows from noting that any $Q'\subseteq J$ was pending at time $t$, but \Simulate($u, T$) witnessed $Q$ before other such sets.     
    Now we apply Lemma \ref{lem:all-or-none} to split the analysis into two cases: either all requests in $Q$ are served in tree $T$, or no request in $Q$ is served in tree $T$.  

    {\bf Case 1: $Q$ is served in $T$.} Since $v\in \tree_u^Q$, $v$ is also served in $T$. So we have:
    \[
        \freezetime(v, T^v) = \freezetime(v, T) \leq \reachtime(Q, u) \leq \freezetime(u, T_{\nxt})
    \]
    where the equality is because $T^v = T$ in this case and the first inequality is due to Lemma \ref{lem:early-deadline}. 

    {\bf Case 2: $Q$ is not served in $T$.} In this case, no requests in $Q$ can be served before $T_{\nxt}$, as otherwise $u$ must have been sent before $T_{\nxt}$ as well. Therefore, $Q\subseteq J$. Now consider the iteration during which $u$ was added to $T_{\nxt}$. If $u$ is part of the critical component in $T_{\nxt}$, then $\freezetime(u,T_{\nxt})=t_{\nxt}$ and $\reachtime(Q,u)\le t_{\nxt}$ implies that $Q$ (and, by extension, $v$) is also in the critical component of $T_{\nxt}$. Then $\freezetime(v,T^v) = \freezetime(v,T_{\nxt})=t_{\nxt}$ and the lemma follows.

    Suppose, instead, that $u$ is added to $T_{\nxt}$ during $\Simulate(u', T_{\nxt})$ and is sponsored by $Q'$. We claim that $Q\subseteq Q'$.      
    Assuming the claim, $v$ is added to $T_{\nxt}$ in the same simulation step as $u$, and hence $T^v=T_{\nxt}$. Let $P_{u,v}$ denote the path between nodes $u$ (exclusive) and $v$ (inclusive) in the tree. Then by definition of freezetime, we have
    \[
        \freezetime(v, T^v) = \freezetime(v, T_{\nxt}) = \max \{ \max_{x \in P_{u,v}} \sattime(Q'_x, x), \freezetime(u, T_{\nxt})\}.
    \]
    Let $x$ be an arbitrary node on $P_{u,v}$ and let $u_c$ be the child node of $u$ on the path $P_{u,v}$. By monotonicity of saturation times (Lemma \ref{lem:sattime-monotone}), we have
    \[\sattime(Q'_x, x) \leq \sattime(Q_x, x) \leq \sattime(Q_{u_c}, u_c) = \reachtime(Q, u).\]
    Substituting back into the freezetime expression above proves the lemma. 
    
    It remains to prove the claim that $Q\subseteq Q'$. By Lemma~\ref{lem:early-deadline}, $\reachtime(Q', u')\ge \freezetime(u,T_{\nxt})\ge \reachtime(Q,u)$. Let $t'=\reachtime(Q',u')$ and $\hat{Q}= Q'\cup Q$. Then it is easy to verify that for every vertex $v'\in\tree_{u'}^{\hat{Q}}$, we have $\sattime(\hat{Q}_{v'}, v') \leq t'$. This is because either $v'$ is in $\tree_{u'}^{Q'}$ and $Q'\subset \hat{Q}$ saturates $v'$ by $t'$, or $v'$ is in $\tree_u^Q$ and $Q\subset \hat{Q}$ saturates $v'$ by $\reachtime(Q,u)\le t'$. By the definition of reachtime, this implies $\reachtime(\hat{Q},u')\le t'$, contradicting the assumption that $\Simulate(u',T_{\nxt})$ witnesses $Q'$.
\end{proof}

%% file: deferred.tex
\section{Deferred Proofs (Delays)}
\label{sec:deferred-proofs-delays}

\subsection{Primal Analysis}
\label{sec:primal-deferred}
We will restate and prove Lemmas~\ref{lem:alg-bound-delays2} and \ref{lem:alg-bound-delays}. 

\algbounddelaystwo*

\begin{proof}
    Proof by contradiction. Suppose that $\delayalg > \sum_{i=1}^k w(T_i)$. Then there must exist some tree $T_i$ sent at time $t_i$ by Algorithm \ref{alg:multilevel_agg_delays} such that the delays accumulated by the requests served with $T_i$ is strictly greater than $w(T_i)$. Let $Q_i$ be the set of requests sent with tree $T_i$.  Because the delay function for any request $j$ is a continuous function that has value $0$ at time $a_j$, the total delay for all requests in $Q_i$ must also be a continuous function over time. This implies that there must be some time $t' < t_i$ and $Q' \subseteq Q_i$ such that the sum of delays accumulated by the requests in $Q'$ at time $t'$ is exactly equal to $w(T^{Q'})$. This implies that the request set $Q'$ is critical at time $t'$. Since Algorithm \ref{alg:multilevel_agg_delays} always serves all critical requests whenever they become critical, requests $Q'$ would have been served at time $t' < t_i$ which is a contradiction.
\end{proof}

\algbounddelays*

\begin{proof}
        We will construct a flow network and show a preflow where the total input is at most the RHS and the total output is the LHS.

\paragraph{Network Construction.}
    For each tree $T_i$ that is sent by the algorithm and for each node $v \in T_i$, add node $(v, T_i)$ to the network. Add a source node $a$ and a sink node $b$. Add an edge from source $a$ to each critical node, i.e., add edge $(a, (v, T_i))$ for each $v \in S_i$. Send a flow of $h(v) \cdot \hat{w}(v, t_i)$ through this edge, where $h(v)$ is the height of node $v$. Add an edge from each node sent by the algorithm to the sink, i.e., add edge $((v, T_i), b)$ for each $v \in T_i$. Send a flow of $w(v)$ through this edge.

    By this construction, the total flow leaving the source $a$ is at most the RHS and the total flow entering the sink $b$ is exactly the LHS. Therefore it suffices to add flow through the network in such a way that we have a feasible preflow. Suppose node $v$ invests amount $y(v, v', T_i)$ in node $v'$ in tree $T_i$. Let $T_j$ be the next tree where $v'$ is sent (note that potentially $T_j = T_i$). Then add an edge $((v, T_i), (v', T_j))$ and send flow $h(v') \cdot  y(v, v', T_i)$ through this edge.

\paragraph{Pre-flow Verification.} Consider any node $(v, T_i)$. If $v \in S_i$, then the inflow to this node is exactly $h(v) \cdot w(v)$, as the source sends $h(v) \cdot \hat{w}(v, t_i)$ flow and the other nodes send  $h(v) \cdot (w(v) - \hat{w}(v, t_i))$ flow due to the initial investment of $c_v(t_i)$ in node $v$ at time $t_i$. If $v \not\in S_i$, then the inflow is also exactly $h(v) \cdot w(v)$ due to flow from nodes investing in $v$.
The outflow from this node is at most 
\[
    w(v) + (h(v) - 1) \cdot w(v) = h(v) \cdot w(v).
\]
The first term is because each node sends $w(v)$ to the sink, the second term is because each node invests at most $w(v)$ in other nodes and only invests in nodes with heights strictly lesser than itself. Therefore, we have shown that the inflow is always at least as large as the outflow as desired.
\end{proof}

\subsection{EAF Assignment Analysis}\label{sec:eaf_proof}

\patronseqcost*
\begin{proof}
Properties (i)--(iii) follow immediately from construction. We now prove (iv).
Consider a tree $T$ serving the set $Q$ of requests and let $v\in T$. Let $t^*=\freezetime(v,T)$. To prove the lemma, it suffices to show the existence of a set 
$\tilde{J}_v\subset Q_v$ of requests such that $\freezetime(j)=t^*$ for all $j\in \tilde{J}_v$, and $\sum_{j\in \tilde{J}_v} \Lambda(j) \ge w(\tree_v^{\tilde{J}_v})$. This is because all the requests in $\tilde{J}_v$ participate in the iteration at Line~\ref{step:eaf-job-set} of Algorithm~\ref{alg:earliest_arrival_first_assignment} corresponding to freezetime $t^*$, and these requests only pay for vertices in $\tree_v^{\tilde{J}_v}$ prior to paying for $v$. Therefore, the claim shows that these requests have sufficient amount remaining to pay for $v$.

We will now prove the claim by exhibiting such a set $\tilde{J}_v$. Consider the iteration of Algorithm~\ref{alg:multilevel_agg_delays} where $v$ is added to the tree $T$. Let $Q'$ denote the sponsor set in this iteration, and note that $Q'\subseteq Q$. Let $Q_1 = Q'\cap \{j: \freezetime(j)=t^*\}$ and $Q_2 = Q'\setminus Q_1$. By definition, the freezetime of all requests in $Q_2$ is strictly larger than $t^*$. We claim that $\tilde{J}_v=Q_1$ suffices. To see this, we note that the requests in $Q'$ saturate all of the vertices in $\tree_v^{Q_1}$ by time $t^*$. However, the requests in $Q_2$ fail to saturate all vertices in $\tree_v^{Q'}\setminus \tree_v^{Q_1}$ by $t^*$. Therefore, requests in $Q_1$ alone suffice to saturate $\tree_v^{Q_1}$ by $t^*$, and so we are done.
\end{proof}

\subsection{Dual Analysis}

\delaysdualobjinc*

\begin{proof}
The claim follows by induction over the tree of recursive calls. Let $T_{\prev}$ be the last tree where $u$ was transmitted (strictly before $T$) and let $t_{\prev}$ be the time $T_{\prev}$ was sent. Our base case is when $u$ is a leaf node. In this case, it must be the case that for all $j$ that are patrons of $u$ in $T$, $a_j > t_{\prev}$, because otherwise request $j$ would have been sent with $T_{\prev}$. For every $j \in \patrons(u, T)$, we increase $\alpha_{j\tau}$ by $\delay(j, \tau) \cdot \tfrac{\gamma}{w(u)}$ for all $\tau \in I(j,u,T)$. The increase in the dual objective is therefore (using Lemma \ref{lem:patrons_equal_cost})
\begin{align*}
    \sum_{j \in \patrons(u, T)} \int_{I(j,u,T)} \alpha_{j\tau}d\tau &= \sum_{j \in \patrons(u, T)} \int_{I(j,u,T)} \delay(j, \tau) \cdot \frac{\gamma}{w(u)}d{\tau} \\
    &= \frac{\gamma}{w(u)} \cdot \sum_{j \in \patrons(u, T)} \lambda(j,u,T) \\
    &= \frac{\gamma}{w(u)} \cdot w(u) \\
    &= \gamma.
\end{align*}
Now suppose the inductive hypothesis is true for all $v \in A_u$ (recall that $A_u$ is the set of nodes that $u$ invests in during $T_{\prev}$). We will show that the inductive hypothesis holds for $u$ as well. First, if $a_j > t_{\prev}$ for all $j \in \patrons(u,T)$, then $u$ is disjoint and the same argument as above applies. Otherwise, if there exists a $j \in \patrons(u,T)$ where $a_j \leq t_{\prev}$, then we know that request $j$ was left pending during $\Explore(u)$ in tree $T_{\prev}$. But this implies that $u$ invested a total amount of exactly $w(u)$ in that round. So $\sum_{v \in A_u} y(u, v, T_{\prev}) = w(u)$. By the inductive hypothesis, the total increase in the dual objective is $\sum_{v \in A_u} y(u, v, T_{\prev}) \cdot \frac{\gamma}{w(u)} = \gamma$.
\end{proof}

\delayslimitedcharges*

\begin{proof}
Proof by induction on level of the tree starting from $r$. Our base cases are for $r, T$ for all $T$. Note that \Charge($r, T, \cdot$) can only be called once for any $T$, and further observe that $\gamma = w(r)$ by construction. This proves the base case. 

Now, consider some node $v$ and tree $T$ containing it, and let $t$ be the time it is sent. If $v$ is critical in $T$,  \Charge($v$, $T$, $\gamma$) is called once on Line~\ref{line:main-charge-call} with $\gamma=\hat{w}(v,t)$.

Next we will account for calls \Charge($v$, $T$, $\gamma$) by some ancestor $u$ of $v$. Let $t_0$ be the last time strictly before $t$ that $v$ was sent, or $0$ if no such time exists. Let $T_{u1}, T_{u2}, \cdots$ denote the trees containing $u$ that were sent strictly after $t_0$ but no later than $t$. Then, any call $\Charge(u, T_{ui}, \gamma'_{uij})$, for $i>1$, would have potentially generated a call $\Charge(v, T, \gamma_{uij})$, where $j$ indexes all calls made to $\Charge(u, T_{ui}, \cdot)$. Summing over all of the charges made by such calls, we get:
\begin{align*}
\sum_{u\in P(v)}\sum_{T_{ui}, i>1} \sum_j \gamma_{uij} & = \sum_{u\in P(v)}\sum_{T_{ui}, i>1} \sum_j y(u,v,T_{u (i-1)}) \cdot \frac{\gamma'_{uij}}{w(u)}\\
& \le \sum_{u\in P(v)}\sum_{T_{ui}}  y(u,v,T_{u (i-1)}) \cdot \frac{w(u)}{w(u)}\\
&\le c_v(t)
\end{align*}
where the first inequality used the induction hypothesis applied to calls $\Charge(u,T_{ui}, \cdot)$, and the second tallied the total amount invested in $v$ in the interval $(t_0, t]$ using the fact that $v$ was not sent in $(t_0, t)$.

Summing the two contributions implies that the total charge across all calls to $\Charge(v,T,\cdot)$ is at most $\hat{w}(v, t) + c_v(t) = w(v)$. 
\end{proof}

%% file: extra_deadlines.tex
\section{Deferred Proofs (Deadlines)}
\label{sec:extra_deadlines}

This section provides the proofs of all claims in Section~\ref{sec:deadlines}.

\dualobjinc*
\begin{proof}
    The lemma follows by induction over the tree of recursive calls. Our base case is when $u$ is a leaf node. In this case, it must be the case that $a_j > t_{\prev}$, because otherwise $v$ would have been sent with $T_{\prev}$. We therefore increase $\alpha_j$ by $\gamma$, which increases our dual objective by $\gamma$. Now suppose the inductive hypothesis is true for all $v \in A_u$. We will show that the inductive hypothesis holds for $u$ as well. First, if $a_j > t_{\prev}$, then we again increase $\alpha_j$ by $\gamma$, which increases the dual objective by $\gamma$. Whenever $a_j \leq t_{\prev}$, we know that request $j$ was left pending during $\Explore(u)$ in tree $T_{\prev}$. But this implies that $u$ invested a total amount of exactly $b_u = w(u)$ in that round. So $\sum_{v \in A_u} y(u, v, T_{\prev}) = w(u)$. By the inductive hypothesis, the total increase in the dual objective is $\sum_{v \in A_u} y(u, v, T_{\prev}) \cdot \frac{\gamma}{w(u)} = \gamma$.
\end{proof}

We will next prove Lemma~\ref{lem:deadlines-delta-alpha-3}. First, we state and prove two helpful claims that relate the deadline of a request charged for a vertex in tree $T$ to the time $T$ is scheduled.

\deadlineorderclaim*
\begin{proof}
    Let $T_{\prev}$ be the tree before $T$ when $u$ was previously sent, and suppose $T_{\prev}$ was sent at $t_{\prev}$ and $T'$ was sent at $t'$. Let $j_{\prev}$ be the critical request in $(T_{\prev})_v$. As defined in the algorithm, $T'$ is the next tree after $T_{\prev}$ (inclusive of $T_{\prev}$) where node $v$ is sent. Because node $v$ is sent when request $j_{\prev}$ is served, this implies that $j_{\prev}$ must still have be pending at time $t'$. Because $j'$ is the critical request of $T'_v$, this implies that $d_{j'} \leq d_{j_{\prev}}$.

    We will argue that $d_{j_{\prev}} \le d_j$, which suffices to prove the claim. Suppose for contradiction that $d_{j} < d_{j_{\prev}}$. If $j \in \tree_v$, then $d_{j_{\prev}} \leq d_j$ by definition of critical request, so it must be that $j \notin \tree_v$. If $d_{j} < d_{j_{\prev}}$, then every node on the path of $j$ must have been bought before $u$ started investing in $v$, which means that $j$ must have been sent with tree $T_{\prev}$. However, $j$ is still pending until tree $T$ is sent, which is a contradiction.
 \end{proof}

\begin{claim}\label{claim:ddl_v_less_than_ddl_r}
Suppose $\Charge(u, T, \gamma)$ is a top-level recursive call in Line \ref{line:top_level_recursive} of Algorithm \ref{alg:deadlines-analysis} that results in an increase in $\alpha_j$ for some request $j = (x, a, d)$. Then $d_j \leq t$ where $t$ is the time when tree $T$ was transmitted. 
\end{claim}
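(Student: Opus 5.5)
We argue along the chain of nested $\Charge$ calls that ends in the increase of $\alpha_j$. Let the chain be $\Charge(u_1, T_{(1)}, \gamma_1) \rightarrow \Charge(u_2, T_{(2)}, \gamma_2) \rightarrow \cdots \rightarrow \Charge(u_m, T_{(m)}, \gamma_m)$. Here $u_1 = u$ and $T_{(1)} = T$ is the top-level call from Line~\ref{line:top_level_recursive}, and the innermost call $\Charge(u_m, T_{(m)}, \gamma_m)$ is the one that increments $\alpha_j$. By Algorithm~\ref{alg:deadlines-analysis}, the innermost call only increments the dual of the critical request of $(T_{(m)})_{u_m}$ (Line~\ref{line:critical_j}), so $j$ is that critical request. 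For each $i$, write $j_i$ for the critical request of $(T_{(i)})_{u_i}$; in particular $j_m = j$.

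\textbf{Step 1 (chain the claim).} Each consecutive pair in the chain is a call $\Charge(u_i, T_{(i)}, \cdot)$ that calls $\Charge(u_{i+1}, T_{(i+1)}, \cdot)$. Claim~\ref{claim:ddl_v_less_than_ddl_u} therefore gives $d_{j_{i+1}} \le d_{j_i}$. By induction along the chain, $d_j = d_{j_m} \le d_{j_1}$.

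\textbf{Step 2 (base of the chain).} It remains to show $d_{j_1} \le t$, where $t$ is the transmission time of $T$. Since $u_1 \in S_1$ lies on the critical path of $T$, the critical request $j^*$ of $T$ has $v_{j^*} \in \tree_{u_1}$. The tree $T$ is transmitted exactly when $j^*$ becomes critical, i.e.\ at $t = d_{j^*}$ (events happen at deadlines). By definition, $j^*$ has the earliest deadline among all pending requests in $\tree$, hence among those in $\tree_{u_1}$. Interpreting ``critical request in $T_{u_1}$'' as the earliest-deadline pending request in $\tree_{u_1}$ at the moment $T$ is formed, we get $j_1 = j^*$, up to the fixed tie-breaking rule, which does not affect deadlines. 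Hence $d_{j_1} = d_{j^*} = t$. Combining with Step~1 yields $d_j \le t$.

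The main subtlety is Step~2: making the notion of ``critical request in $T_u$'' for a node on the critical path consistent with its use in Claim~\ref{claim:ddl_v_less_than_ddl_u}. We must make sure that $j^*$ is still counted as pending in $\tree_{u_1}$ when $j_1$ is defined, even though $\Add$ marks it served. I would fix the convention that critical requests are evaluated with respect to the set of requests pending just before $T$ is built, which is the same convention used in the proof of Claim~\ref{claim:ddl_v_less_than_ddl_u}. If multiple \UponCritical calls occur at the same time $t$, every such tree has $t$ equal to the deadline of its own critical request, so the argument is unchanged.
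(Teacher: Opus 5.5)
Your proposal is correct and follows essentially the same route as the paper. The paper also chains Claim~\ref{claim:ddl_v_less_than_ddl_u} along the sequence of nested $\Charge$ calls to get $d_j \le d_{u}$, and then uses that the critical request of $T_u$ has deadline exactly $t$ when $u$ lies on the critical path. Your Step~2 justifies that last fact explicitly, whereas the paper simply asserts it.
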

\begin{proof}
 This follows by repeated calls of Claim \ref{claim:ddl_v_less_than_ddl_u}. Consider the list of \Charge \ calls which results in an increase of $\alpha_j$, and order the calls from leaf node $x$ to $u$. Suppose the \Charge \ call $k$ takes as arguments $v_k$ and $T_k$. Let $d_k$ be the deadline of the critical request of the subtree of $T_k$ rooted at $v_k$. Then by Claim \ref{claim:ddl_v_less_than_ddl_u}, $d_{k-1} \leq d_k$, which by implies that $d_j = d_1 \leq d_u = t$.
\end{proof}

\deadlinesdeltaalpha*

\begin{proofof}{Lemma~\ref{lem:deadlines-delta-alpha-3} (a)}
    Fix a request $j$. Let $t'$ be the time of the earliest root call that caused $\alpha_j$ to be incremented. We will prove the claim for all $t\in [a_j, d_j]$ by induction over the course of the algorithm. At the beginning of the algorithm, all $\alpha$s and $\delta$s are zero, so the claim holds.
    
    Consider an iteration where the algorithm increases $\alpha_{j}$. Let $\Charge(u_1, T_1, \gamma_1)$ be the last call to $\Charge$ that makes this increase. Observe that $u_1$ is disjoint in $T_1$. We will now consider the sequence of nested charge calls that lead to the call $\Charge(u_1, T_1, \gamma_1)$. Let this sequence be of length $k$. The $i$th call in this sequence is denoted $\Charge(u_i, T_i, \gamma_i)$, and for $i>1$, $\Charge(u_i, T_i, \gamma_i)$ calls $\Charge(u_{i-1}, T_{i-1}, \gamma_{i-1})$. 
    
    Let $T_{\prev, i}$ denote the last tree strictly before $T_i$ where $u_i$ was transmitted. Let $t_i$ denote the time at which $T_i$ is sent, and $t_{\prev, i}$ the time at which $T_{\prev, i}$ is sent. 
    Observe that $\Charge(u_1, T_1, \gamma_1)$ increases $\alpha_{j}$ by $\gamma_j$. In the sequence of $\Charge$ calls defined above, for each $u_i$, $i\in [k]$, we increase $\delta_{u_i\tau j}$ 
    by the same $\gamma_j$ amount over the interval $(t_{\prev, i}, t_i]$. In order to prove the lemma, we must show that these intervals together cover the entire interval $[a_j, d_j]$.
    
    In order to see this, we first observe that since $u_1$ is disjoint in $T_1$, $a_j>t_{\prev, 1}$. Furthermore, for $i>1$, $t_{\prev, i}\le t_{i-1}$ by the definition of $T_{i-1}$. Therefore, all of these intervals together cover $[a_j, t_k]$. Finally, Claim \ref{claim:ddl_v_less_than_ddl_r} exactly says that $d_j \le t_k$, implying that all of these intervals together cover $[a_j, d_j]$.
\end{proofof}

\begin{proofof}{Lemma~\ref{lem:deadlines-delta-alpha-3} (b)}
    Fix a $v$ and $t$. Define $t_1$ as the latest time strictly before $t$ when $v$ was transmitted and define $t_2$ as the earliest time $\geq t$ when $v$ was transmitted (and let $t_2 = \infty$ if $v$ is not transmitted again after $t$). Observe that $\delta_{vt}$ is increased by $\gamma$ exactly whenever \Charge($v$, $T_2$, $\gamma$) is called for the tree $T_2$ sent at time $t_2$. By Claim~\ref{claim:limited_charges-deadlines}, the sum of these increases is at most $w(v)$ as desired.
\end{proofof}

Claim \ref{claim:limited_charges-deadlines} is helpful for proving Lemma \ref{lem:deadlines-delta-alpha-3} (b). 

\begin{claim}\label{claim:limited_charges-deadlines}
Fix a $v$ and a $T$. The sum of all $\gamma$ for calls $\Charge(v,T, \gamma)$ is at most $w(v)$.
\end{claim}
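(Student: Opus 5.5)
We prove Claim~\ref{claim:limited_charges-deadlines} by mirroring the proof of Claim~\ref{claim:delays_limited_charges}, using induction on the depth of $v$ in $\tree$, starting from the root $r$. Fix $v$ and $T$, and let $t$ be the time $T$ is transmitted. Let $t_0$ be the last time strictly before $t$ at which $v$ was transmitted, or $0$ if there is none. We split the calls $\Charge(v,T,\gamma)$ into two groups.

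\emph{Top-level calls.} These come from Line~\ref{line:top_level_recursive}. Such a call occurs only if $v$ lies on the critical path of $T$, and then it occurs exactly once, with $\gamma=\hat{w}(v,t)$.

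\emph{Recursive calls.} These are made from within $\Charge(u,T_u,\gamma')$ for some ancestor $u\in P(v)$, $u\ne v$, with $v\in A_u$. Here $T_u$ is a tree containing $u$ sent at some time $t_u$, with previous transmission $T_{\prev}^u$ of $u$. The callee tree is the next transmission of $v$ at or after $T_{\prev}^u$, and it equals $T$ precisely when $T_{\prev}^u$ is sent in the window $(t_0,t]$. Each such call passes the amount $y(u,v,T_{\prev}^u)\cdot \gamma'/w(u)$.

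For the root there are no ancestors, so only the single top-level call with $\gamma=w(r)$ (equal to $\hat w(r,t)$, since $r$ is on every critical path and never receives investment). This gives the base case.

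For the inductive step, fix an ancestor $u$ and one tree $T_{\prev}^u$ of $u$ sent in $(t_0,t]$. Then $T_u$ is determined as the next transmission of $u$ after $T_{\prev}^u$. By the inductive hypothesis applied to the pair $(u,T_u)$, the amounts $\gamma'$ of all calls $\Charge(u,T_u,\cdot)$ sum to at most $w(u)$. Hence the total passed to $v$ through this pair is at most $y(u,v,T_{\prev}^u)$. Summing over all ancestors $u$ and all their transmissions in $(t_0,t]$ bounds the total by the sum of all investments made in $v$ during $(t_0,t]$.

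Since $v$ is not transmitted during $(t_0,t)$, its counter $c_v$ is not reset in that window, so this sum equals $c_v(t)$. One subtlety is an investment made inside $T$ itself that triggers $v$'s addition and resets $c_v$. I would handle this as the paper does for delays: interpret $c_v(t)$ as the investment accumulated just before $T$'s reset, and note that if $v$ is not critical in $T$, then there is no top-level call.

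Combining the two groups:
\begin{itemize}
\item If $v$ is critical in $T$, the total is at most $\hat w(v,t)+c_v(t)=w(v)$.
\item Otherwise, the total is at most $c_v(t)\le w(v)$.
\end{itemize}

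The main obstacle is the bookkeeping in the inductive step. We must check that each investment $y(u,v,T')$ is routed to the single callee tree $T$ and counted only once. This requires that the map from $(u,T_u)$ to its predecessor $T_{\prev}^u$ be injective, and that ties such as "next tree inclusive of $T_{\prev}$" and same-time transmissions be handled consistently.
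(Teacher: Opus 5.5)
Your proposal is correct and matches the paper's proof: the paper says the argument follows exactly as in Claim~\ref{claim:delays_limited_charges}, which is the same induction on depth from the root. That proof splits the calls into the single top-level call with $\gamma=\hat{w}(v,t)$ and the recursive calls from ancestors $u$, bounds each recursive contribution by $y(u,v,\cdot)$ via the inductive hypothesis, and sums these to at most $c_v(t)$; your extra care with the non-critical case (where the investments routed to $(v,T)$ total exactly $w(v)$) and with same-time ties is a refinement of that write-up, not a different route.
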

\begin{proof}
This proof follows exactly as in Claim \ref{claim:delays_limited_charges}.
\end{proof}

%% file: tight_example.tex
\section{Tight Example for Algorithm \ref{alg:multilevel_agg}}\label{sec:tight_example}

In this section, we give an example where the cost incurred by Algorithm \ref{alg:multilevel_agg} is $D$ times the optimal cost. 

\begin{figure}[h]
    \centering
    \begin{adjustbox}{width=1\textwidth}
    \input{tight_deadlines_tikz}
    \end{adjustbox}
    \caption{A tight instance for Algorithm \ref{alg:multilevel_agg}. The tree $\mathcal{T}$ consists of $D$ levels where each node has a unit cost. Horizontal segments represent requests $j_{i,k}$ arriving at vertex $v_i$, with their horizontal span indicating the interval between arrival and deadline. Specifically, a request $j_{i,k}$ arrives at time $k-1+\epsilon$ for some small $\epsilon > 0$. For requests arriving at root ($i=1$), the deadline of $j_{1,k}$ is $k$; for all other request ($i > 1$), the deadline is $K$.}
    \label{fig:tight_deadlines_example}
\end{figure}

Consider the tree $\mathcal{T}$ and the request sequence $\mathcal{J}$ depicted in Figure \ref{fig:tight_deadlines_example}. In this instance, $\mathcal{T}$ is a path of depth $D$ rooted at vertex $v_1$ ($r=v_1$), where each level $i$ contains an internal node $v_i$ with cost $w(v_i) = 1$. At each level, $K$ requests arrive at the node, where $K$ is a large constant. We index these requests first by their arrival node and second by their arrival order relative to other requests at the same node. Specifically, for every integer $k \in [1, K]$, request $j_{i,k}$ arrives at node $v_i$ at time $k - 1 + \epsilon$. Regarding deadlines, all requests have a deadline of $K$, with the exception of those arriving at node $v_1$, where each request $j_{1,k}$ has a deadline of $k$.

The deadlines are structured to compel premature aggregation. In this instance, Algorithm \ref{alg:multilevel_agg} transmits $D$ separate trees—one at each integer time step from $t = 1$ to $t = K$. At each such $t$, the request at the root $v_1$ becomes critical, providing the root with a budget of $w(v_1) = 1$. According to the algorithm’s logic, this budget propagates down the path: each internal node $v_i$ facilitates the purchase of itself and the subsequent node $v_{i+1}$. Consequently, at every integer time step, the algorithm pays for the entire path of depth $D$, resulting in a total cost of $D \cdot K$ over the full sequence.

In contrast, an optimal offline strategy would satisfy only the critical root requests at each integer time step $t \in [1, K-1]$, deferring all other requests until the final time step $t = K$. At $t = K$, a single transmission of the full path satisfies all remaining pending requests at the deeper levels of the tree. Consequently, the total cost of this optimal offline solution is $(K - 1) + D$.

As $K \to \infty$, the ratio of the algorithm's cost to the optimal cost behaves as follows:$$\frac{\text{cost}(\text{ALG})}{\text{cost}(\text{OPT})} = \frac{D \cdot K}{(K - 1) + D} \longrightarrow D.$$ The fundamental issue highlighted by this example is that all of the extra cost that Algorithm \ref{alg:multilevel_agg} incurs in each tree ends up being totally unnecessary, because it is possible to send every internal node except for $v_1= r$ only once, at the very end.

%% file: tight_deadlines_tikz.tex


\begin{tikzpicture}[
    node distance=1.2cm and 1.2cm,
    base_circle/.style={
        circle, 
        draw=black, 
        thick, 
        minimum size=6mm, 
        inner sep=0pt,
        font=\sffamily\bfseries
    },
    base_square/.style={
        rectangle, 
        draw=gray,        
        fill=none,        
        dashed,           
        rounded corners=3pt, 
        thick, 
        minimum width=5mm, 
        minimum height=5mm, 
        inner sep=0pt,
        font=\small\sffamily
    },
    tree_node/.style={base_circle, draw=black},
    edge/.style={-, thick} 
]

\definecolor{barred}{RGB}{220,80,80}
\definecolor{barblue}{RGB}{80,130,200}
\definecolor{bargreen}{RGB}{80,180,100}
\definecolor{barpurple}{RGB}{160,80,200}
\definecolor{baryellow}{RGB}{220,200,80}
\definecolor{barorange}{RGB}{230,140,50}

    \def\xr{11.5}
    \def\yr{9}

    \node[tree_node] (v1) at (\xr, \yr) {$v_1$};

    \node[base_square] (j11)  at (\xr+.6,\yr-1.5)  {$j_{1,1}$};
    \node[base_square] (j1D)  at (\xr+3,\yr-1.5)  {$j_{1,D}$};
    \draw[draw=gray, dashed, thick] (v1.south) -- (j11.north);
    \draw[draw=gray, dashed, thick] (v1.south) -- (j1D.north);

    \path (j11) -- (j1D) node [midway, gray] {$\cdots$};
    
    \node[tree_node] (v2) at (\xr-1, \yr-1.5) {$v_2$};
    \draw[edge] (v1) -- (v2);

    \node[base_square] (j21)  at (\xr-.4,\yr-3)  {$j_{2,1}$};
    \node[base_square] (j2D)  at (\xr+2,\yr-3)  {$j_{2,D}$};
    \draw[draw=gray, dashed, thick] (v2.south) -- (j21.north);
    \draw[draw=gray, dashed, thick] (v2.south) -- (j2D.north);
    
    \path (j21) -- (j2D) node [midway, gray] {$\cdots$};

    \node[tree_node] (vD) at (\xr-3, \yr-4) {$v_D$};
    \path (v2) -- (vD) node [midway, sloped] {$\dots$};
    
    \node[base_square] (jD1)  at (\xr-2.4,\yr-5.5)  {$j_{D,1}$};
    \node[base_square] (jDD)  at (\xr,\yr-5.5)  {$j_{D,D}$};
    \draw[draw=gray, dashed, thick] (vD.south) -- (jD1.north);
    \draw[draw=gray, dashed, thick] (vD.south) -- (jDD.north);

    \path (jD1) -- (jDD) node [midway, gray] {$\cdots$};
    
    \def\off{-4}
    
    \node[left] at (0,5-\off)  {\tiny $j_{1,1}$};
    \node[left] at (0,4.5-\off)  {\tiny $j_{2,1}$};
    \node[left] at (0,3.8-\off)  {\tiny $j_{D,1}$};
    
    \draw[thick] (0.1, 5-\off) -- (1, 5-\off); 
    \draw[thick] (0.1, 4.5-\off) -- (5, 4.5-\off); 
    \draw[thick] (0.1, 3.8-\off) -- (5, 3.8-\off);
    
    \node at (2.5, 4.25-\off) {$\vdots$}; 

    \node[left] at (0,3.3-\off)  {\tiny $j_{1,2}$};
    \node[left] at (0,2.8-\off)  {\tiny $j_{2,2}$};
    \node[left] at (0,2.1-\off)  {\tiny $j_{D,2}$};
    
    \draw[thick] (1.1, 3.3-\off) -- (2, 3.3-\off); 
    \draw[thick] (1.1, 2.8-\off) -- (5, 2.8-\off); 
    \draw[thick] (1.1, 2.1-\off) -- (5, 2.1-\off); 
    
    \node at (3, 2.55-\off) {$\vdots$}; 

    \node at (3.5, 1.6-\off) {$\vdots$}; 
    
    \node[left] at (0,.8-\off)  {\tiny $j_{1,K}$};
    \node[left] at (0,.3-\off)  {\tiny $j_{2,K}$};
    \node[left] at (0,-.4-\off)  {\tiny $j_{D,K}$};
    
    \draw[thick] (4.1, .8-\off) -- (5, .8-\off); 
    \draw[thick] (4.1, .3-\off) -- (5, .3-\off); 
    \draw[thick] (4.1, -.4-\off) -- (5, -.4-\off); 

    \node at (4.5, .05-\off) {$\vdots$}; 
    
    \def\off{-1}

    \draw[->, thick] (0,2-\off) -- (6,2-\off) node[right] {\textbf{Time}};
    \foreach \x in {0, 1, 2, 5} {
        
        \draw[thick] (\x, 1.9-\off) -- (\x, 2.1-\off) ;

        \ifnum\x<5
            \node[below] at (\x, 1.9-\off) {$\x$};
        \fi
        \ifnum\x=5
            \draw[thick] (\x-1, 1.9-\off) -- (\x-1, 2.1-\off) ;
            \node[below] at (\x-1, 1.9-\off) {$K-1$};
            \node[below] at (\x, 1.9-\off) {$K$};
        \fi
        
    }
         
\end{tikzpicture}

%% file: related.tex
\section{Related Work}
\label{sec:related}

\mlap~captures settings where one must balance the benefits of delaying tasks against the aggregated costs, whether in the form of hard deadlines or accumulated delay costs. It generalizes several well-studied models, including the TCP Acknowledgment Problem (for depth-1 trees)\cite{buchbinder2007online,dooly1998tcp, karlin2001dynamic}, the Joint Replenishment Problem (\jrp) (for depth-2 trees) \cite{bienkowski2014better, brito2012competitive, buchbinder2013online}, and multi-level message aggregation (for trees of arbitrary depth) \cite{bienkowski2020online, azar2019general}.

\mlapdelay~and its special case \mlapd~were originally formulated by \cite{bienkowski2016online}, who proposed $O(D^4 2^D)$- and $O(D^2 2^D)$-competitive algorithms respectively for these problems, where $D$ denotes the depth of the underlying tree. For \mlapd, \cite{buchbinder2017depth} developed an $O(D)$-competitive deterministic algorithm, significantly improving over the original exponential bounds. \cite{buchbinder2017depth} show that it suffices to assume that the tree is hierarchically separated (HST) and devise a clever combinatorial argument to show that for each tree transmitted by the algorithm, an optimal solution incurs a marginal cost of at least $\Omega(\frac{1}{D})$ fraction to serve the same set of requests.
More recently, \cite{mcmahan2021d} gave an improved algorithm with a competitive ratio of exactly $D$.

For \mlapdelay\ with general delay functions, the algorithm of \cite{buchbinder2017depth} was extended by \cite{talmonthesis} to obtain a $O(D^3)$-competitive algorithm. \cite{azar2019general} devised a new combinatorial framework based of preflow constructions for metric optimization problems with delay which gave an improved $O(D^2)$-competitive algorithm. The best known lower bound for \mlap\ is $4$ \cite{bienkowski2020online} which holds on paths of unbounded depth even for the special cases of deadlines and linear delay functions. 

Recent research has explored further generalizations of \mlap\ beyond the standard clairvoyant adversarial model. \cite{ezra2024universal} studied the non-clairvoyant setting, where delay functions are unknown prior to request arrival. They developed a modular framework that achieves tight $O(\sqrt{n})$-competitive algorithms for \mlap~and related subadditive joint replenishment problems. Simultaneously, \cite{mari2024online} introduced a stochastic version of \mlap~with Poisson arrivals and obtained a constant ratio of expectations, demonstrating that probabilistic information on request arrivals can significantly improve performance guarantees.  Recent research \cite{moseley2025putting,azar2026online,shmoys2026improved} has also explored  generalizations of JRP where requests have a target time when they wish to be served and incur both a holding cost if served too early and a backlog cost if served too late.

In the \textit{offline setting}, where all requests are known a priori, \mlap~(both variants) is known to be APX-hard even for depth-2 trees~\cite{arkin1989computational, bienkowski2015approximation, nonner20095}. For \mlapd, there exists a $2$-approximation by \cite{becchetti2009latency, bienkowski2020online}. For general \mlapdelay, \cite{pedrosa2018approximation} develop a 
$(2+\epsilon)$-approximation algorithm adapting techniques from the multi-stage assembly problem \cite{levi2004primal}.

The framework underlying \mlap\ extends to a wider class of online optimization problems with temporal constraints. \cite{azar2019general} studied the \textit{Online Service with Delay (OSD)} (introduced by \cite{azar2017online}), in which a single server moves through metric space to serve requests that accumulate delay. They also studied \textit{Online Facility Location with Deadlines/Delay} where facilities can be opened temporarily at a fixed cost. Pending requests (with deadline or incurring delays) may connect to open facilities to receive service. 
For both problems, they obtained $O(\log^2 n)$-competitive algorithms, where $n$ is the number of points in the metric space. Later $O(\log n)$-competitive deterministic algorithms were obtained for both these problems by \cite{touitou2023improved} and \cite{azar2020beyond} respectively.